\newtheorem{theorem}{Theorem}
\newtheorem{lemma}{Lemma}
\newtheorem{remark}{Remark}
\newtheorem{corollary}{Corollary} 
\newtheorem{proposition}{Proposition}
\newtheorem{definition}{Definition}
\newcommand{\argmin}{\mathop{\mathrm{argmin}}}
\preto\tabular{\setcounter{magicrownumbers}{0}}
\newcounter{magicrownumbers}
\def\E{\mathbb{E}}
\def\Var{\mathrm{var}}
\def\tr{\mathrm{tr}}
\def\R{\mathbb{R}}
\def\Ical{\mathcal{I}}
\def\cN{\mathcal{N}}
\def\cI{\mathcal{I}}
\def\E{\mathbb{E}}
\def\simiid{\stackrel{i.i.d.}{\sim}}
\newcommand{\indep}{\perp \!\!\! \perp}
\def\eqd{\stackrel{d}{=}}
\newcommand{\norm}[1]{\left\lVert#1\right\rVert}
\def\err{\mbox{\textnormal{Err}}}
\def\errhat{\widehat{\mbox{\textnormal{Err}}}}
\def\cor{\textnormal{cor}}
\def\xy{{XY}}
\def\inn{{(\textnormal{in})}}
\def\out{{(\textnormal{out})}}
\def\train{{(\textnormal{train})}}
\def\test{{(\textnormal{out})}}
\def\ncv{{(\textnormal{NCV})}}
\def\cv{{(\textnormal{CV})}}
\def\inn{{\textnormal{in}}}
\def\splitt{{(\textnormal{split})}}
\def\cp{{(\ensuremath{C_p})}}
\def\rcp{{(\ensuremath{RC_p})}}
\def\A{\mathcal{A}}
\def\mse{\textsc{MSE}}
\def\msehat{\widehat{\mse}}
\def\sehat{\widehat{\textsc{se}}}
\def\oob{{\textnormal{(OOB)}}}
\def\boot{{\textnormal{(.632)}}}
\newcommand{\var}{\textnormal{var}}
\newcommand{\bias}{\textnormal{bias}}
\newcommand{\cov}{\textnormal{cov}}
\title{Cross-validation: what does it estimate and how well  does it do it?}
\author{Stephen Bates\thanks{Depts. of Statistics and EECS, Univ. of California, Berkeley; stephenbates@berkeley.edu}, Trevor Hastie\thanks{Depts. of Statistics and Biomedical Data Science, Stanford Univ.; hastie@stanford.edu}, and Robert Tibshirani\thanks{Depts. of Biomedical Data Science and Statistics,
    Stanford Univ.; tibs@stanford.edu}}
\date{\today}
\begin{document}
\maketitle

\begin{abstract}
Cross-validation is a widely-used technique to estimate prediction error, but its behavior is complex and not fully understood. Ideally, one would like to think that cross-validation estimates the prediction error for the model at hand, fit to the training data.  We prove that this is not the case for the linear model fit by ordinary least squares; rather it estimates the average prediction error of models fit on other unseen training sets drawn from the same population.  We further show that this phenomenon occurs for most popular estimates of prediction error, including data splitting,  bootstrapping, and Mallow's $C_p$. 
Next, the standard  confidence intervals for prediction error derived from cross-validation may have coverage far below the desired level. Because each data point is used for both training and testing, there are correlations among the measured accuracies for each fold, and so the usual estimate of variance is too small.  We introduce a nested cross-validation scheme to estimate this variance more accurately, and we show empirically that this modification leads to intervals with approximately correct coverage in many examples where traditional cross-validation intervals fail.
% Lastly, our analysis also shows that when producing confidence intervals for prediction accuracy with simple data splitting, one should \emph{not} re-fit the model on the combined data, since this invalidates the confidence intervals.
\end{abstract}

\section{Introduction}

When deploying a predictive model, it is important to understand its prediction accuracy on future test points, so both good point estimates and accurate confidence intervals for prediction error are essential. Cross-validation (CV) is a widely-used approach for these two tasks, but in spite of its seeming simplicity, its operating properties remain opaque. Considering first estimation, it turns out be  challenging to precisely state the estimand corresponding to the cross-validation point estimate. In this work, we show that the the estimand of CV is not the accuracy of the model fit on the data at hand, but is instead the average accuracy over many hypothetical data sets. Specifically, we show that the CV estimate of error has larger mean squared error (MSE) when estimating the prediction error of the final model than when estimating the average prediction error of models across many unseen data sets for the special case of linear regression.
Turning to confidence intervals for prediction error, we show that na\"ive intervals based on CV can fail badly, giving coverage far below the nominal level; we provide a simple example soon in Section~\ref{subsec:harrell_model}.
The source of this behavior is the estimation of the variance used to compute the width of the interval: it  does not account for the correlation between the error estimates in different folds, which arises because each data point is used for both training and testing. As a result, the estimate of variance is too small and the intervals are too narrow.
To address this issue, we develop a modification of cross-validation, \emph{nested cross-validation} (NCV), that achieves coverage near the nominal level, even in challenging cases where the usual cross-validation intervals have miscoverage rates two to three times larger than the nominal rate. 

\subsection{A simple illustration}
\label{subsec:harrell_model}
As a motivating example where na\"ive cross-validation confidence intervals fail, we consider a sparse logistic regression model 
\begin{equation}
\label{eq:log_regression_def}
P(Y_i = 1 \mid X_i = x_i) = \frac{1}{1 + \exp\{-x_i^\top \theta\}} \qquad i = 1,\dots,n,
\end{equation}
with $n=90$ observations of $p=1000$ features, and a coefficient vector $\theta = c \cdot (1, 1, 1, 1, 0, 0, \dots)^\top \in \mathbb{R}^p$ with four nonzero entries of equal strength. The feature matrix $X \in \mathbb{R}^{n \times p}$ is comprised of independent and identically distributed (i.i.d.) standard normal variables, and we chose the signal strength $c$ so that the Bayes misclassification rate is 20\%. 
We estimate the parameters using $\ell_1$-penalized logistic regression with a fixed penalty level. In this case, na\"ive confidence intervals for prediction error are far too small: intervals with desired miscoverage of 10\% give 31\% miscoverage in our simulation. We visualize this in Figure~\ref{fig:var_inflation_viz}. The intervals need to be made larger by a factor of about $1.6$ to obtain coverage at the desired level in this case. 

\begin{figure}[t]
\begin{center}
\includegraphics[height = 2.5in]{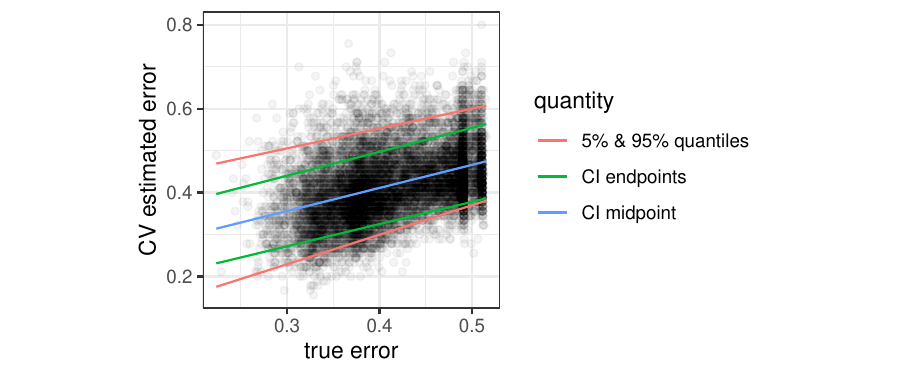}
\end{center}
\caption{A plot of the true error of a model versus the CV estimates for 1000 replicates of the model from Section~\ref{subsec:harrell_model}. The blue curve shows the average midpoint of the na\"ive CV confidence intervals. The green bands show the average 90\% confidence interval for prediction error given by na\"ive CV. The red curves show the  5\% and 95\% quantiles from a quantile regression fit. To achieve nominal coverage, the green curves should approximate the red curves, but they are too narrow in this case.}
\label{fig:var_inflation_viz}
\end{figure}

\subsection{Related work}
\label{subsec:related_work}

Cross-validation is used ubiquitously to estimate the prediction error of a model \citep{Allen74,geisser1975,stone1977}. 
%original CV introduction
The enduring popularity of CV is due to the fact that it is a conceptually simple improvement over a one-time train-test split \citep{Blum1999BeatingTH}. %better than the holdout set
CV is part of a broader landscape of resampling techniques to estimate prediction error, with bootstrap-based techniques as the most common alternative \citep{efron1983estimating, efron1986, Efron1997, efron1993introduction}. The other main category of prediction error estimates are based on analytic adjustments such as Mallow's $C_p$ \citep{mallows1973comments}, AIC \citep{akaike1974aic}, BIC \citep{schwarz1978bic}, and general \emph{covariance penalties} \citep{stein1981estimating,efron2004estimation}. The present work is primarily concerned with CV, but also addresses the properties of bootstrap, data splitting and covariance penalty methods. 

In spite of CV's apparent simplicity, the formal properties of this procedure are subtle; the seemingly basic question  ``what is cross-validation estimating?''  has engendered considerable debate. Although the predictive accuracy of the model fit on the observed training data may seem like  a natural estimand, it has been observed that the CV estimator tracks this quantity only weakly, suggesting that CV should instead be treated as an estimator of the average prediction error across training sets \citep{zhang1995, hastie01statisticallearning, yousef2020leisurely}. See also \citet{Rosset2020} and \citet{Wager2019} for a discussion about different potential estimands. In this work, we discuss this phenomenon in detail for the case of the linear model. Our main result uses a conditional independence argument to explain the aforementioned weak relationship between CV and the instance-specific error.

Turning to the question of inference, one important use of CV is to deliver confidence intervals for the prediction error (or, similarly, an estimate of the standard error) to accompany a point estimate. The second primary goal in this work is to provide such confidence intervals, which cannot be reliably created with na\"ive methods, as shown in our example in Figure~\ref{fig:var_inflation_viz}. A fundamental prior result shows that there is no unbiased estimator of the standard error of the CV point estimate based on one instance of CV \citep{bengio2004no}. As a result, to obtain standard error estimates, one would either need to modify the CV procedure or make additional assumptions. Pursuing the former, \citet{Dietterich1998} and \citet{nadeau2003} proposes sampling schemes where the data is split in half, and CV is carried out within each half separately. This yields an estimate of standard error, but it will typically be much too conservative since the internal CV model fits each use a samples size that is less than half of the full sample. A related proposal due to \citet{Austern} involves repeatedly performing leave-one-out CV with data sets of half of the original size, but this proposed estimator is not computationally feasible for most learning algorithms.

In a different direction, \citet{nadeau2003} and \citet{Markatou2005} propose alternative estimates of standard error, but these are based only on the sample size and higher moments of the errors and so do not address the source of the problem: a covariance term  that we describe in Section~\ref{sec:CV_cov_structure}. For bootstrap estimators, there are proposals to estimate the standard error of the (bootstrap) point estimates of prediction error with methods based on influence functions \citep{efron1983estimating, Efron1997}, and this approach has been partially extended to CV for the special case of the area under the curve measure (AUC) of performance \citep{yousef2019}. The CV proposal of \citet{Austern} similarly involves leave-one-out resampling, which can be interpreted as an empirical estimate of the influence functions.

Accompanying these algorithmic proposals, there is some theoretical understanding of the asymptotic behavior of CV. \citet{dudoit2005} proves a central limit theorem (CLT) for a cross validation estimator, although it does not come with an estimate of the standard error. \citet{ledell2015} provides a consistent estimator for the standard error in the special case of estimating the AUC, and \citet{Benkeser2020improved} conducts a higher-order asymptotic analyses for AUC and other metrics, yielding a more efficient estimator for accuracy with a consistent standard error estimate. Further theoretical results establish the asymptotic normality of the CV estimate in more general cases \citep{Austern, bayle2020}. The former considers the average prediction error across training sets (similar to our goal herein), and introduces an asymptotically valid estimate of the standard error; see Appendix~\ref{subapp:az_se_estimator} for an experiment with this estimator.
The latter estimates a different estimand: the average prediction error of the models fit on the subsamples, and introduces a valid estimate of standard error for this quantity. We explain this estimand and the proposed standard error estimator in more detail in Appendix~\ref{app:kfold_test_error}.
%This estimand behaves differently than those considered in this work, however, so the standard error estimate does not translate to our setting.
Both use arguments relying on notions of algorithmic stability \citep{pmlr-v28-kumar13a, kale2011, celisse2016stability}. 
At present, it is not clear how the large-sample regime considered in these works relates to the behavior we see in small samples such as in the experiment in Section~\ref{subsec:harrell_model}.In particular, algorithmic stability may not be satisfied in high-dimensional settings or with small sample sizes; see Appendix~\ref{app:kfold_test_error} and~\citet{bayle2020} for more discussion.

Lastly, we note that CV is often used to compare predictive models, such as when  selecting a model or a good value of a learning algorithm's hyperparameters \citep[e.g.,][]{stoica1986model, shao1993linear, zhang1993model, Dietterich1998, xu2001, yang2007consistency, arlot2010, Fong2020marginal, sivula2020uncertainty, riley2021penalization}. To this end, \cite{varma2006bias} and \citet{Tibshirani2009} suggest a bias-correction for the model selected by cross-validation, \citet{Lei2019} shows how to return a confidence set for the best model parameters, and \citet{yang2007consistency, Wager2019} show that for CV, comparing two models is a statistically easier task than estimating the prediction error, in some sense. While we expect that our proposed estimator would be of use for hyperparameter selection because it yields more accurate confidence intervals for prediction error,  we do not pursue this problem further in the present work.

\subsection{Our contribution}
This work has two main thrusts. First, we study the choice of estimand for CV, giving results for the special case of the linear model. We prove a finite-sample conditional independence result (Theorem~\ref{thm:ols_cv_independent_err}) with a supporting asymptotic result (Theorem~\ref{thm:err_and_err_x_higd}) that together show that CV does not estimate the error of the specific model fit on the observed training set, but is instead estimating the average error over many training sets (Corollary~\ref{cor:cor_errx_errxy_highd} and Corollary~\ref{cor:err_errxy_gap_highd}). We also show that this holds for the other common estimates of prediction error: data splitting (Section~\ref{subsec:data_splitting}), Mallow's $C_p$ (Section~\ref{subsec:cov_penalties}), and bootstrap (Appendix~\ref{app:bootstrap_theorems}).
Second, we introduce a modified cross-validation scheme to give accurate confidence intervals for prediction error.  We prove that our estimate for the MSE of the CV point estimate is unbiased (Theorem~\ref{thm:ncv_unbiased}). Moreover, we validate our method with extensive numerical experiments, confirming that the coverage is consistently better than that of standard cross-validation (Section~\ref{sec:experiments}).

\section{Setting and notation}

We consider the supervised learning setting where we have features $X = (X_1,\dots,X_n) \in \mathcal{X}^n$ and response $Y = (Y_1,\dots,Y_n) \in \mathcal{Y}^n$, and we assume that the data points $(X_i, Y_i)$ for $i=1,\dots,n$ are i.i.d. from some distribution $P$.  We wish to understand how well fitted models  generalize to unseen data points, which we formalize with a loss function 
$$
\ell(\hat{y}, y): \mathcal{Y} \times \mathcal{Y} \to \R_{\ge 0}
$$ 
such that $\ell(y, y) = 0$ for all $y$. For example, $\ell$ could be squared error loss, misclassification error, or deviance (cross-entropy). Now consider a class of models parameterized by $\theta$. Let $\hat{f}(x, \theta)$ be the function that predicts $y$ from $x \in \R^p$ using the model with parameters $\theta$, which takes values in some space $\Theta$. Let $\A$ be a model-fitting algorithm that takes any number of data points and returns a parameter vector $\hat{\theta} \in \Theta$.
Let $\hat{\theta} = \A(X, Y)$ be the fitted value of the parameter based on the observed data $X$ and $Y$. We are interested in the {\em out-of-sample error} with this choice of parameters: 
$$
\err_{\xy} := \E \left[ \ell(\hat{f}(X_{n+1}, \hat{\theta}), Y_{n+1}) \mid (X,Y) \right],
$$
where $(X_{n+1}, Y_{n+1}) \sim P$ is an independent test point from the same distribution. Notice $\err_\xy$ is a random quantity, depending on the training data. We denote the expectation of this quantity across possible training sets as
$$
\err := \E\left[\err_\xy\right].
$$
We will discuss the relationship between these two quantities further in Section~\ref{sec:estimand_of_cv}. We note that out-of-sample error is materially different from {\em in-sample-error}
which is the focus of methods like the $C_p$ and AIC statistics, and covariance penalties. These are discussed in Section \ref{subsec:cov_penalties}.

In cross-validation, we partition the observations $\Ical = \{1,\dots,n\}$ into $K$ disjoint subsets (\emph{folds}) $\Ical_1$,\dots, $\Ical_{K}$ of size $m = n / K$ at random. Throughout this work, we will assume $K$ divides $n$ for convenience, and we will choose $K=10$ in all of our numerical results.
Consider the first fold, and let $\hat{\theta}^{(-1)} = \A((X_j, Y_j)_{j \in \Ical \setminus \Ical_1}$ be the model fit to only those points that are not in fold one. Then, let $e_i = \ell(\hat{f}(x_i, \hat{\theta}^{(-1)}), y_i)$ for each $i \in \Ical_1$. The errors $e_i$ for points in other folds are defined analogously. 
We let
\begin{equation}
\errhat^\cv := \bar{e} = \frac{1}{n} \sum_{i=1}^n e_i
\label{eq:cv_errhat}
\end{equation}
be the average error, which is the usual CV estimate of prediction error. If one desires a confidence interval for the prediction error, a straightforward approach is to compute the empirical standard deviation of the $e_i$ divided by $\sqrt{n}$ to get an estimate of the standard error:
\begin{equation*}
\sehat := \frac{1}{\sqrt{n}} \cdot \sqrt{\frac{1}{(n-1)} \sum_{i=1}^n (e_i - \bar{e})^2}.
\end{equation*}
From here, we can create a confidence interval as
\begin{equation*}
    (\bar{e} - z_{1-\alpha/2} \cdot \sehat, \ \bar{e} + z_{1-\alpha/2} \cdot \sehat),
\end{equation*}
where $z_{1-\alpha/2}$ is the $1-\alpha/2$ quantile of the standard normal distribution. We  call these the \emph{na\"ive cross-validation intervals} and they  serve as our baseline approach. Importantly, we find that  these na\"ive CV intervals are on average too small because the true standard deviation of $\bar{e}$ is larger than the na\"ive estimate $\sehat$ would suggest, so a better estimate of the standard error is needed.

As a final piece of notation for our asymptotic statements, for a reference sequence $a_1,a_2,\dots$, $O(a_m)$ denotes a sequence $b_1,b_2,\dots$ such that the sequence $b_1 / a_1, b_2 / a_2, \dots$ has finite limit superior; $\Omega(a_m)$ denotes a sequence $b_1,b_2,\dots$ such that the sequence $b_1 / a_1, b_2 / a_2, \dots$ has positive limit inferior; and $\Theta(a_m)$ denotes a sequence that satisfies both properties. Lastly, $o(a_m)$ denotes a sequence $b_1,b_2,\dots$ such that the sequence $b_1 / a_1, b_2 / a_2, \dots$ converges to zero.

\section{What  prediction error are we estimating?}
\label{sec:estimand_of_cv}
We next discuss targets of inference when assessing prediction accuracy. We discuss both $\err$ and $\err_{XY}$, and also introduce an intermediate quantity $\err_X$ that explains the connection between these two. While cross-validation is our focus, our results hold identically for other estimates of prediction error: covariance penalties (Section~\ref{subsec:cov_penalties}), data splitting (Section~\ref{subsec:data_splitting}), and bootstrap (Appendix~\ref{app:bootstrap_theorems}).

\subsection{$\err_X$: a different target of inference}
\label{subsec:errx_results}

\begin{figure}
    \centering
    \includegraphics[page = 2, width = 4in, trim = 0 32cm 25cm 0, clip]{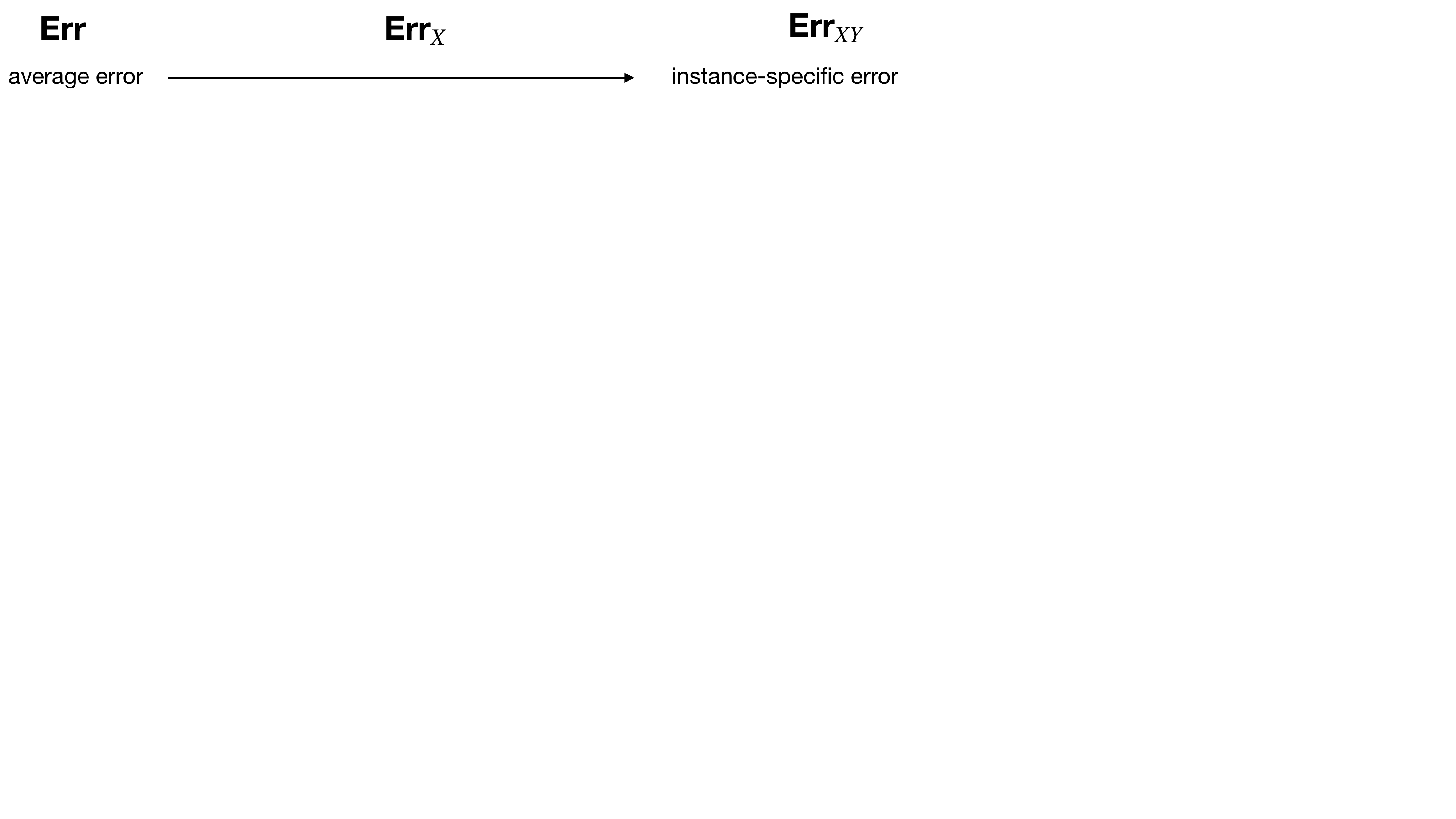}
    \caption{Possible targets of inference for cross-validation. Here, $(X,Y)$ is the training data and $\err_\xy$ is the average error of the model fit on $(X,Y)$ on a test data set of infinite size. From left to right, the random variables above are a constant, a function of $X$ only, and a function of $(X,Y)$.}
    \label{fig:err_tau_viz}
\end{figure}

The two most natural estimands of interest to the analyst are $\err_\xy$, the error of the model that was fit on our actual training set, and $\err$, the average error of the fitting algorithm run on same-sized datasets drawn from the underlying distribution $P$. The former quantity is of the most interest to a practitioner deploying a specific model, whereas the latter may be of interest to a researcher comparing different fitting algorithms. While it may initially appear that the quantity $\err_\xy$ is easier to estimate---since it concerns the model at hand---it has been observed that the cross-validation estimate provides little information about $\err_\xy$ \citep{zhang1995, hastie01statisticallearning, yousef2020leisurely}, a phenomenon sometimes called the \emph{weak correlation} issue.

We now prove that CV has lower MSE for estimating $\err$ than it does for $\err_{XY}$, for the special case of the linear model.  In this sense, CV should be viewed as an estimate of $\err$ rather than of $\err_{XY}$.
In order to state this formally, for this section only, assume the homoskedastic linear model holds:
\begin{equation}
\label{eq:linear_model_assumption}
y_i = x_i^\top \theta + \epsilon_i \quad \text{where} \quad \epsilon_i \simiid \cN(0, \sigma^2) \quad i = 1,\dots,n,
\end{equation}
with $\epsilon = (\epsilon_1,\dots,\epsilon_n)$ independent of $X$.
In this setting, a key quantity in our analysis is
\begin{equation*}
\err_X := \E[\err_\xy \mid X],
\end{equation*}
which falls between $\err$ and $\err_{XY}$; see Figure~\ref{fig:err_tau_viz} for a visualization.
This quantity is also considered by \citet{hastie2019surprises} in a high-dimensional regression setting, but to the best of our knowledge has not been considered in the literature on estimation of prediction error.

While our current focus is on cross-validation, the conclusions hold for a broad class of estimates of prediction error. In particular, we consider estimators of prediction error that satisfy the following property:
\begin{definition}[Linearly invariant estimator]
We say that an estimator of prediction error\\
$\errhat((X_1, Y_1), \dots, (X_n, Y_n), U)$ is \emph{linearly invariant} if for all $x_i, y_i, u$ we have
\begin{equation}
    \errhat\big((x_1, y_1), \dots, (x_n, y_n), u\big) = \errhat\big((x_1, y_1 + x_1^\top \kappa), \dots, (x_n, y_n + x_n^\top \kappa), u\big).
\end{equation}
Here $\kappa$ is any $p$-vector and the random variable $U$ is included to allow for randomized procedures like cross-validation, and without loss of generality it is taken to be $\textnormal{unif}[0,1]$ and independent of $(X,Y)$.
\end{definition}
\noindent With ordinary least square (OLS) fitting, cross-validation satisfies this property:
\begin{lemma}
When using OLS as the fitting algorithm and squared-error loss, the cross-validation estimate of prediction error, $\errhat^\cv$, is linearly invariant.
\label{lem:ols_equivariance}
\end{lemma}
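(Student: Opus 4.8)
The plan is to reduce the claim to the level of individual held-out errors: I will show that the reparametrization $y_i \mapsto y_i + x_i^\top \kappa$ leaves each $e_i$ unchanged, so that their average $\errhat^\cv = \frac{1}{n}\sum_{i=1}^n e_i$ is unchanged as well. A preliminary observation makes this a fold-by-fold argument: since the randomization variable $u$ is held fixed on both sides of the defining identity in the definition, the random fold partition $\Ical_1,\dots,\Ical_K$ is identical for the shifted and unshifted data. Hence it suffices to treat one fold $k$ and its held-out points $i \in \Ical_k$.

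The key step is the equivariance of OLS under this shift. Fix a fold $k$, write $S = \Ical \setminus \Ical_k$ for its training indices, and recall that the fitted coefficient solves $\hat\theta^{(-k)} = \argmin_{\beta} \sum_{j \in S} (y_j - x_j^\top \beta)^2$. Replacing each $y_j$ by $y_j + x_j^\top \kappa$ and substituting $\beta = \beta' + \kappa$ gives $\sum_{j \in S}(y_j + x_j^\top \kappa - x_j^\top \beta)^2 = \sum_{j \in S}(y_j - x_j^\top \beta')^2$, so the minimizer of the shifted objective is exactly $\hat\theta^{(-k)} + \kappa$. That is, the fitted coefficient transforms as $\hat\theta^{(-k)} \mapsto \hat\theta^{(-k)} + \kappa$.

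Substituting this into the squared-error loss at a test index $i \in \Ical_k$ produces the cancellation that drives the result: the new prediction is $x_i^\top(\hat\theta^{(-k)} + \kappa)$ and the new response is $y_i + x_i^\top \kappa$, so the new error is $(x_i^\top \hat\theta^{(-k)} + x_i^\top \kappa - y_i - x_i^\top \kappa)^2 = (x_i^\top \hat\theta^{(-k)} - y_i)^2 = e_i$. The shift applied to the coefficient and to the response cancels exactly, pointwise. Summing over all folds and dividing by $n$ then leaves $\errhat^\cv$ unchanged, which is precisely the linear-invariance property to be proved.

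I expect the only point requiring care to be the well-definedness of the OLS coefficient. The substitution argument yields an exact shift by $\kappa$ precisely when the minimizer is unique, i.e.\ when each training design $(x_j)_{j \in S}$ has full column rank; this is the regime in which the lemma should be read, and there the cancellation above is immediate. Rank-deficient training folds are the one case to watch: under a minimum-norm convention the coefficient shifts only by the projection of $\kappa$ onto the training row space, so the held-out predictions need not shift by $x_i^\top \kappa$ and the cancellation—hence linear invariance—can genuinely fail. Confirming that the intended setting excludes this (for instance, that $p$ is smaller than each fold's training size) is the step I would pin down first before presenting the clean argument above.
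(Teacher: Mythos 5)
Your argument is correct and is essentially the paper's own proof: both establish that the leave-fold-out OLS coefficient shifts by exactly $\kappa$ via the same substitution in the least-squares objective, so each held-out residual---and hence their average---is unchanged. Your added caveat about rank-deficient training designs is a reasonable observation that the paper's proof silently assumes away, but it does not change the substance of the argument.
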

\noindent Note that linear invariance is a deterministic property of an estimator and does not rely on any distributional assumptions.

\begin{figure}[t]
\centering
\begin{subfigure}[b]{.3\textwidth}
\centering
\includegraphics[height = 2in]{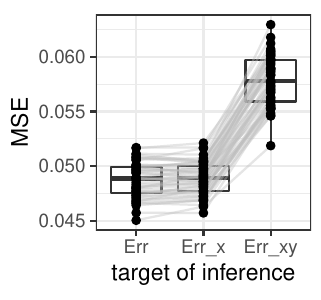}
\end{subfigure}
\hspace{.2cm}
\begin{subfigure}[b]{.3\textwidth}
\centering
\includegraphics[height = 2in]{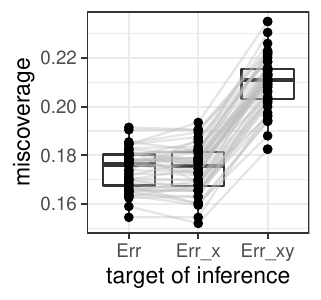}
\end{subfigure}
\hspace{.2cm}
\begin{subfigure}[b]{.3\textwidth}
\centering
\includegraphics[height = 2in]{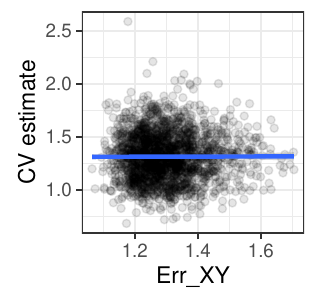}
\end{subfigure}
\hspace{.2cm}
\caption{Left: mean squared error of the CV point estimate of prediction error relative to three different estimands: $\err$, $\err_X$, and $\err_\xy$. Center: coverage of $\err$, $\err_X$, and $\err_\xy$ by the na\"ive cross-validation intervals in a homoskedastic Gaussian linear model. The nominal miscoverage rate is 10\%. Each pair of points connected by a line represents 2000 replicates with the same feature matrix $X$. Right: $2000$ replicates with the same feature matrix and the line of best fit (blue).}
\label{fig:ols_target}
\end{figure}

Recall from classical linear regression theory that when using ordinary least squares (OLS), the estimated coefficient vector is independent of the residual sum of squares. This implies that the sum of squared residuals is independent of the true predictive error. It turns out that even further, the CV estimate of error (and all linearly invariant estimates of error) is independent of the true error, conditional on the feature matrix $X$.
\begin{theorem}
Assume the homoskedastic Gaussian linear model \eqref{eq:linear_model_assumption} holds and that we use squared-error loss. Let $\errhat$ be a linearly invariant estimate of prediction error (such as $\errhat^\cv$ using OLS as the fitting algorithm). Then,
\begin{equation}
\errhat \indep \err_\xy \mid X.
\label{eq:ols_indep_statement}
\end{equation}
\label{thm:ols_cv_independent_err}
\end{theorem}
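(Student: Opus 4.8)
The plan is to reduce the statement to the classical fact that, conditional on $X$, the fitted-values component and the residual component of a Gaussian noise vector are independent. The two things to establish are (i) that $\err_\xy$ is a deterministic function of the fitted component alone, and (ii) that linear invariance forces $\errhat$ to be a function of the residual component (together with the independent randomization $U$) alone. Throughout, write $H = X(X^\top X)^{-1}X^\top$ for the hat matrix, assuming $X^\top X$ is invertible.

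First I would compute $\err_\xy$ explicitly. With OLS, $\hat\theta = (X^\top X)^{-1}X^\top Y$, so under the model \eqref{eq:linear_model_assumption} we have $\hat\theta - \theta = (X^\top X)^{-1}X^\top \epsilon$. Writing $\Sigma = \E[X_{n+1}X_{n+1}^\top]$ (a fixed population matrix) and using that $\epsilon_{n+1}$ is independent of the training data with mean zero and variance $\sigma^2$, squared-error loss gives
\[
\err_\xy = (\hat\theta - \theta)^\top \Sigma (\hat\theta - \theta) + \sigma^2.
\]
Hence, conditional on $X$, the quantity $\err_\xy$ is a deterministic function of $\hat\theta - \theta$, equivalently of the fitted vector $H\epsilon = X(\hat\theta - \theta)$, since one recovers $\hat\theta - \theta = (X^\top X)^{-1}X^\top (H\epsilon)$.

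Second, I would use linear invariance to show that $\errhat$ depends on $Y$ only through the residual vector $(I-H)Y$. Given $X$, the vector $HY$ lies in $\col(X)$, so choosing the $p$-vector $\kappa$ with $X\kappa = -HY$ and applying the linear-invariance identity yields $\errhat(X, Y, U) = \errhat(X, (I-H)Y, U)$. Since $(I-H)X = 0$, under the model $(I-H)Y = (I-H)\epsilon$, so conditional on $X$ the estimate $\errhat$ is a function of the pair $((I-H)\epsilon,\, U)$ alone. Finally I would invoke Gaussianity: conditional on $X$, $\epsilon \sim \cN(0, \sigma^2 I)$, and because $H$ and $I-H$ are orthogonal projections with $H(I-H) = 0$, the jointly Gaussian vectors $H\epsilon$ and $(I-H)\epsilon$ are uncorrelated and therefore independent. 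As $U$ is independent of $(X,Y)$, it is independent of $\epsilon$ given $X$, so $H\epsilon \indep ((I-H)\epsilon, U)$ conditional on $X$. Since $\err_\xy$ is a function of $H\epsilon$ and $\errhat$ a function of $((I-H)\epsilon, U)$, the conditional independence \eqref{eq:ols_indep_statement} follows.

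I expect the main obstacle to be step two: carefully arguing that the deterministic linear-invariance identity, quantified over all $\kappa$, collapses the dependence of $\errhat$ on $Y$ down to dependence on $(I-H)Y$ only, while tracking measurability so that the randomization $U$ does not interfere with the conditioning. The Gaussian orthogonality used in the last step is the classical Fisher-type independence of fitted values and residuals, and is precisely the place where the argument relies on the homoskedastic Gaussian linear model rather than on any property of the particular estimator $\errhat$.
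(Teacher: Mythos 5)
Your proof is correct and follows essentially the same route as the paper's: the paper likewise observes that $\err_\xy$ is a function of $\hat{\theta}$ while linear invariance forces $\errhat$ to depend on $Y$ only through the residuals $(I-H)Y$, and then invokes the classical conditional independence of $\hat{\theta}$ and the residuals given $X$. Your write-up simply fills in the details the paper leaves implicit (the explicit formula for $\err_\xy$, the choice $X\kappa = -HY$, and the handling of $U$), all of which are sound.
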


The proof of this theorem rests primarily on the fact that the OLS residuals are independent of the fitted coefficient vector in the linear model, together with the observation that linearly invariant estimators are a function \emph{only} of the residuals of an OLS model fit. Due to its simplicity, we give the proof explicitly here; all other proofs are given in Appendix~\ref{app:proofs}.
\begin{proof}[Proof of Theorem~\ref{thm:ols_cv_independent_err}]
The true predictive error ($\err_\xy$) is a function only of $\hat{\theta}$, the OLS estimate of $\theta$ based on the full sample $(x_1, y_1), \dots, (x_n, y_n)$. On the other hand, any linearly invariant $\errhat$ is a function only of the residuals $Y - X \hat{\theta} = (I - X (X^\top X)^{-1} X^\top) Y$, by the invariance property (see Lemma~\ref{lem:invariant_ols_residuals}). Since
\begin{equation*}
\hat{\theta} \indep (Y - X \hat{\theta}) \mid X,
\end{equation*}
from classical linear model results, the proof is complete.
\end{proof}

As a result, any linearly invariant estimator (such as cross-validation) has lower MSE as an estimate of $\err_X$ than as an estimate of $\err_{XY}$:
\begin{corollary}
Under the conditions of Theorem~\ref{thm:ols_cv_independent_err},  
\begin{equation*}
    \E\left[\left(\errhat - \err_{XY}\right)^2\right] = \E\left[\left(\errhat - \err_{X}\right)^2\right] + \underbrace{\E\left[\var(\err_{XY}\mid X)\right]}_{\ge 0}.
\end{equation*}
\label{cor:errx_lower_mse}
\end{corollary}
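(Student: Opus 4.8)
The plan is to perform a bias--variance-style decomposition of the squared error $(\errhat - \err_{XY})^2$, pivoting on the intermediate quantity $\err_X$, and then to invoke the conditional independence from Theorem~\ref{thm:ols_cv_independent_err} to annihilate the resulting cross term. Everything is carried out by first conditioning on $X$, where $\err_X = \E[\err_{XY}\mid X]$ plays the role of a known constant.

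First I would write the algebraic identity
$$\errhat - \err_{XY} = (\errhat - \err_X) + (\err_X - \err_{XY}),$$
and expand the square to obtain
$$(\errhat - \err_{XY})^2 = (\errhat - \err_X)^2 + (\err_X - \err_{XY})^2 + 2(\errhat - \err_X)(\err_X - \err_{XY}).$$
Taking expectations, the claimed decomposition will follow once I establish two facts: the expectation of the cross term vanishes, and $\E[(\err_X - \err_{XY})^2] = \E[\var(\err_{XY}\mid X)]$.

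For the cross term I would condition on $X$. Since $\err_X$ is a deterministic function of $X$, the factors $\errhat - \err_X$ and $\err_X - \err_{XY}$ are, given $X$, merely (shifted) functions of $\errhat$ and $\err_{XY}$ respectively. By Theorem~\ref{thm:ols_cv_independent_err} we have $\errhat \indep \err_{XY}\mid X$, so the conditional expectation factors:
$$\E\big[(\errhat - \err_X)(\err_X - \err_{XY})\mid X\big] = \E[\errhat - \err_X\mid X]\cdot \E[\err_X - \err_{XY}\mid X].$$
The second factor equals $\err_X - \E[\err_{XY}\mid X] = 0$ by the very definition $\err_X := \E[\err_{XY}\mid X]$, so the cross term vanishes conditionally, and hence unconditionally by the tower property. (Strictly, only conditional uncorrelatedness is needed here, which the full conditional independence comfortably supplies.)

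The remaining term is then immediate: because $\err_X$ is exactly the conditional mean of $\err_{XY}$ given $X$, we have $\E[(\err_{XY} - \err_X)^2\mid X] = \var(\err_{XY}\mid X)$, and an outer expectation yields the nonnegative quantity $\E[\var(\err_{XY}\mid X)]$. I do not anticipate a genuine obstacle: the only substantive ingredient is the vanishing of the cross term, which is precisely where Theorem~\ref{thm:ols_cv_independent_err} does all the work, and the rest is routine conditioning together with the tower property.
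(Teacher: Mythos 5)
Your proof is correct and follows essentially the same route as the paper's: add and subtract $\err_X$, expand the square, kill the cross term by conditioning on $X$ and invoking Theorem~\ref{thm:ols_cv_independent_err} together with $\E[\err_{XY}\mid X]=\err_X$, and identify the remaining term as $\E[\var(\err_{XY}\mid X)]$. If anything, your write-up is slightly more careful than the paper's, which leaves implicit the observation that the factor $\E[\err_X-\err_{XY}\mid X]$ vanishes by definition of $\err_X$.
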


We demonstrate this in an experiment in a simple linear model with $n=100$ observations and $p=20$ features, where the features are i.i.d. standard normal variables; see Figure~\ref{fig:ols_target}. 
As predicted by Corollary~\ref{cor:errx_lower_mse}, we see that the CV point estimate has lower MSE for $\err_X$ than for $\err_{XY}$. Similarly, the na\"ive CV intervals cover $\err_X$ more often than they cover $\err_\xy$. 

\begin{remark} Theorem~\ref{thm:ols_cv_independent_err} can be restated in an evocative way. Suppose we have two data sets $(X, Y)$ and $(X, Y')$ that share the same feature matrix $X$. Let $\err_\xy$ and $\err_{\xy'}$ be the true errors of the model fit with these two data sets, respectively. Next, suppose we perform cross-validation on $(X,Y)$ to get an estimate $\widehat{\err}_\xy$ and do the same on $(X, Y')$ to get an estimate $\widehat{\err}_{\xy'}$. Then, \eqref{eq:ols_indep_statement} is equivalent to
\begin{equation*}
    \big(\err_\xy, \widehat{\err}_\xy \big) \quad \eqd \quad \big(\err_\xy, \widehat{\err}_{\xy'} \big).
\end{equation*}
This means that for the purpose of estimating $\err_\xy$, we have no reason to prefer using the cross-validation estimate with $(X, Y)$ to using the cross-validation estimate with a different data set $(X, Y')$, even though we wish to estimate the error of the model fit on $(X, Y)$. 
\end{remark}

\subsection{Relationship with average error}
\label{subsec:err_errx_results}

The results of the previous section suggest that $\err_{X}$ is a more natural target of inference than $\err_{XY}$. Next, we examine the relationship between $\err$ and $\err_X$, showing that $\err_X$ is close to $\err$, in that the variance of $\err_X$ (which has mean $\err$) is small compared with the variance of $\err_{XY}$ (which also has mean $\err$). Combined with the results of the previous section, this gives a formal statement that cross-validation is a better estimator for $\err$ than for $\err_{XY}$. 

To make this precise, consider the conditional variance decomposition of the variance of $\err_{XY}$,
\begin{equation}
    \var(\err_{XY})
    = \underbrace{\E_{X}\left[\var(\err_{XY} \mid X)\right]}_{\text{var due to $Y \mid X$}} + \underbrace{\var(\err_X)}_{\text{var due to $X$}}.
\label{eq:err_xy_var_decomp}
\end{equation}
To quantify the relative contribution of the two terms in the right-hand side of~\eqref{eq:err_xy_var_decomp}, we will use a \emph{proportional asymptotic limit}, where
\begin{equation}
    n > p, \qquad n,p \to \infty, \qquad n / p \to \lambda > 1.
\label{eq:highd_limit_def}
\end{equation}
We use the proportional asymptotic limit rather than traditional $p$ fixed, $n \to \infty$ asymptotics, because in the latter asymptotic regime, the difference between $\err$, $\err_X$, and $\err_{XY}$ is asymptotic order lower than $1/\sqrt{n}$, so one always estimates these three targets with equal precision, and the analysis is less informative. See Appendix~\ref{app:lowd_asymptotics} for a complementary analysis in the traditional $p$ fixed, $n \to \infty$ asymptotic regime and \citet{yang2007consistency, Wager2019} for a related discussion. By contrast, in the proportional asymptotic limit we will see that $\errhat^\cv$ is closer to $\err$ and $\err_X$ than to $\err_{XY}$.

\begin{theorem}
Suppose the homoskedastic Gaussian linear model in \eqref{eq:linear_model_assumption} holds and that we use squared-error loss. In addition, assume that feature vectors $X_i \sim \mathcal{N}(0, \Sigma_p)$ for any full-rank $\Sigma_p$. Then, in the proportional asymptotic limit in \eqref{eq:highd_limit_def}, we have
\begin{equation*}
    \E_{X}\left[\var(\err_{XY} \mid X)\right] = \Theta(1 / n)
\end{equation*}
and
\begin{equation*}
    \var(\err_X) = \E(\err_X - \err)^2 = \Theta(1 / n^2),
\end{equation*}
as $n,p \to \infty$.
\label{thm:err_and_err_x_higd}
\end{theorem}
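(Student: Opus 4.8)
The plan is to reduce everything to two spectral functionals of a white Wishart matrix and then read off their orders. Writing $\hat{\theta} = (X^\top X)^{-1}X^\top Y$ and $\delta := \hat{\theta} - \theta = (X^\top X)^{-1}X^\top\epsilon$, an independent test point satisfies $X_{n+1}^\top\hat{\theta} - Y_{n+1} = X_{n+1}^\top\delta - \epsilon_{n+1}$, and averaging over $(X_{n+1},\epsilon_{n+1})$ (both independent of the training data) gives $\err_\xy = \sigma^2 + \delta^\top\Sigma_p\delta$. Conditional on $X$ we have $\delta\sim\cN(0,\sigma^2(X^\top X)^{-1})$, so the mean and variance formulas for Gaussian quadratic forms yield
\begin{equation*}
\err_X = \sigma^2 + \sigma^2\,\tr\!\big(\Sigma_p (X^\top X)^{-1}\big), \qquad \var(\err_\xy\mid X) = 2\sigma^4\,\tr\!\big((\Sigma_p (X^\top X)^{-1})^2\big).
\end{equation*}

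Next I would remove the dependence on $\Sigma_p$. Whitening via $X_i = \Sigma_p^{1/2}Z_i$ with $Z_i\simiid\cN(0,I_p)$ gives $X^\top X = \Sigma_p^{1/2}(Z^\top Z)\Sigma_p^{1/2}$, so $\Sigma_p(X^\top X)^{-1} = \Sigma_p^{1/2}(Z^\top Z)^{-1}\Sigma_p^{-1/2}$ is \emph{similar} to $(Z^\top Z)^{-1}$. Cyclic invariance of the trace then collapses both expressions: $\tr(\Sigma_p(X^\top X)^{-1}) = \tr((Z^\top Z)^{-1})$ and $\tr((\Sigma_p(X^\top X)^{-1})^2) = \tr((Z^\top Z)^{-2})$. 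Writing $W := Z^\top Z$, which is white Wishart with $n$ degrees of freedom, the two targets become $\E_X[\var(\err_\xy\mid X)] = 2\sigma^4\,\E[\tr(W^{-2})]$ and $\var(\err_X) = \sigma^4\,\var(\tr(W^{-1}))$, with $\Sigma_p$ playing no role whatsoever.

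For the first functional I would use that the eigenvalues $\mu_i$ of $W/n$ obey the Marchenko--Pastur law with ratio $p/n\to 1/\lambda < 1$, whose support is bounded away from zero precisely because $\lambda>1$. Since $\tr(W^{-2}) = n^{-2}\sum_i\mu_i^{-2}$ and the bulk contributes $\Theta(p)=\Theta(n)$ to the sum, this gives $\E_X[\var(\err_\xy\mid X)] = \Theta(1/n)$; the upper bound, which requires controlling the expected contribution of the smallest eigenvalue, is cleanest via the exact inverse-Wishart moment $\E[\tr(W^{-2})]$, a known rational function of $(n,p)$ of leading order $pn/(n-p)^3 = \Theta(1/n)$, sidestepping any edge analysis. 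For the second functional, $\tr(W^{-1}) = n^{-1}\sum_i\mu_i^{-1}$ is $n^{-1}$ times a linear spectral statistic of $W/n$ for the function $x\mapsto 1/x$, which is analytic on a neighborhood of the support. The central limit theorem for linear spectral statistics of sample covariance matrices gives that $\sum_i\mu_i^{-1}$ has fluctuations of order $\Theta(1)$, hence $\var(\tr(W^{-1})) = n^{-2}\cdot\Theta(1) = \Theta(1/n^2)$ and therefore $\var(\err_X) = \Theta(1/n^2)$.

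The main obstacle is the lower bound $\var(\tr(W^{-1})) = \Omega(1/n^2)$, i.e.\ that the fluctuations of the linear spectral statistic $\sum_i\mu_i^{-1}$ do not vanish. Because $\tr(W^{-1})$ concentrates at a deterministic $\Theta(1)$ limit, both its second moment and its squared mean are $\Theta(1)$, and the variance appears only as the difference of these two leading-order terms; one must show that this cancellation leaves a genuinely nonzero residual of order $1/n^2$ rather than something smaller. This is exactly the content of the nondegeneracy (strict positivity) of the limiting variance in the CLT for linear spectral statistics, which holds because $1/x$ is non-constant. This is the one step that cannot be reduced to a one-line moment bound: either that positivity, or an exact evaluation of $\E[(\tr W^{-1})^2]$ against $(\E\tr W^{-1})^2 = (p/(n-p-1))^2$, must be invoked. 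Every other order follows from the cyclic-trace reduction to the isotropic case together with the boundedness of the Marchenko--Pastur support away from zero.
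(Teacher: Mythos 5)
Your reduction is correct and the orders all check out: the quadratic-form identities $\err_X = \sigma^2 + \sigma^2\tr(\Sigma_p(X\T X)^{-1})$ and $\var(\err_\xy\mid X) = 2\sigma^4\tr((\Sigma_p(X\T X)^{-1})^2)$, the similarity argument collapsing $\Sigma_p$, and the identification of the two targets as $\E[\tr(W^{-2})]$ and $\var(\tr(W^{-1}))$ for a white Wishart $W$ are exactly the skeleton of the paper's proof (the paper works with $\hat\Sigma = X\T X/n$ and the eigenvalues of $\Sigma\hat\Sigma^{-1}$, which is the same thing up to powers of $n$). Where you diverge is in how the orders of these two trace functionals are pinned down: the paper uses exact finite-sample inverse-Wishart moments throughout (Corollary 3.1 of von Rosen, giving $\E[\tr((\Sigma\hat\Sigma^{-1})^2)]$ and the variance/covariance of the diagonal entries of $\Sigma\hat\Sigma^{-1}$ as explicit rational functions of $(n,p)$, whose two positive contributions are each $\Theta(1)$ so no cancellation issue arises), whereas you lead with Marchenko--Pastur support bounds and the CLT for linear spectral statistics, keeping the exact moments as a fallback. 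You correctly identify the one genuinely delicate point --- the lower bound $\var(\tr(W^{-1})) = \Omega(1/n^2)$ --- and your two proposed resolutions are both viable; the paper takes the second one (exact evaluation of the second moments), which has the advantage of being non-asymptotic and of avoiding the uniform-integrability step needed to pass from distributional convergence of the linear spectral statistic to convergence of its variance. Your route buys potential universality beyond Gaussian designs, at the cost of importing heavier random-matrix machinery and having to verify strict positivity of the limiting LSS variance for $f(x)=1/x$; for the Gaussian setting of the theorem the paper's exact-moment computation is the shorter and tighter argument.
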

We summarize the asymptotic relationship among the various estimands in Figure~\ref{fig:highd_asymptotics_viz}.  We see that the randomness caused by $Y$ given $X$ is of a larger order than that due to the randomness in $X$. This explains why in Figure~\ref{fig:ols_target}, the coverage and MSE of cross-validation is similar when estimating either $\err$ or $\err_X$, but is significantly different when estimating $\err_{XY}$. As a result, $\err_X$ and $\err_{XY}$ are asymptotically uncorrelated, and moreover, combining this with Theorem~\ref{thm:ols_cv_independent_err} shows that $\errhat^\cv$ is asymptotically uncorrelated with $\err_{XY}$, as stated next.

\begin{figure}[!t]
    \centering
    \includegraphics[width = 4in, trim = 0 32cm 38cm 0, clip]{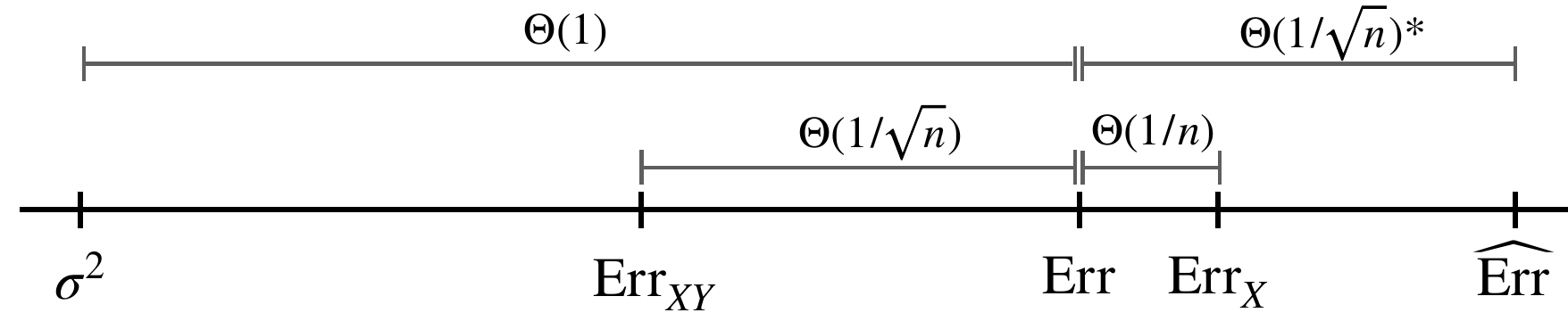}
    \caption{The relationship among various notions of prediction error in the proportional asymptotic limit~\eqref{eq:highd_limit_def}. Recall that $\sigma^2$ is the Bayes error: the error rate of the best possible model. See Figure~\ref{fig:highd_rates} for a simulation experiment demonstrating these rates. $^*$The variance of $\errhat$ scales as $1/\sqrt{n}$; see Section~\ref{subsec:cv_bias} for details about the bias.}
    \label{fig:highd_asymptotics_viz}
\end{figure}

\begin{figure}[!ht]
    \centering
    \includegraphics[height = 2.5in]{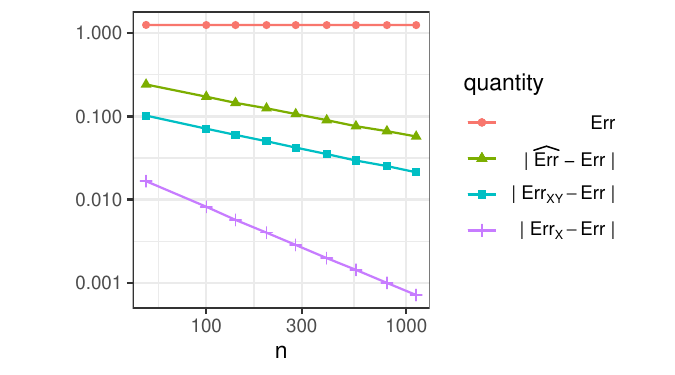}
    \caption{Simulation results demonstrating the asympotic scaling presented in Figure~\ref{fig:highd_asymptotics_viz}. The fitted slopes of the lines (after log-transforming both axes) are $0.00, -0.46, -0.50, -1.01$, from top to bottom. See Section~\ref{subsec:cv_bias} for details about the rate of $\errhat$}
    \label{fig:highd_rates}
\end{figure}

\begin{corollary}
In the setting of Theorem~\ref{thm:err_and_err_x_higd},
\begin{equation*}
    \cor(\err_{XY}, \err_X) \to 0 \quad \text{as } n,p \to \infty.
\end{equation*}
Moreover, for any linearly invariant estimator $\errhat$ (such as $\errhat^\cv$ using OLS as the fitting algorithm),
\begin{equation*}
    \cor(\err_{XY}, \errhat) \to 0 \quad \text{as } n,p \to \infty.
\end{equation*}
\label{cor:cor_errx_errxy_highd}
\end{corollary}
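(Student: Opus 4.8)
The plan is to reduce both correlation statements to the single variance ratio $\var(\err_X)/\var(\err_{XY})$ and then invoke the rates supplied by Theorem~\ref{thm:err_and_err_x_higd}. The first claim is essentially a moment computation, while the second leans on the conditional independence of Theorem~\ref{thm:ols_cv_independent_err} to make the estimator's own variance irrelevant.

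For the first claim, I would start from the fact that $\err_X = \E[\err_{XY}\mid X]$ is a function of $X$ alone, so the tower property gives
\[
\cov(\err_{XY}, \err_X) = \E\big[\err_X\,\E[\err_{XY}\mid X]\big] - \err^2 = \E[\err_X^2] - \err^2 = \var(\err_X).
\]
Consequently the correlation collapses to a clean form,
\[
\cor(\err_{XY}, \err_X) = \frac{\var(\err_X)}{\sqrt{\var(\err_{XY})\,\var(\err_X)}} = \sqrt{\frac{\var(\err_X)}{\var(\err_{XY})}}.
\]
Combining the decomposition~\eqref{eq:err_xy_var_decomp} with Theorem~\ref{thm:err_and_err_x_higd} gives $\var(\err_{XY}) = \Theta(1/n)$ and $\var(\err_X) = \Theta(1/n^2)$, so the ratio is $\Theta(1/n)$ and its square root vanishes as $n,p\to\infty$.

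For the second claim, I would exploit $\errhat \indep \err_{XY}\mid X$ from Theorem~\ref{thm:ols_cv_independent_err}. Setting $g(X) := \E[\errhat\mid X]$, conditional independence yields $\E[\err_{XY}\,\errhat\mid X] = \err_X\,g(X)$, so by the tower property
\[
\cov(\err_{XY}, \errhat) = \E[\err_X\, g(X)] - \err\,\E[g(X)] = \cov(\err_X, g(X)).
\]
Cauchy--Schwarz bounds this by $\sqrt{\var(\err_X)\,\var(g(X))}$, and the law of total variance gives $\var(g(X)) = \var(\E[\errhat\mid X]) \le \var(\errhat)$. Dividing by $\sqrt{\var(\err_{XY})\,\var(\errhat)}$, the factor $\var(\errhat)$ cancels and I recover the same ratio as before,
\[
|\cor(\err_{XY}, \errhat)| \le \sqrt{\frac{\var(\err_X)}{\var(\err_{XY})}} = \Theta(1/\sqrt{n}) \to 0.
\]

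The genuinely hard work is front-loaded into the two results I am permitted to assume, so I expect no serious obstacle in the corollary itself. The one step needing care is the covariance manipulation in the second part: the conditional independence must be used precisely to rewrite $\cov(\err_{XY},\errhat)$ as the covariance of two functions of $X$, after which Cauchy--Schwarz and the total-variance bound make the estimator's variance drop out. This is exactly what lets a single rate comparison cover every linearly invariant estimator at once, rather than requiring a bespoke argument for cross-validation.
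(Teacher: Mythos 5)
Your proposal is correct and follows essentially the same route as the paper: both parts reduce to the ratio $\var(\err_X)/\var(\err_{XY})$ via the conditional covariance formula (conditioning on $X$ and using Theorem~\ref{thm:ols_cv_independent_err} to factor the conditional expectation), followed by Cauchy--Schwarz, the law of total variance to discard $\var(\errhat)$, and the rates from Theorem~\ref{thm:err_and_err_x_higd}. If anything, your handling of the second claim is slightly cleaner than the paper's, which phrases the same step as writing $\errhat$ as a randomized function of $\err_X$ before applying the identical bound.
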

Notice that this is a marginal result, whereas the similar Theorem~\ref{thm:ols_cv_independent_err} is conditional on $X$. With respect to Figure~\ref{fig:highd_asymptotics_viz}, this result means that the fluctuations of $\err_{XY}$ around $\err$ are asymptotically uncorrelated with the fluctuations of of $\errhat$ around $\err$.
Combining Theorem~\ref{thm:err_and_err_x_higd} with Theorem~\ref{thm:ols_cv_independent_err}, we conclude that CV has larger error for estimating $\err_{XY}$ than for $\err$ or $\err_X$:
\begin{corollary}
In the setting of Theorem~\ref{thm:err_and_err_x_higd}, let $\errhat$ be any linearly invariant estimator (such as $\errhat^\cv$ using OLS as the fitting algorithm). Suppose in addition that $\var(\errhat) \to 0$ (an extremely weak condition satisfied by any reasonable estimator). Then,
\begin{align*}
    \E\left[\left(\errhat - \err_{XY}\right)^2\right] - \E\left[\left(\errhat - \err_X \right)^2\right] &= \Omega(1 / n), \\ 
      \E\left[\left(\errhat - \err_{XY}\right)^2\right] - \E\left[\left(\errhat - \err \right)^2\right] &= \Omega(1 / n), \quad \text{ and } \\ 
        \left\lvert \E\left[\left(\errhat - \err\right)^2\right] - \E\left[\left(\errhat - \err_X \right)^2\right] \right\rvert &= o(1 / n). 
\end{align*}
\label{cor:err_errxy_gap_highd}
\end{corollary}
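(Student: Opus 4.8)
The plan is to treat the three displayed claims as one package built on two facts already in hand: Corollary~\ref{cor:errx_lower_mse}, which gives the exact identity $\E[(\errhat - \err_{XY})^2] = \E[(\errhat - \err_X)^2] + \E[\var(\err_{XY}\mid X)]$, and Theorem~\ref{thm:err_and_err_x_higd}, which supplies the two rates $\E[\var(\err_{XY}\mid X)] = \Theta(1/n)$ and $\var(\err_X) = \Theta(1/n^2)$. The intermediate target $\err_X$ is the hinge. I would prove the third (approximate-equivalence) claim first, then read off the first claim directly, and finally obtain the second by chaining the other two through $\err_X$.

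For the third claim, since $\err$ is a constant and $\E[\err_X] = \err$, I would expand around $\err$ to get the identity
\[
\E\big[(\errhat - \err_X)^2\big] - \E\big[(\errhat - \err)^2\big] = -2\,\cov(\errhat, \err_X) + \var(\err_X).
\]
The variance term is $\Theta(1/n^2) = o(1/n)$ by Theorem~\ref{thm:err_and_err_x_higd}. For the covariance term, Cauchy--Schwarz gives $|\cov(\errhat, \err_X)| \le \sqrt{\var(\errhat)\,\var(\err_X)}$; since $\var(\err_X) = \Theta(1/n^2)$ and $\var(\errhat) \to 0$ by assumption, this is bounded by $\sqrt{o(1)\cdot \Theta(1/n^2)} = o(1/n)$. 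Both terms are therefore $o(1/n)$, which proves the third claim. This is precisely where the otherwise-mild hypothesis $\var(\errhat)\to 0$ earns its keep: without it one would only get $O(1/n)$ from the covariance term, not the sharper $o(1/n)$.

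The first claim is then immediate, as Corollary~\ref{cor:errx_lower_mse} identifies the difference $\E[(\errhat - \err_{XY})^2] - \E[(\errhat - \err_X)^2]$ with $\E[\var(\err_{XY}\mid X)]$, which Theorem~\ref{thm:err_and_err_x_higd} shows is $\Theta(1/n)$ and hence $\Omega(1/n)$. For the second claim I would write
\[
\begin{split}
\E\big[(\errhat - \err_{XY})^2\big] - \E\big[(\errhat - \err)^2\big] &= \Big(\E\big[(\errhat - \err_{XY})^2\big] - \E\big[(\errhat - \err_X)^2\big]\Big) \\
&\quad + \Big(\E\big[(\errhat - \err_X)^2\big] - \E\big[(\errhat - \err)^2\big]\Big),
\end{split}
\]
whose first bracket is $\Theta(1/n)$ (the first claim) and whose second bracket is $o(1/n)$ (the third claim); the sum is thus $\Theta(1/n)$, in particular $\Omega(1/n)$. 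There is no real obstacle here beyond bookkeeping, since all of the analytic difficulty lives in Theorem~\ref{thm:err_and_err_x_higd}. The only genuinely substantive step is the Cauchy--Schwarz control of $\cov(\errhat, \err_X)$, where one must notice that the smallness of $\var(\err_X)$ does most of the work and that the vanishing of $\var(\errhat)$ is exactly what upgrades the bound to $o(1/n)$.
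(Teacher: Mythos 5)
Your proof is correct and follows essentially the same route as the paper's: decompose through the intermediate quantity $\err_X$, use the conditional independence (Theorem~\ref{thm:ols_cv_independent_err}) to kill the cross term, control $\cov(\errhat,\err_X)$ by Cauchy--Schwarz using $\var(\errhat)\to 0$ for the third claim, and chain the first and third claims to get the second. The only cosmetic difference is that for the first claim you invoke the exact identity of Corollary~\ref{cor:errx_lower_mse} together with the rate $\E[\var(\err_{XY}\mid X)]=\Theta(1/n)$, which immediately gives the sharper $\Theta(1/n)$, whereas the paper re-expands and lower-bounds via $\var(\err_{XY})-2\sqrt{\var(\err_{XY})\var(\err_X)}$; both rest on the same two facts.
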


The asymptotic theory perfectly predicts the experimental results presented in Figure~\ref{fig:highd_rates}; we see that even for moderate sample size, the scalings is exactly as anticipated. The main conclusion is that for a linearly invariant estimate of prediction error that has precision $1 / \sqrt{n}$, our results show that asymptotically one has lower estimation error when estimating $\err$ compared to $\err_{XY}$. Similarly, the correlation between a linearly invariant estimate and $\err_{XY}$ goes to zero. These theoretical predictions are also corroborated by the experimental results presented in Figure~\ref{fig:highd_rates2}. Thus, cross-validation is estimating the average error $\err$ more so than the specific error $\err_{XY}$. 

\begin{remark}
Note that the results in this section apply both to $K$-fold cross-validation with fixed $K$, and leave-one-out cross validation where $K = n$. Formally, the results require only that one is using some sequence of linearly invariant estimators.
\end{remark}

\subsection{The bias of cross-validation}
\label{subsec:cv_bias}

Up until this point, we have not explicitly mentioned the bias in the CV point estimate $\errhat$ that comes from the difference in sample size. That is, $\errhat$ uses models of size $n (K - 1) / K$, whereas $\err$ and $\err_{XY}$ are defined for models fit on data of size $n$, so $\E[\errhat]$ is typically smaller than $\err = \E[\err_{XY}]$. We now pause for a few remarks about this bias. First, notice that Corollary~\ref{cor:err_errxy_gap_highd} sidesteps the bias issue by considering differences between two mean squared error quantities. The bias is important, however, if we wish to understand absolute quantities such as $\E[(\errhat - \err)^2]$. To this end, the bias exhibits different behavior in different regimes\footnote{We thank an anonymous reviewer for feedback on this topic.}:
\begin{itemize}
    \item {\em The parametric regime.} Suppose $p$ is fixed, $n \to \infty$, and the model class has fixed dimension. Here, the bias will typically be of order $1/n$, which means that it is negligible compared to the variance. (In fact, the dimension of the model class can grow, provided the rate is slow enough; see~\citet{Wager2019} for discussion.)
    \item {\em The proportional, dense regime.} Consider the setting above in~\eqref{eq:highd_limit_def}, fitting a dense model. If the number of folds is fixed, the bias of $\errhat$ will converge to a nonzero constant as $n$ and $p$ grow~\citep[e.g.,][]{Liu2020Ridge}. What this means is that in Figure~\ref{fig:highd_rates}, the $|\errhat - \err|$ curve will eventually cease to decay at a $1/\sqrt{n}$ rate, bottoming out due to the constant bias. We do not see this behavior in the plot, because the bias is still much smaller than the variance at the sample sizes we consider; Figure~\ref{fig:ols_rates_bv} reports the bias and variance in this setting.
    \item {\em The proportional, sparse regime.} The setting is the most delicate. Here, the behavior of sparse regression algorithms may have very different behavior on samples of size $n (K - 1) / K$ versus samples of size $n$~\cite[e.g.,][]{reeves19a}. Thus, the bias here may be appreciable.
\end{itemize}
In all cases, the bias can be mitigated by taking a larger number of folds as $n$ grows.

Incorporating bias alongside our results from Section~\ref{subsec:err_errx_results} leads to an interesting bias-variance-variance decomposition of $\E[(\errhat - \err_{XY})^2]$. Since both $\errhat$ and $\err_{XY}$ are random quantities, we cannot use the usual bias-variance decomposition. However,
by Corollary~\ref{cor:cor_errx_errxy_highd}, these two quantities are asymptotically uncorrelated, yielding the following:
\begin{equation*}
    \E\bigg[(\errhat - \err_{XY})^2\bigg] \approx 
    \underbrace{\bigg(\E\left[\errhat\right] - \err\bigg)^2}_{\text{bias}} + 
    \underbrace{\E\bigg[(\errhat - \E[\errhat])^2\bigg]}_{\text{variance of $\errhat$}} + 
    \underbrace{\E\bigg[(\err_{XY} - \err)^2 \bigg]}_{\text{variance of $\err_{XY}$}}.
\end{equation*}
The first two terms on the right hand side are the bias-variance decomposition for $\errhat$ as an estimate of $\err$. Thus, because the additional third term is positive, we again see that $\errhat$ is a more precise estimate of $\err$ than of $\err_{XY}$. 

% Our focus here is on the possible estimands, and we have not explicitly proved that CV has precision $1/\sqrt{n}$ in the current regime. Nonetheless, we expect this to be the case for leave-one-out CV and $K$-fold CV with $K$ growing. (It is not the case with fixed $K$ due to the difference in sample size, but this is not the main topic of concern in this work.) Figure~\ref{fig:highd_rates} shows that this scaling approximately holds for moderate $n$.

\subsection{Data splitting}
\label{subsec:data_splitting}

Perhaps the simplest way to estimate prediction error is to split the data into two disjoint sets, one for training and one for estimating the prediction accuracy. The previous results also shed light on the properties of data splitting. In particular, we will show that when estimating prediction error with data splitting, refitting the model on the full data incurs additional variance that can make the confidence intervals slightly too small, even asymptotically. This is not a cause for practical concern, but it is another manifestation of the fact that linearly invariant estimators are estimating average prediction error, and $\err_{XY}$ contains additional, independent variation. We report on the details in Appendix~\ref{app:data_splitting}.

% Thus, data splitting fully resolves the inferential challenges of cross-validation only if one does not refit the model on the full data. In that case, one sacrifices efficiency both in the model fitting and in the precision of the prediction error estimates. This may be a viable approach when one has many samples available, but the loss of efficiency is likely to be too costly for data sets comprised of fewer than, say, a thousand samples. With limited data, another approach is necessary. We explore the loss of efficiency due to data splitting in simulation experiments in Section~\ref{sec:experiments}. 

\subsection{Connection with covariance penalties}
\label{subsec:cov_penalties}

For parametric models, there is an alternative theory of the estimation of prediction accuracy based on \emph{covariance penalties}; see \citet{stein1981estimating, efron2004estimation, Rosset2020} for overviews of this approach. For the linear model with OLS and squared error loss, this approach specializes to the well-known Mallows $C_p$ \citep{mallows1973comments, akaike1974aic} estimate of prediction error:
\begin{equation*}
\errhat^\cp := \frac{1}{n} \sum_{i=1}^n (y_i - \hat{f}(x_i, \hat{\theta}))^2 + \frac{2 p \hat{\sigma}^2}{n}.
\end{equation*}
The classical covariance penalty approach is focused on estimating \emph{in-sample error}, the error for a fresh sample with the same features $X$:
\begin{equation*}
\err_{\inn}(X) := \E\left[\frac{1}{n}\sum_{i=1}^n (Y_i' - \hat{f}(X_i, \hat{\theta}))^2 \mid X\right],
\end{equation*}
where the expectation is only over $Y_i$ and $Y_i'$ for $i=1,\dots,n$, and $Y_i, Y_i'$ are independent draws from the distribution of $Y_i \mid X_i = x_i$. See Figure~\ref{fig:error_notions} for a visualization of how this relates our other notions of prediction error. 
% The term \emph{covariance penalty} comes from the following identity \citep{efron2004estimation,ye1998}:
% \begin{equation}
%     \err_\inn(X) = \E\left[\frac{1}{n}\sum_{i=1}^n (y_i - \hat{f}(x_i, \hat{\theta}))^2\mid X\right] + \frac{2}{n}\sum_{i=1}^n \cov(y_i, \hat{f}(x_i, \hat{\theta}) \mid X).
%     \label{eq:covpen}
% \end{equation}
To extend this to the setting where the feature vector of future test points is also random (the setting of the present work), \citet{Rosset2020} introduce $RC_p$, which is a similar but slightly larger estimate of prediction error that accounts for the variability in the features of future test points.
% \begin{equation*}
% \errhat^\rcp := \frac{1}{n} \sum_{i=1}^n \left(y_i - \hat{f}(x_i, \hat{\theta})\right)^2 + \frac{p \hat{\sigma}^2 }{n}\left(2 + \frac{p+1}{n-p-1}\right).
% \end{equation*}
% When the linear model holds and the features are multivariate Gaussian, $\errhat^\rcp$ is an unbiased estimate of $\err$ \citep{Rosset2020}. 

\subsubsection{Estimand of $C_p$}
We first discuss the estimation of prediction error with Mallows $C_p$, showing that (like cross-validation) it is a worse estimator for $\err_{XY}$ than for $\err$. The results from Section~\ref{subsec:errx_results} and Section~\ref{subsec:err_errx_results} continue to hold for $\errhat^\cp$ and $\errhat^\rcp$, since they are linearly invariant:
\begin{lemma}
The estimators $\errhat^\cp$ and $\errhat^\rcp$ are linearly invariant.
\end{lemma}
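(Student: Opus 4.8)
The plan is to reduce everything to a single classical fact: under ordinary least squares, the residual vector is invariant under the transformation $y \mapsto y + X\kappa$. Both $\errhat^\cp$ and $\errhat^\rcp$ depend on the data only through the full-data residual sum of squares $\RSS = \sum_{i=1}^n (y_i - x_i^\top \hbeta)^2$, where $\hbeta = \A(X,Y)$ is the OLS coefficient vector, together with the fixed integers $n$ and $p$. Indeed, the usual variance estimate is $\hat\sigma^2 = \RSS / (n-p)$, a function of $\RSS$ alone; the first term of each estimator is $\RSS / n$; and each penalty term is a constant depending only on $n$ and $p$ times $\hat\sigma^2$. Since neither estimator involves any randomization, the auxiliary variable $U$ in the definition of linear invariance plays no role, so it suffices to show that $\RSS$ is unchanged when each $y_i$ is replaced by $y_i + x_i^\top \kappa$.

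First I would establish this residual invariance directly. Stacking the observations, the transformation sends the response vector $y$ to $y + X\kappa$, so the OLS coefficients become $(X^\top X)^{-1} X^\top (y + X\kappa) = \hbeta + \kappa$; the fitted values therefore shift by exactly $X\kappa$, and the residual vector $(y + X\kappa) - X(\hbeta + \kappa) = y - X\hbeta$ is unchanged. Hence $\RSS$ is invariant.

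Combining these two observations finishes the argument: because $\RSS$ is invariant and $n, p$ are fixed, $\hat\sigma^2$ is invariant, and therefore each of $\errhat^\cp$ and $\errhat^\rcp$ returns the same value before and after the transformation, which is precisely the linear-invariance condition.

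I do not expect any genuine obstacle here. The only points requiring care are to confirm that $\hat\sigma^2$ is itself a deterministic function of $\RSS$ (so the penalty terms inherit the invariance) and to note that the residual computation uses invertibility of $X^\top X$, which holds in the OLS regime $n > p$ maintained throughout. This mirrors, but is strictly simpler than, the proof of Lemma~\ref{lem:ols_equivariance} for cross-validation, since $C_p$ and $RC_p$ are deterministic functions of a single full-data OLS fit rather than of many fold-wise refits.
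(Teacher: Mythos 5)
Your proof is correct and follows essentially the same route as the paper, which simply observes that $\errhat^\cp$ and $\errhat^\rcp$ are functions only of the OLS residuals (you additionally spell out the residual-invariance computation, which in the paper appears in the proof of Lemma~\ref{lem:ols_equivariance}). No issues.
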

This result is immediate from the fact the $\errhat^\cp$ and $\errhat^\rcp$ are functions only of the residuals of the OLS fit. Thus, the conclusions of Theorem~\ref{thm:ols_cv_independent_err}, Corollary~\ref{cor:errx_lower_mse}, Corollary~\ref{cor:cor_errx_errxy_highd}, and Corollary~\ref{cor:err_errxy_gap_highd} hold for $\errhat^\cp$. In particular, $\errhat^\cp$ has lower error for estimating $\err$ and  $\err_X$ than for estimating $\err_{XY}$, and $\errhat^\cp$ is asymptotically uncorrelated with $\err_{XY}$. In summary, as before with cross-validation, Mallow's $C_p$ is not able to estimate $\err_{XY}$, but is rather an estimate of $\err_\inn$, $\err$ or $\err_X$ (the latter two are close for large samples).

\subsubsection{A decomposition of $\err_X$}

Next, we develop a decomposition  for $\err_X$. The results that we present are implicit in \citet{Rosset2020}, but in that work the results are averaged over $X$ to instead obtain estimates for $\err$. From the definitions of $\err_X$ and $\err_\inn$, we trivially have that
\begin{equation}
   \mbox{Err}_X= \mbox{Err}_{\rm in}(X) + \E \left[ \E \bigg[(Y_{n+1} - \hat{f}(X_{n+1}, \hat{\theta}))^2 \mid (X, Y) \bigg] - \E\bigg[\frac{1}{n}\sum_{i=1}^n (Y_i' - \hat{f}(X_i, \hat{\theta}))^2 \mid (X, Y) \bigg] \mid X\right].
   \label{eq:decomp}
 \end{equation}
 
 When the linear model holds, this simplifies as stated next.
\begin{proposition}
For the linear model with the OLS fitting algorithm and squared error loss, assume in addition that the distribution of $X_i$ has mean zero and covariance $\Sigma$ of full rank. Then,
\begin{equation*}
     \err_X= \err_\inn(X) +
     \frac{\sigma^2}n\left(\tr(\widehat\Sigma^{-1}\Sigma) - p\right),
\end{equation*}
where $\widehat{\Sigma} = X^\top X / n$.
\label{prop:errx_cov_penalty}
\end{proposition}
The second term in the sum can be either positive or negative. Roughly speaking, this term is smaller (more negative) if $X$ is a good design that yields precise estimates of the regression coefficients, whereas this term is larger (more positive) if $X$ yields less precise estimates. Note that we do not typically know the population covariance $\Sigma$, so this cannot be used as an estimator for prediction error. Instead, it is an expository decomposition relating $\err_X$ with existing work about the estimation of prediction error.
From this expression, we can read off the following results from \citep{Rosset2020}.

\begin{corollary}[Random-X covariance penalties \citep{Rosset2020}]
In the setting of Proposition~\ref{prop:errx_cov_penalty}, we have that 
\begin{equation*}
\err \ge \E[\err_\inn].
\end{equation*}
Moreover, if the feature vector follows a multivariate Gaussian distribution, then 
\begin{equation*}
\E[\err_X] = \E[\err_\inn] + \frac{p\sigma^2}{n}\left(\frac{p + 1}{n-p-1}\right).
\end{equation*}
\label{cor:randomx_cov_penalty}
\end{corollary}
The latter expression is the motivation for the $RC_p$ penalty.

\begin{figure}
    \centering
    \includegraphics[height = 1in, trim = 0 700 700 0, clip]{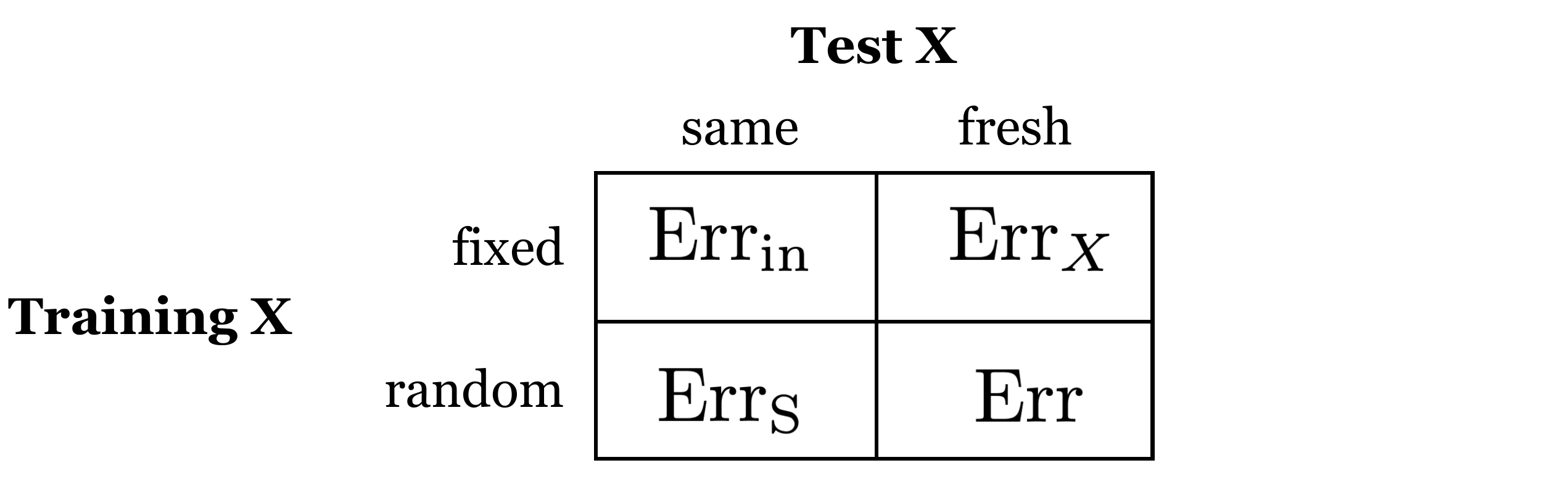}
    \caption{A representation of the notions of error considered in this work. See \citet{Rosset2020} for a definition and discussion of $\err_{\textnormal{S}}$.}
    \label{fig:error_notions}
\end{figure}
% \begin{remark}
% The results of this section have focused on the case of squared error loss.  Relation (\ref{eq:decomp}) of course holds for any loss function. Interestingly, the covariance formula (\ref{eq:covpen})
% holds for any  ``q class'' of error measures \citep{efron2004estimation}; these include misclassification rate and binomial deviance. These may lead to simplifications  of (\ref{eq:decomp}) similar to what were achieved for squared-error loss.
% \end{remark}
\subsection{Bootstrap estimates of prediction error}
Bootstrap estimates of prediction error are also linearly invariant, and so they are also estimates of the average prediction error. For brevity, we present these results in Appendix~\ref{app:bootstrap_theorems}.

\section{Confidence intervals with nested cross-validation}

In this section, we develop an estimator for the MSE of the cross-validation point estimate. Our ultimate goal is then to use the estimated MSE to give confidence intervals for prediction error with approximately valid coverage.

\subsection{Dependence structure of CV errors}
\label{sec:CV_cov_structure}
Before developing our estimator for the cross-validation MSE, we pause here to build up intuition for why the na\"ive CV confidence intervals for prediction error can fail, as seen previously in our example in Section~\ref{subsec:harrell_model}.
The na\"ive CV intervals are too small, on average, because the true sampling variance of $\errhat^\cv$ is larger than the na\"ive estimate $\sehat$ would suggest. In particular, this estimate of the variance of the  CV point estimate assumes that the observed errors $e_1,\dots,e_n$ are independent. This is not true---the observed errors have less information than an independent sample since each point is used for both training and testing, which induces dependence among these terms. 

\begin{figure}[t]
    \centering
    \includegraphics[height = 1.75in]{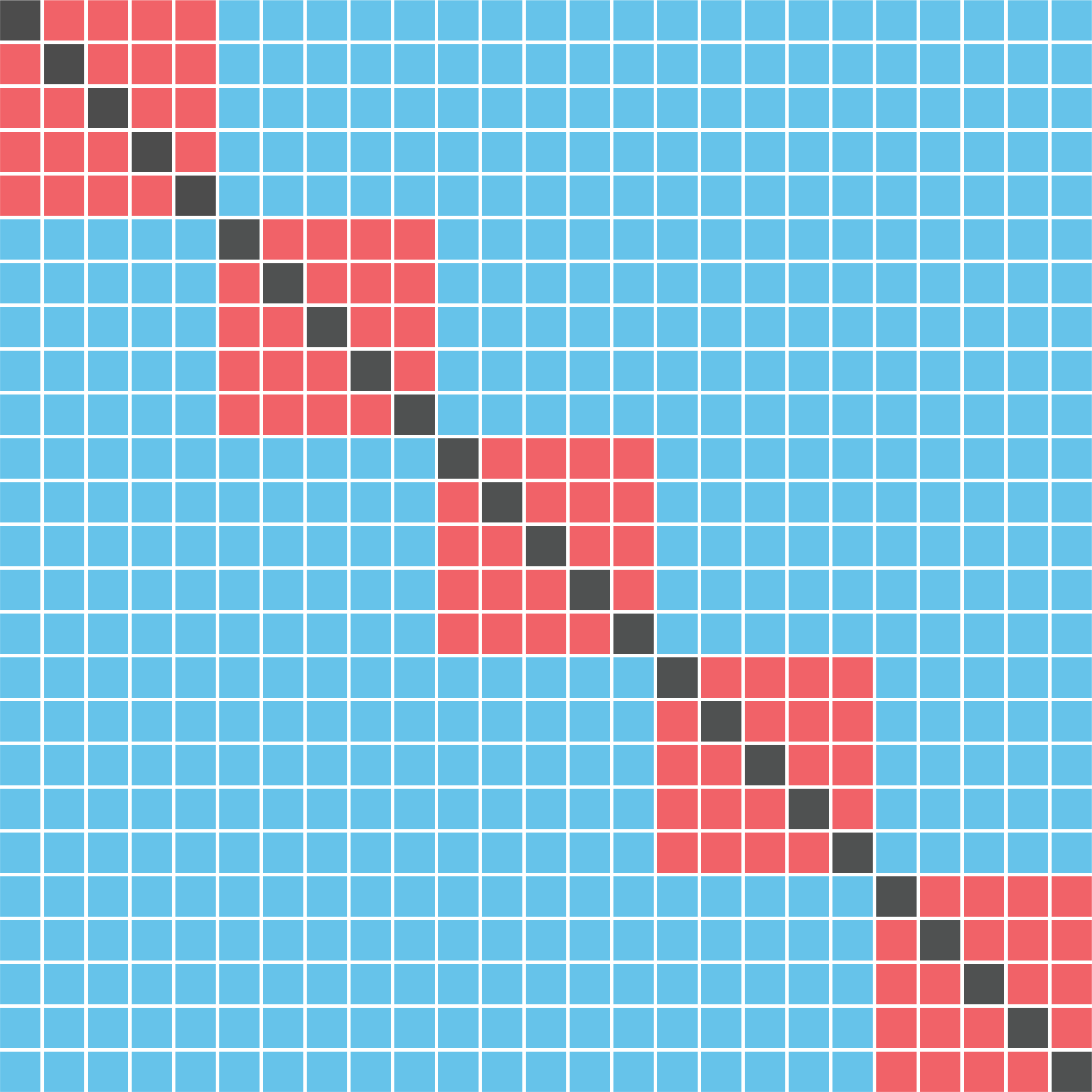}
    \caption{Covariance structure of the CV errors. Red entries correspond to the covariance between points in the same fold, and blue entries correspond to the covariance between points in different folds.}
    \label{fig:covariance_heatmap}
\end{figure}

In particular, the covariance matrix of the errors $e_1,\dots,e_n$ has the block structure shown in Figure~\ref{fig:covariance_heatmap}. As such, it is parameterized by only three numbers: $a_1 := \var(e_1)$, $a_2:= \cov(e_i, e_j)$  for $i,j$ in the same fold, and $a_3 := \cov(e_i, e_j)$ for $i,j$ in different folds \citep{bengio2004no}.
It is easy to see that the variance of $\bar{e}$ is
\begin{equation*}
\var(\bar{e}) = \frac{1}{n} a_1 + \frac{n/K-1}{n} a_2 + \frac{n-n/K}{n} a_3;
\end{equation*}
see \citet{bengio2004no}.
The constants $a_2$ and $a_3$ will typically be positive, in which case $\var(\bar{e}) > a_1 / n$, and so estimating the variance of $\bar{e}$ as $\sehat^2$ results in an estimate that is too small.
For example, in the setting from Section~\ref{subsec:harrell_model}, the estimated variance is approximately a factor of $2.65$ too small, so the na\"ive confidence intervals are too small by a factor of $\sqrt{2.65} \approx 1.6$. In order to correctly estimate the variance of $\bar{e}$, it would suffice to estimate $a_1$, $a_2$, and $a_3$, but
\citet{bengio2004no} proves surprising fact that there is no unbiased estimator of $\Var(\bar{e})$ based on a single run of cross-validation. Thus, estimating $a_1$, $a_2$, and $a_3$ cannot be done from a single run of cross-validation. Although this result suggests that something beyond the usual cross-validation will be required to give good estimates of the standard error of $\errhat^\cv$, it does not imply that it is impossible to get such estimates with other approaches.

To recap, the primary issue with the na\"ive cross-validation confidence intervals is that they rely on an independence assumption that is violated: they implicitly assume that $a_2 = 0$ and $a_3 = 0$. Thus, the usual estimate of the variance of $\errhat^\cv$ is too small, resulting in poor coverage. To remedy this issue, we  develop an estimator that empirically estimates  the variance of $\errhat^\cv$ across many subsamples. Avoiding the faulty independence approximation leads to intervals with superior coverage. We turn to details of our proposed procedure next.

%\subsection{A holdout standard error identity}
\subsection{Our target of inference}
\label{subsec:ncv_target}

In this section, our primary goal will be to give confidence intervals for test accuracy by estimating the mean-squared error (MSE) of cross-validation:
\begin{definition}\textbf{}
For a sample of size $n$ split into $K$ folds, the \emph{cross-validation MSE} is
\begin{equation}
\label{eq:mse_def}
\mse_{K,n} := \E\left[\left(\widehat{\err}^\cv - \err_{XY}\right)^2\right].
\end{equation}
\end{definition}
In particular, we define MSE with respect to $\err_{XY}$ and thus will calibrate our test intervals to cover the quantity $\err_{XY}$. At this point, the reader should wonder why we define the MSE with respect to $\err_{XY}$ in view of the results from Section~\ref{sec:estimand_of_cv} that show that $\errhat^\cv$ is a more precise estimate of $\err$ than of $\err_{XY}$, and we will next discuss this issue carefully.  

To be clear, we choose to pursue confidence intervals for $\err_{XY}$ because we are able to do so; the MSE quantity above can be estimated in a convenient way due to an upcoming decomposition (Lemma~\ref{lem:holdout_MSE_identity}). At present, we do not know how to obtain a similar MSE estimate with respect to $\err$. Second, we emphasize that our results from Section~\ref{sec:estimand_of_cv} do \emph{not} mean that giving confidence intervals for $\err_{XY}$ is impossible. Rather, our results say that in the linear model, \emph{confidence intervals for $\err_{XY}$ will be larger than confidence intervals for $\err$}. Still, confidence intervals for either $\err$ or $\err_{XY}$ would be of interest to the analyst. We are able to derive an estimator for the MSE with respect to $\err_{XY}$, and we will turn to the details next.

% While the point estimate $\errhat^\cv$ is targeting $\err$ in that the variation of $\err_{XY}$ around $\err$ is not accessible to CV (at least in the linear model), this does not preclude estimating the MSE defined as in \eqref{eq:mse_def}. 
% %This is analogous to the relationship between a confidence interval for the conditional mean of $Y_i$ given $X_i = x_i$ and a prediction interval for $Y_i$ given $X_i = x_i$. In that setting, the model gives a point estimate of the conditional mean of $Y_i$ given $x_i$, but one can still create prediction intervals that account for both the uncertainty in our point estimate of the mean and the variability in the test point.
% Due to a convenient decomposition in Lemma~\ref{lem:holdout_MSE_identity}, we will be able to estimate this quantity without any modeling assumptions, whereas we do not know how to estimate \eqref{eq:mse_def} with $\err$ in place of $\err_{XY}$ without further assumptions. In any case, confidence intervals for $\err_{XY}$ are often of the most interest to the analyst.

The MSE in~\eqref{eq:mse_def} contains both a bias term (due to the reduced sample size used by $\errhat^\cv$) and variance term. %(The bias is often small for cross-validation~\citep{efron1983estimating, efron1983leisurely, Efron1997}; see also Section~\ref{subsec:cv_bias} above.) 
See Section~\ref{subsec:cv_bias} for a discussion of the bias.
Thus, we can view the MSE as a slightly conservative version of the variance of the cross-validation estimator. In any case, the MSE is the relevant quantity for creating confidence intervals around a possibly biased point estimate, since it accounts for both bias and variance. With this in mind, we will use an estimate of the MSE to construct confidence intervals for $\err_{XY}$.
Previewing the remainder of this section, we will produce confidence intervals for $\err_{XY}$ as follows:
\begin{equation}
\left(\errhat^\ncv - \widehat{\bias} - z_{1 - \alpha / 2} \cdot \sqrt{\msehat}, \quad \errhat^\ncv - \widehat{\bias} + z_{1 - \alpha / 2} \cdot \sqrt{\msehat}\right).
\label{eq:finalConfInt}
\end{equation}
Above, $\errhat^\ncv$ is similar to the CV estimate of error except across many random splits, and $\msehat$ is our estimator for MSE---the heart of this section.
In addition, we allow for the possibility of correcting for the sample size bias with an estimator $\widehat{\bias}$; we use one that arises naturally from the computations already carried out to estimate the MSE (Section~\ref{sec:bias_estimate}). 

\subsection{A nested CV estimate of MSE}

\subsubsection{A holdout MSE identity}

We now give a generic decomposition of the mean-squared error of an estimate of prediction error, which will enable use to estimate $\mse_{K,n}$. Consider a single split of the data into a training set and holdout set, i.e., we partition $\Ical=\{1,\dots,n\}$ into $\Ical_\train$ and $\Ical_\test$ calling the training set $(\widetilde{X}, \widetilde{Y})$. Using only $(\widetilde{X}, \widetilde{Y})$, we use our fitting procedure to obtain estimated parameters $\hat{\theta}^\train = \mathcal{A}(\widetilde{X}, \widetilde{Y})$, and further assume we have some estimate of $\widehat{\err}_{\widetilde{X}\widetilde{Y}}$ of the prediction error $\err_{\widetilde{X}\widetilde{Y}}$ defined in \eqref{eq:err_xytilde_def}. Here, $\widehat{\err}_{\widetilde{X}\widetilde{Y}}$ is any estimator of $\err_{\widetilde{X}\widetilde{Y}}$ based only on $(\widetilde{X}, \widetilde{Y})$, such as cross-validation using only $(\widetilde{X}, \widetilde{Y})$.
Let $\{e_i^\out\}_{i \in \Ical_\test}$ be the losses of the fitted model $\hat{f}(\cdot,\hat{\theta}^\train)$ on the holdout set, and let $\bar{e}^\out$ be their average.
The MSE of $\widehat{\err}_{\widetilde{X}\widetilde{Y}}$ can be written as follows:
\begin{lemma}[Holdout MSE identity]
In the setting above
\begin{equation}
\label{eq:holdout_se_identity}
    \underbrace{\E\left[\left(\widehat{\err}_{\widetilde{X}\widetilde{Y}} - \err_{\widetilde{X}\widetilde{Y}}\right)^2\right]}_{\mse} = 
    \underbrace{\E\left[\left(\widehat{\err}_{\widetilde{X}\widetilde{Y}} - \bar{e}^\out\right)^2\right]}_{\textnormal{(a)}} -  \underbrace{\E\left[\left(  \bar{e}^\out-\err_{\widetilde{X}\widetilde{Y}}  \right)^2\right]}_{\textnormal{(b)}}.
\end{equation}
\label{lem:holdout_MSE_identity}
\end{lemma}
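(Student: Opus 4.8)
The plan is to prove \eqref{eq:holdout_se_identity} by a single conditioning argument, treating the whole identity as a bias--variance decomposition relative to the training data. The crucial structural observation is that both $\widehat{\err}_{\widetilde{X}\widetilde{Y}}$ and $\err_{\widetilde{X}\widetilde{Y}}$ are measurable with respect to the training data $(\widetilde{X}, \widetilde{Y})$: the former by assumption, since it is built only from the training sample, and the latter because $\err_{\widetilde{X}\widetilde{Y}}$ is \emph{defined} in \eqref{eq:err_xytilde_def} as a conditional expectation given $(\widetilde{X}, \widetilde{Y})$. Consequently, once we condition on the training data, both quantities become constants, and the only randomness remaining lives in the holdout losses $\{e_i^\out\}_{i \in \Ical_\test}$.

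First I would establish the key conditional unbiasedness, $\E[\bar{e}^\out \mid (\widetilde{X}, \widetilde{Y})] = \err_{\widetilde{X}\widetilde{Y}}$. This holds because the holdout points indexed by $\Ical_\test$ are i.i.d.\ draws from $P$ that are independent of the training sample, so for each $i \in \Ical_\test$ the loss $e_i^\out = \ell(\hat{f}(X_i, \hat{\theta}^\train), Y_i)$ has conditional mean exactly $\err_{\widetilde{X}\widetilde{Y}}$ by the definition in \eqref{eq:err_xytilde_def}; averaging over the holdout set preserves this.

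Next I would carry out the algebra. Writing $A = \widehat{\err}_{\widetilde{X}\widetilde{Y}}$, $B = \err_{\widetilde{X}\widetilde{Y}}$, and $C = \bar{e}^\out$, I decompose $A - C = (A - B) + (B - C)$ and expand the squares, giving
\begin{equation*}
    (A - C)^2 - (C - B)^2 = (A - B)^2 + 2(A - B)(B - C).
\end{equation*}
The right-hand side of \eqref{eq:holdout_se_identity} is precisely $\E[(A-C)^2] - \E[(C-B)^2]$, so it suffices to show that the cross term has expectation zero. Conditioning on the training data, $A - B$ is a constant while $\E[B - C \mid (\widetilde{X}, \widetilde{Y})] = B - \err_{\widetilde{X}\widetilde{Y}} = 0$ by the unbiasedness above; hence $\E[(A-B)(B-C) \mid (\widetilde{X}, \widetilde{Y})] = 0$, and the tower property yields $\E[(A-B)(B-C)] = 0$. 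What remains is $\E[(A-B)^2] = \E[(\widehat{\err}_{\widetilde{X}\widetilde{Y}} - \err_{\widetilde{X}\widetilde{Y}})^2]$, which is exactly the left-hand side of \eqref{eq:holdout_se_identity}.

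The argument is essentially one conditional orthogonality computation, so I do not anticipate a serious obstacle. The only points demanding care are verifying that both $\widehat{\err}_{\widetilde{X}\widetilde{Y}}$ and $\err_{\widetilde{X}\widetilde{Y}}$ are genuinely functions of the training data alone, so that they may be pulled out of the conditional expectation, and the conditional unbiasedness of $\bar{e}^\out$, which rests on the holdout points being independent of the training sample.
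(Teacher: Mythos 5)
Your proof is correct and follows essentially the same route as the paper, which likewise adds and subtracts $\err_{\widetilde{X}\widetilde{Y}}$ inside term (a) and kills the cross term using the independence of the training and holdout sets. You simply spell out the conditioning argument and the conditional unbiasedness of $\bar{e}^\out$ that the paper leaves implicit.
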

The expectations above are over the complete data $(X,Y)$. The lemma follows from adding and subtracting $\err_{\widetilde{X}\widetilde{Y}}$ within term (a) then showing the cross-term is zero with a nested conditional expectation argument. 

This identity is of interest, since both (a) and (b) can be estimated from the data, which leads to an estimate of the MSE term. Specifically,  we  propose the  following estimation strategy:

\medskip

\begin{enumerate}
\item Repeatedly split the data into $\Ical_\train$ and $\Ical_\out$, and for each split, do the following: 
\begin{enumerate}[(i)]
    \item Apply cross-validation to $\Ical_\train$ to obtain $\widehat{\err}_{\widetilde{X}\widetilde{Y}}$ and use $\Ical_\out$ to obtain $\bar{e}^\out$, and then estimate (a) with $(\widehat{\err}_{\widetilde{X}\widetilde{Y}}$ - $\bar{e}^\out)^2$.
    \item Estimate (b) with empirical variance of $\{e_i\}_{i \in \Ical_\out}$ divided by the size of $\Ical_\out$.
\end{enumerate}
 \item  Average together estimates of $(a)$ and $(b)$ across all random splits and take their difference as in \eqref{eq:holdout_se_identity} to obtain an estimate of MSE.
 \end{enumerate}
 
 \medskip
 
Note that the estimates for both (a) and (b) are unbiased, so the resulting MSE estimate is unbiased for the MSE term in \eqref{eq:holdout_se_identity}. In the next section, we will pursue this strategy for the particular case where $\errhat_{\widetilde{X}\widetilde{Y}}$ is itself a cross-validation estimate based only on $(\widetilde{X}, \widetilde{Y})$.

\subsubsection{The MSE estimator}

Building from Lemma~\ref{lem:holdout_MSE_identity}, we now turn to our proposed estimate of MSE, the heart of this section. We follow the estimation strategy described above, using 
 $(K-1)$-fold CV as the estimator $\errhat_{\widetilde{X}\widetilde{Y}}$. This gives an estimate of (a) and (b), and hence an estimate for the MSE, as described above. We also get a point estimate of error by taking the empirical mean of $\widehat{\err}_{\widetilde{X}\widetilde{Y}}$ across the many splits. See Figure~\ref{fig:nested_viz} for a visualization of the nested CV sample splitting, and see Algorithm~\ref{alg:nested_cv} for a detailed description. We denote the resulting estimate of mean squared error by $\msehat^\ncv$ and  the point estimate for prediction error $\errhat^\ncv$.

\begin{figure}
    \centering
    \includegraphics[width = 5in, trim = 0 30cm 37cm 0, clip]{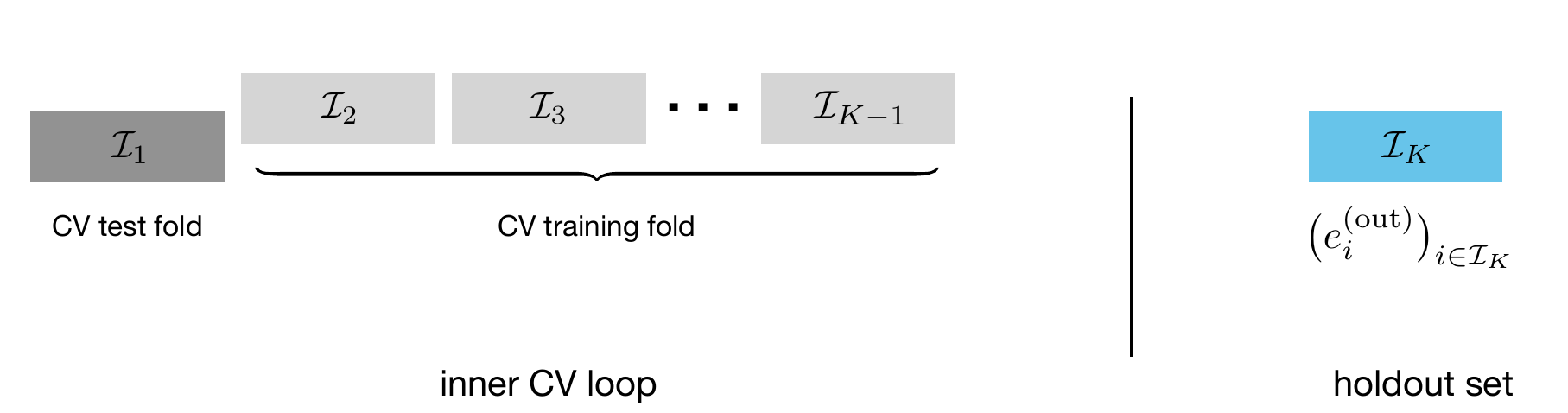}
    \caption{Visualization of nested CV. Using only the folds on left of the vertical line, we perform the usual cross-validation by holding out one fold at a time (the dark grey fold) and then fitting on the remaining folds (the light grey folds). The fresh holdout points (the blue fold) are never used in the inner CV step. }
    \label{fig:nested_viz}
\end{figure}

\begin{algorithm}[ht]
\begin{algorithmic}
\Require data $(X,Y)$, fitting algorithm $\A$, loss $\ell$, number of folds, $K$, number of repetitions $R$

\vspace{.15cm}

\Procedure{nested\_crossval($X, Y$)}{} \Comment{primary algorithm}

\State $\texttt{es} \gets []$ \Comment{initialize empty vectors}
\State $\texttt{a\_list} \gets []$ \Comment{(a) terms} 
\State $\texttt{b\_list} \gets []$ \Comment{(b) terms}

\For{$r \in \{1,\dots,R\}$}
\State Randomly assign points to folds $\Ical_1,\dots,\Ical_K$
\For{$k \in \{1,\dots,K\}$} \Comment{outer CV loop}
	\State $e^{(\inn)} \gets$ \textsc{inner\_crossval}($X, Y, \{\Ical_1,\dots,\Ical_K\} \setminus \Ical_k$) \Comment{inner CV loop}
	\State $\hat{\theta} \gets \A\left((X_i, Y_i)_{i \in \Ical\setminus \Ical_k}\right)$
	\State $e^\out \gets \left(\ell(\hat{f}(X_i, \hat{\theta}), Y_i)\right)_{i \in \Ical_k}$

	\State $\texttt{a\_list} \gets \text{append}\left(\texttt{a\_list}, \left(\textnormal{mean}(e^{(\inn)}) - \textnormal{mean}(e^\out)\right)^2 \right)$
	\State $\texttt{b\_list} \gets \text{append}\left(\texttt{b\_list},\textnormal{var}(e^\out) / |\Ical_k|\right)$
	\State $\texttt{es} \gets \text{append}(\texttt{es}, e^{(\inn)})$
\EndFor
\EndFor

\State $\msehat \gets \text{mean}(\texttt{a\_list}) - \text{mean}(\texttt{b\_list})$ 
\Comment{plug-in estimator based on \eqref{eq:holdout_se_identity}}

\State $\errhat^\ncv \gets \textnormal{mean}(\texttt{es})$

\State \textbf{return:} $(\errhat^\ncv, \msehat)$ \Comment{prediction error estimate and MSE estimate}

\EndProcedure

\vspace{.15cm}

\Procedure{inner\_crossval($X, Y, \{\Ical_1,\dots, \Ical_{K-1}\}$)}{} \Comment{inner cross-validation subroutine}
\State $e^{(\inn)} \gets []$
\For{$k \in \{1,\dots,K-1\}$}
	\State $\hat{\theta} \gets \A\left((X_i, Y_i)_{i \in \Ical_1\cup\dots\cup\Ical_{K-1}\setminus \Ical_k}\right)$
	\State $e^{(\textnormal{temp})} \gets \left(\ell(\hat{f}(X_i, \hat{\theta}), Y_i)\right)_{i \in \Ical_k}$ 
	\State $e^{(\inn)} \gets \text{append}(e^{(\inn)}, e^{(\textnormal{temp})})$
\EndFor
\State \textbf{return:} $e^{(\inn)}$

\vspace{.15cm}

\Ensure $\textsc{nested\_crossval}(X, Y)$  %\Comment{values to use in \eqref{eq:nested_cv_ci}}
\EndProcedure
\end{algorithmic}

\caption{Nested Cross-validation}
\label{alg:nested_cv}
\end{algorithm}

In view of Lemme~\ref{lem:holdout_MSE_identity}, we see that the estimator $\msehat^\ncv$ is targeting the MSE of $\errhat^\cv$ as an estimate of $\err_{XY}$, as we record formally next.
\begin{theorem}[Estimand of nested CV]
\label{thm:ncv_unbiased}
For a nested CV with a sample of size $n$,
\begin{equation*}
    \E\left[\msehat^\ncv\right] = \mse_{K-1,n'},
\end{equation*}
where $n' = n (K-1) / K$.
\end{theorem}
This result shows that $\msehat^\ncv$ obtained by nested  CV is estimating the MSE of $(K-1)$-fold cross-validation on a sample of size $n (K-1)/K$. 
Since nested CV uses an inner loop with samples of size $n (K-1) / K$, we recommend re-scaling to obtain an estimate for a sample of size $n$ by instead taking $\msehat = (K-1) / K \cdot \msehat^\ncv$ (although this re-scaled version is not guaranteed to be exactly unbiased for $\mse_{K,n}$). As a minor detail, in practice we also restrict $\sqrt{\msehat}$ to fall between $\sehat$ (the estimated standard error if one had $n$ independent points) and $\sqrt{K} \cdot \sehat$ (the estimated standard error if one had only $n/K$ independent points). This is a minor implementation detail prevents implausible values of $\msehat$ from arising. After adjusting the point estimate $\errhat^\ncv$ with a bias correction discussed next, we form our final confidence intervals as in \eqref{eq:finalConfInt}.

\begin{remark}[An interpretation of nested CV]
A simpler, perhaps more natural approach to this problem would be to borrow ideas from bootstrap calibration \cite[e.g.,][p.~263]{efron1993introduction}. Specifically, one could use a nested CV
scheme, and compute the standard normal confidence interval for error from the inner folds. Then one could check how well this interval covers $\bar{e}^\out$, and adjust the interval to achieve the desired coverage. The problem with this approach is that the left-out fold is finite, so that the interval is a prediction interval for $\bar{e}^\out$ rather than a confidence interval for the true underling prediction error. 

NCV is similar in spirit to this:  over repeated subsamples, we estimate the quantity (a) in \eqref{eq:holdout_se_identity}, which leads to an empirical estimate of how much an interval around $\errhat_{\widetilde{X}\widetilde{Y}}$ must be widened in order to cover $\bar{e}^\out$. The latter is a random quantity, however, so this should be thought of as a calibrated prediction interval. In truth, we wish to cover not $\bar{e}^\out$, but its mean $\err_{\widetilde{X}\widetilde{Y}}$. We convert from a prediction interval for $\bar{e}^\out$ to a confidence interval for $\err_{\widetilde{X}\widetilde{Y}}$ by subtracting out the term (b).
\end{remark}

\begin{remark}[The sample size difference and the target of inference.]
Note that the estimator $\errhat^{\ncv}$ uses models fit on with $n (K - 2) / K$ data points, whereas the target of inference in Theorem~\ref{thm:ncv_unbiased}, $\mse_{K-1,n'}$, is defined with respect to the prediction accuracy of a model fit with $n (K - 1) / K$ points. How can the former be used to estimate the latter? The answer is that the nested CV procedure relies also on fits of size $n (K -1 ) / K$, see the definition of $\hat{\theta}$ in the ``nested\_crossval'' subroutine of Algorithm~\ref{alg:nested_cv}. Nested CV compares the predicted accuracy on the models from $n (K - 2) / K$ data points to the estimated accuracy of the models with $n (K - 1) / K$ data points, using the extra holdout data to asses this accuracy.
\end{remark}

\subsubsection{Estimation of bias}
\label{sec:bias_estimate}

The nested CV computations also yield a convenient estimate of the bias of the NCV point estimate of error, $\errhat^\ncv$. They key idea is that nested CV considers both models fit with $n (K - 2) / K$ data points and with $n (K - 1) / K$ data points, and comparing their these models gives an estimate of bias; see Appendix~\ref{app:bias_estimation}. This aspect of nested CV is not critical---the MSE estimation above is the core of our proposal.

\section{Simulation experiments}
\label{sec:experiments}
We now explore the coverage of nested CV in a variety of settings. In each case, we will report the coverage of na\"ive CV (CV), nested CV (NCV), and data splitting with refitting (DS), where the nominal miscoverage rate is 10\% (5\% miscoverage in each tail).
We also report on the width of the intervals, expressed relative to the width of the standard CV intervals. (We wish to produce intervals that are as narrow as possible while maintaining correct coverage.) We use 10-fold CV (the number of folds has little impact; see Appendix~\ref{subapp:num_folds_exp}) and NCV, with 200 random splits for the latter; see Appendix~\ref{app:compute_times} for the runtime of each experiment. For classification examples we use binary loss, and form confidence intervals for CV, NCV, and data splitting after taking the binomial  variance-stabilizing transformation, described in detail in Appendix~\ref{app:variance_stabilizing_transform}. For regression examples, we use squared error loss.

For data splitting, we use 80\% of the samples for training and 20\% for estimating prediction error. Note that the data splitting without refitting intervals are the same as the data splitting with refitting intervals; the difference is that they are intended to cover different quantities. To make this comparable to CV and nested CV, we report on the coverage of $\err$ and $\err_{XY}$ here, which corresponds to data splitting with refitting. Data splitting without refitting (which seeks to cover the quantity in \eqref{eq:err_xytilde_def}) will typically have better coverage; we observed relatively accurate coverage in the classification examples and worse coverage in the regression examples, but do not explicitly report these results herein.

Scripts reproducing these experiments are available at \url{https://github.com/stephenbates19/nestedcv_experiments}.

\subsection{Classification}

\subsubsection{Low-dimensional logistic regression}
\label{subsec:lowd_log_exp}
We consider the logistic regression data generating model \ref{eq:log_regression_def} with $n=100$ observations and $p=20$ features, sampled as i.i.d. standard Gaussian variables. Due to the rotational symmetry of the features, the only parameter that affects behavior is the signal strength, and we explore models with Bayes error of either 33\% or 23\%. Here, we use (un-regularized) logistic regression as our fitting algorithm.
%The nominal (target)  error rate is 10\%, i.e., 5\% above and below. 
We report the results in Table~\ref{tab:lowd_d_logistic_summary}, finding that nested CV gives coverage much closer to the nominal level. Moreover, the point estimates have slightly less bias. We report the size of the NCV intervals relative to their CV counterparts per instance in Figure~\ref{fig:lowd_log_infl}.

\begin{table}[!h]
\small
\begin{center}
\begin{tabular}{| c | c || c | c || c | c | c | c || c : c | c : c | c : c |}
\hline
%  \multicolumn{2}{|c||}{Setting} & \multicolumn{5}{c||}{Width and Point Estimates} &
%  \multicolumn{6}{c|}{Miscoverage}\\
 
  \multicolumn{2}{|c||}{Setting} & \multicolumn{2}{c||}{Width} & \multicolumn{4}{c||}{Point estimates} &
 \multicolumn{6}{c|}{Miscoverage}\\
 
  \multicolumn{2}{|c||}{} & \multicolumn{2}{c||}{} & \multicolumn{4}{c||}{} &
 \multicolumn{2}{c}{CV} & \multicolumn{2}{c}{NCV}
 & \multicolumn{2}{c|}{DS} \\
 \multicolumn{1}{|c}{Bayes Error} & Target & 
 \multicolumn{1}{c}{NCV} & 
  \multicolumn{1}{c||}{DS} &
  \multicolumn{1}{c}{$\err$} & \multicolumn{1}{c}{CV} & \multicolumn{1}{c}{NCV} & \multicolumn{1}{c||}{DS} & 
 \multicolumn{1}{c}{Hi} & \multicolumn{1}{c}{Lo} &
 \multicolumn{1}{c}{Hi} & \multicolumn{1}{c}{Lo} & \multicolumn{1}{c}{Hi} & \multicolumn{1}{c|}{Lo} \\
 \hline

  %33.2\% & $\err_{XY}$& 1.19 & 39.1\% & 39.8\% & 39.0\% & 11\% & 7\% & 3\% & 6\%\\
  33.2\% & $\err_{XY}$& 1.23 & 2.23 & 39.1\% & 39.6\% & 39.0\% & 40.1\% & 10\% & 8\% & 3\% & 5\% & 7\% & 6\%\\
  %33.2\% & $\err$ & 1.19 & 39.1\% & 39.8\% & 39.0\% & 10\%$\dagger$ & 8\%$\dagger$ & 4\%$\dagger$ & 6\%$\dagger$  \\
 " & $\err$ & " & " & " & " & " & " & 9\% & 8\% & 3\% & 4\% & 6\% & 5\% \\
 \rule{0pt}{4ex} 
 
 %22.5\% & $\err_{XY}$ & 1.14 & 28.4\% & 29.7\% & 29.4\% & 11\% & 7\% & 4\% & 4\% \\
 22.5\% & $\err_{XY}$ & 1.47 & 2.25 & 28.7\% & 30.4\% & 28.1\% & 33.3\% & 11\% & 3\% & 4\% & 1\% & 16\% & 4\%\\
 %22.5\% & $\err$ & 1.14 & 28.4\% & 29.7\% & 29.4\% & 10\%$\dagger$ & 5\%$\dagger$ & 4\%$\dagger$ & 3\%$\dagger$ \\
  " & $\err$ & " & " & " & " & " & " & 10\% & 2\% & 5\% & 0\% & 15\% & 3\%\\
%  \rule{0pt}{4ex} 
 
%  %13.2\% & $\err_{XY}$ & 1.07 & 19.2\% & 19.4\% & 19.6\% & 6\% & 8\% & 6\% & 7\%  \\
%   13.2\% & $\err_{XY}$ & 1.07 & 19.2\% & 19.4\% & 19.6\% & 7\% & 7\% & 6\% & 5\%  \\
%  %13.2\% & $\err$ & 1.07 & 19.2\% & 19.4\% & 19.6\% & 5\%$\dagger$ & 9\%$\dagger$ & 4\%$\dagger$ & 8\%$\dagger$  \\
%  " & $\err$ & " & " & " & " & 5\% & 9\% & 5\% & 7\%  \\
\hline
\end{tabular}
\end{center}
\caption{Performance of cross-validation (CV), nested cross-validation (NCV), and data splitting with refitting (DS) in the low-dimensional logistic regression model from Section~\ref{subsec:lowd_log_exp}. Each row is a setting with a different signal strength, indexed by the Bayes error: the error of the true model. The nominal  total error rate is 10\%, i.e., 5\% above and below. A ``Hi" miscoverage is one where the confidence interval is too large and the point estimate falls below the interval; conversely for a ``Lo" miscoverage.
The standard error in each coverage estimate reported is about 0.5\%. The ``Target'' column indicates the target of coverage---the intervals are always generated identically, but we report the coverage of both $\err$ and $\err_{XY}$.}
\label{tab:lowd_d_logistic_summary}
\end{table}

Next, we return to the question of estimands, as in Section~\ref{sec:estimand_of_cv}. Since we do not have analytical results for the logistic regression model, we explore this in simulation. Here, we consider problems where the Bayes error rate is 22.5\%, and vary $n$ and $p$. We investigate two quantities. First, we investigate the correlation of $\errhat^\cv$ and $\err_{XY}$, and we find that it is small but larger than in the OLS case. See Figure~\ref{fig:lowd_log_target}.
Next, we check whether $\errhat$ has higher precision for $\err$ than for $\err_{XY}$. To this end, we compute the expected value of $|\errhat^\cv - \err|$ and of $|\errhat^\cv - \err_{XY}|$, and plot their relative difference in the right panel of Figure~\ref{fig:lowd_log_target}. We find that the CV point estimate is again slightly more precise as an estimate of $\err$ than of $\err_{XY}$ in this setting.

\begin{figure}
    \centering
    \includegraphics[height = 2in, trim = 0 0 60 0, clip]{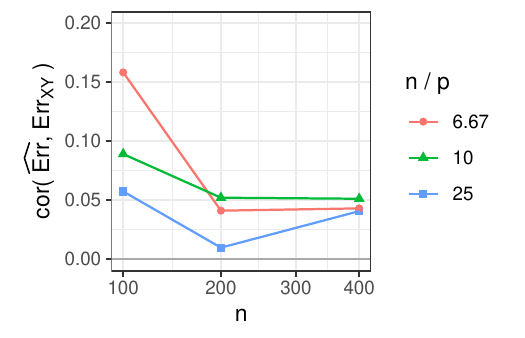}~
    \hspace{.2in}
    \includegraphics[height = 2in]{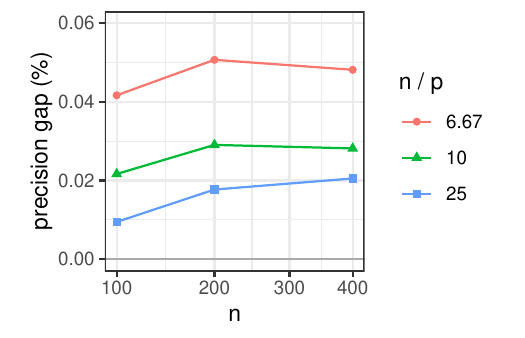}
    \caption{Behavior of cross-validation with a logistic regression model. Left: the correlation between the point estimate and instance-specific error. Right: fraction change in mean absolute deviation of the point estimate with respect to $\err_{XY}$ versus $\err$. }
    \label{fig:lowd_log_target}
\end{figure}

\subsubsection{High-dimensional sparse logistic regression}
\label{subsec:highd_logistic_exp}
We return to the high-dimensional logistic regression model introduced in Section~\ref{subsec:harrell_model}, generalizing slightly. We consider $n \in \{90, 200\}$ with $p = 1000$ features. The feature matrix has standard normal entries with an autoregressive covariance pattern such that adjacent columns have covariance $\rho$. In each case, we take $k=4$ nonzero entries of the covariance matrix and use sparse logistic regression. We report on the results in Table~\ref{tab:highd_logistic_big} and give the width\footnote{The width in Figure~\ref{fig:highd_logistic_infl} is reported relative to the version of cross-validation that holds out 2 folds at a time, since this is what is computed internally during NCV. In table Table~\ref{tab:highd_logistic_big} and elsewhere, we instead report widths relative to the usual $K$-fold CV.} in Figure~\ref{fig:highd_logistic_infl}. Again, NCV gives intervals with coverage much closer to the nominal level.

\begin{figure}[!h]
\captionsetup[subfigure]{justification=centering}
\centering
\begin{subfigure}[b]{.3\textwidth}
\centering
\includegraphics[height = 2in, trim = 30 0 0 0, clip]{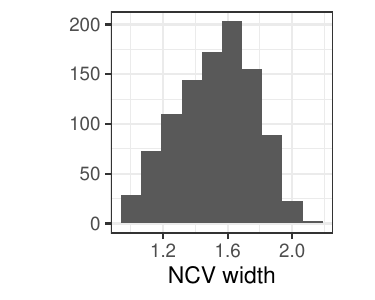}
\subcaption{$n=90, \rho = 0$}
\end{subfigure}
\hspace{.2cm}
\begin{subfigure}[b]{.3\textwidth}
\centering
\includegraphics[height = 2in, trim = 30 0 0 0, clip]{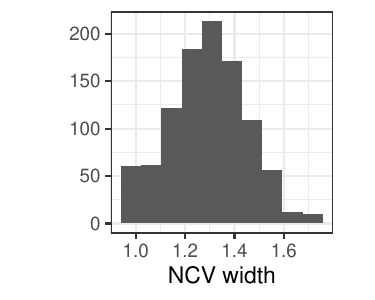}
\subcaption{$n=200, \rho = 0$}
\end{subfigure}
\hspace{.2cm}
\begin{subfigure}[b]{.3\textwidth}
\centering
\includegraphics[height = 2in, trim = 30 0 0 0, clip]{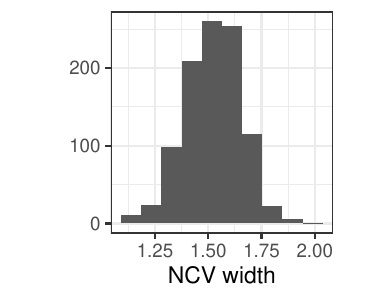}
\subcaption{$n=90, \rho = 0.5$}
\end{subfigure}
\hfill
\caption{Size of the nested CV intervals relative to the size of the na\"ive CV intervals
in the high-dimensional sparse logistic regression experiment from Section~\ref{subsec:highd_logistic_exp}.}
\label{fig:highd_logistic_infl}
\end{figure}

\begin{table}[!h]
\small
\begin{center}
\begin{tabular}{| c | c | c || c | c || c | c | c | c || c : c | c : c | c : c |}
\hline
 \multicolumn{3}{|c||}{Setting} & \multicolumn{2}{c||}{Width} & \multicolumn{4}{c||}{Point estimates} &
 \multicolumn{6}{c|}{Miscoverage}\\
 
  \multicolumn{3}{|c||}{} & \multicolumn{2}{c||}{} & \multicolumn{4}{c||}{} & 
 \multicolumn{2}{c}{CV} & \multicolumn{2}{c}{NCV} &
 \multicolumn{2}{c|}{DS} \\
 
 \multicolumn{1}{|c}{n} & \multicolumn{1}{c}{$\rho$} & Target & 
 \multicolumn{1}{c}{NCV} & \multicolumn{1}{c||}{DS} &
 \multicolumn{1}{c}{Bayes error} & \multicolumn{1}{c}{$\err$} & \multicolumn{1}{c}{CV} & \multicolumn{1}{c||}{NCV} & 
 \multicolumn{1}{c}{Hi} & \multicolumn{1}{c}{Lo} &
 \multicolumn{1}{c}{Hi} & \multicolumn{1}{c}{Lo} &
 \multicolumn{1}{c}{Hi} & \multicolumn{1}{c|}{Lo} \\
 \hline
 % 90 & 0 & 1.54 & 22\% & 40.2\% & 40.7\% & 40.4\% & 16\% & 13\% & 6\% & 5\% \\
  90 & 0 & $\err_{XY}$ & 1.53 & 2.24 & 22\% & 41.3\% & 41.8\% & 41.1\% & 16\% & 12\% & 6\% & 7\% & 9\% & 7\% \\
 % 90 & 0 & 1.54 & 22\% & 40.2\% & 40.7\% & 40.4\% & 18\%$\dagger$ & 14\%$\dagger$ & 8\%$\dagger$ & 7\%$\dagger$ \\
  " & " & $\err$ & " & " & " & " & " & " & 17\% & 13\% & 6\% & 8\% & 11\% & 9\%\\
   \rule{0pt}{4ex} 
   
 % 200 & 0 & 1.29 & 22\% & 25.7\% & 26.7\% & 25.5\% & 13\% & 7\% & 2\% & 5\% \\
  200 & 0 & $\err_{XY}$ & 1.66 & 2.26 & 22\% & 25.6\% & 26.7\% & 25.6\% & 14\% & 7\% & 3\% & 5\%& 9\% & 4\%\\
 % 200 & 0 & 1.29 & 22\% & 25.7\% & 26.7\% & 25.5\% & 15\%$\dagger$ & 10\%$\dagger$ & 3\%$\dagger$ & 6\%$\dagger$ \\
  " & " & $\err$ & " & " & " & " & " & " & 15\% & 7\% & 4\% & 6\% & 8\% & 5\%\\
  \rule{0pt}{4ex} 
  
 % 90 & 0.5 & 1.52 & 13\% & 25.6\% & 27.5\% & 25.5\% & 19\% & 11\% & 4\% & 7\% \\
  90 & 0.5 & $\err_{XY}$ & 1.80 & 2.25 & 13\% & 25.6\% & 27.5\% & 28.6\% & 20\% & 10\% & 5\% & 8\% & 15\% & 4\%\\
 % 90 & 0.5 & 1.52 & 13\% & 25.6\% & 27.5\% & 25.5\% & 20\%$\dagger$ & 11\%$\dagger$ & 6\%$\dagger$ & 9\%$\dagger$ \\
  " & " & $\err$ & " & " & " & " & " & " & 20\% & 11\% & 7\% & 9\% & 14\% & 3\% \\
\hline
\end{tabular}
\end{center}
\caption{Performance of cross-validation (CV), nested cross-validation (NCV), and data splitting (DS) in the high-d logistic regression model from Section~\ref{subsec:highd_logistic_exp}. The nominal (target) error rate is 10\%, i.e., 5\% above and below.  Other details as in Table~\ref{tab:lowd_d_logistic_summary}.}
\label{tab:highd_logistic_big}
\end{table}

\subsection{Regression}

\subsubsection{Low-dimensional linear model}
\label{subsec:lowd_lin_exp}

We next consider an OLS example. We take $X \in \mathbb{R}^{n\times p}$ with $p=20$ comprised of i.i.d. $\mathcal{N}(0,1)$. Further, we generate a response from the standard linear model:
\begin{equation*}
Y = X \theta + \epsilon
\end{equation*}
where $\epsilon$ is likewise i.i.d. $\mathcal{N}(0,1)$. We use OLS to estimate $\theta$. Note that by Lemma~\ref{lem:ols_equivariance}, the choice of $\theta$ does not affect the coverage rate of CV. The same argument shows that the choice of $\theta$ will not affect the coverage rate of nested CV, so we can take $\theta$ to be 0 without loss of generality. Similarly, both CV and NCV are unchanged when $X$ is transformed by an full-rank linear operator, so the results in this section would remain unchanged for Gaussian features with any full-rank correlation structure. 
We report the coverage of nested CV in Figure~\ref{fig:ols_covg_adj}. We find that this scheme works well and has good coverage for any $n$, overcovering somewhat for very small $n$. By contrast, na\"ive CV has poor coverage until $n$ is 400. In Figure~\ref{fig:ols_infl} we report on the width of the NCV intervals relative to their CV counterparts---the usual ratio is not that large for samples sizes of $n=100$ or greater.

\begin{figure}[!h]
\begin{center}
\includegraphics[height = 2.5in]{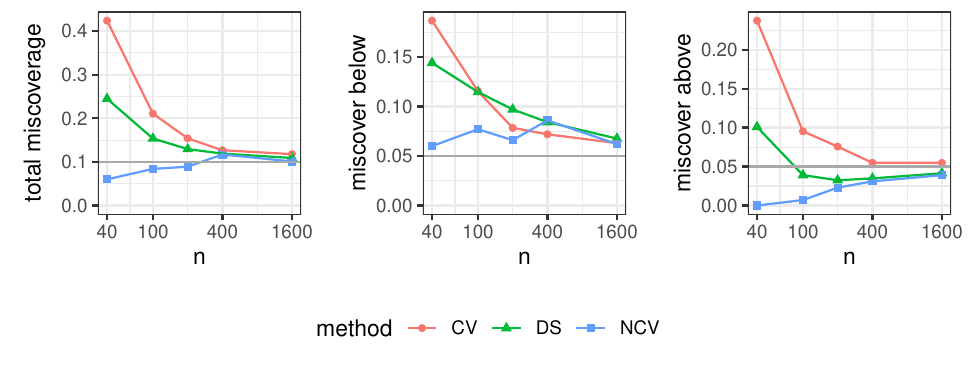}
\end{center}
\caption{Coverage of CV, data splitting, and nested CV in the OLS case.}
\label{fig:ols_covg_adj}
\end{figure}

\subsubsection{A high-dimensional sparse linear model}
\label{sec:subsec:highd_lin_exp}
We continue as in the previous experiment, but with $n \in \{50, 100\}$ and $p = 500$. We choose $\theta$ to have 4 nonzero entries of equal strength such that 
$$
\frac{\var(X \theta)}{\var(\epsilon)} = 4.
$$
Since $p > n$, we take the lasso estimator with a fixed penalty parameter. The parameter is chosen by minimizing the cross-validation estimate of prediction error on a single independent run, and then this value is fixed for the experimental replicates. We report on the results in Table~\ref{tab:highd_linear}, again finding the NCV has better coverage than CV, although both struggle when $n=50$.
The ratio the nested CV interval width to the na\"ive CV interval width is relatively stable across observations, see Figure~\ref{fig:highd_lasso_infl}.

\begin{table}[!h]
\small
\begin{center}
\begin{tabular}{| c | c || c | c || c | c | c | c | c || c : c | c : c | c : c |}
\hline
%  \multicolumn{2}{|c||}{Setting} & \multicolumn{5}{c||}{Width and Point Estimates} &
%  \multicolumn{6}{c|}{Miscoverage}\\
 
  \multicolumn{2}{|c||}{Setting} & \multicolumn{2}{c||}{Width} & \multicolumn{5}{c||}{Point estimates} &
 \multicolumn{6}{c|}{Miscoverage}\\
 
  \multicolumn{2}{|c||}{} & \multicolumn{2}{c||}{} & \multicolumn{5}{c||}{} &
 \multicolumn{2}{c}{CV} & \multicolumn{2}{c}{NCV}
 & \multicolumn{2}{c|}{DS} \\
 \multicolumn{1}{|c}{n} & Target & 
 \multicolumn{1}{c}{NCV} & 
  \multicolumn{1}{c||}{DS} &
 \multicolumn{1}{c}{Bayes Error} & \multicolumn{1}{c}{$\err$} & \multicolumn{1}{c}{CV} & \multicolumn{1}{c}{NCV} & \multicolumn{1}{c||}{DS} & 
 \multicolumn{1}{c}{Hi} & \multicolumn{1}{c}{Lo} &
 \multicolumn{1}{c}{Hi} & \multicolumn{1}{c}{Lo} & \multicolumn{1}{c}{Hi} & \multicolumn{1}{c|}{Lo} \\
 \hline

% \hline
%  n & Inflation & Bayes err & $\err$ & CV est & NCV est & CV Hi & CV Lo & NCV Hi & NCV Lo \\
% \hline

  50  & $\err_{XY}$ & 1.27 & 2.33 & 1 & 2.37 & 2.61 & 2.37 & 3.08 & 14\% & 8\% & 9\% & 3\% & 8\% & 14\%\\
  "  & $\err$ & " & " & " & " & " & " & " & 9\% & 7\% & 5\% & 1\% & 6\% & 16\% \\
   \rule{0pt}{4ex} 
   
  100  & $\err_{XY}$ & 1.89 & 2.20 & 1 & 1.54 & 1.61 & 1.55 & 1.63 & 13\% & 14\% & 3\% & 2\% & 4\% & 11\% \\
  "  & $\err$ & " & " & " & " & " & " & " & 14\% & 16\% & 4\% & 4\% & 5\% & 12\% \\
\hline
\end{tabular}
\end{center}
\caption{Performance of cross-validation (CV),nested cross-validation (NCV), and data splitting (DS) in the high-d linear regression model. The nominal (target) error rate is 10\%, i.e., 5\% above and below.  Other details as in Table~\ref{tab:lowd_d_logistic_summary}.}
\label{tab:highd_linear}
\end{table}

\section{Real data examples}
\label{sec:real_data}
Lastly, we evaluate the nested CV procedure on real data sets from the UCI repository \citep{Dua:2019}. In each case, we repeatedly subsample a small number of observations, perform nested CV on the subsample, and then use the many remaining observations to determine the accuracy of the fitted model. We consider the following data sets:
\begin{itemize}
    \item {\bf Communities and crimes (CC)}. This data set is comprised of measurements of 1994  communities in the US. We predict the crime rate of each community, a real number normalized to be between 0 and 1, based on 99 demographic features of the community.
    \item {\bf Crop mapping (crp)}. This data set is comprised of optical radar measurements of cropland in Manitoba in 2012. We filter the data set to contain two classes, corn and oats, and then do binary classification based on 174 features. Here, we add a small amount of label noise so that the best possible classifier has a misclassification rate of about 5\%.
\end{itemize}
We again use sparse linear or logistic regression as our fitting algorithm. The results are reported in Table~\ref{tab:real_data_summary}. We find that nested CV generally has coverage that is much closer to the nominal rate than na\"ive CV. Data splitting has poor coverage in this case due to the small sample size, but is significantly better with $ = 100$ samples than with $n = 50$ samples.

\begin{table}[!h]
\small
\begin{center}
\begin{tabular}{| c | c | c || c | c || c | c | c | c || c : c | c : c | c : c |}
\hline
 \multicolumn{3}{|c||}{Setting} & \multicolumn{2}{c||}{Width} & \multicolumn{4}{c||}{Point estimates} &
 \multicolumn{6}{c|}{Miscoverage}\\
 
  \multicolumn{3}{|c||}{} & \multicolumn{2}{c||}{} & \multicolumn{4}{c||}{} & 
 \multicolumn{2}{c}{CV} & \multicolumn{2}{c}{NCV} &
 \multicolumn{2}{c|}{DS} \\
 
 \multicolumn{1}{|c}{data} & \multicolumn{1}{c}{$n$} & Target & 
 \multicolumn{1}{c}{NCV} & \multicolumn{1}{c||}{DS} &
 \multicolumn{1}{c}{$\err$} & \multicolumn{1}{c}{CV} & \multicolumn{1}{c}{NCV} &  \multicolumn{1}{c||}{DS} &
 \multicolumn{1}{c}{Hi} & \multicolumn{1}{c}{Lo} &
 \multicolumn{1}{c}{Hi} & \multicolumn{1}{c}{Lo} &
 \multicolumn{1}{c}{Hi} & \multicolumn{1}{c|}{Lo} \\
 \hline

% \hline
%  Data & alg & n & NCV wid & $\err$ & CV est & NCV est & CV Lo & CV Hi & NCV Lo & NCV Hi \\
%  \hline
 
 CC & 50 & $\err_{XY}$ & 2.82 & 1.77 & 0.029 & 0.031 & 0.029 & .034 & 4\% & 20\% & 1\% & 13\% & 1\% & 33\%\\
 " & " & $\err$ & " & " & " & " & " & " & 2\% & 22\% & 0\% & 12\% & 1\% & 37\% \\
 \rule{0pt}{4ex} 
 
 CC & 100 & $\err_{XY}$ & 1.46 & 2.05 & 0.023 & 0.024 & 0.023 & .025 & 4\% & 13\% & 2\% & 7\% & 1\% & 24\%\\
 " & " & $\err$ & " & " & " & " & " & " & 2\% & 12\% & 1\% & 4\% & 1\% & 24\%\\
 \rule{0pt}{4ex} 
 
 %crp & 50 & $\err_{XY}$ & 1.25 & 12.7\% & 13.1\% & 12.7\% & 6\% & 14\% & 2\% & 10\% \\
 crp & 50 & $\err_{XY}$ & 1.21 & 1.89 & 10.6\% & 10.7\% & 10.6\% & 10.8\% & 6\% & 8\% & 2\% & 6\% & 4\% & 31\% \\
 %crp &  50 & $\err$ & 1.25 & 12.7\% & 13.1\% & 12.7\% & 9\%$\dagger$ & 20\%$\dagger$ & 1\%$\dagger$ & 11\%$\dagger$ \\
 " &  " & $\err$ & " & " & " & " & " & " & 7\% & 12\% & 3\% & 8\% & 2\% & 31\%\\
 \rule{0pt}{4ex} 
 
 %crp &  100 & $\err_{XY}$ & 1.11 & 9.5\% & 9.7\% & 9.5\% & 4\% & 9\% & 2\% & 8\% \\
 crp &  100 & $\err_{XY}$ & 1.52 & 2.00 & 9.5\% & 9.7\% & 9.5\% & 9.4\% & 6\% & 6\% & 4\% & 5\% & 4\% & 15\%\\
 %crp &  100 & $\err$ & 1.11 & 9.5\% & 9.7\% & 9.5\% & 5\%$\dagger$  & 9\%$\dagger$  & 4\%$\dagger$  & 9\%$\dagger$  \\
 " &  " & $\err$ & " & " & " & " & " & " & 8\% & 9\%  & 5\%  & 7\% & 4\% & 15\% \\
\hline
\end{tabular}
\end{center}
\caption{Performance of cross-validation (CV),nested cross-validation (NCV), and data splitting (DS) with the real data sets.
The nominal (target) error rate is 10\%, i.e., 5\% above and below. Other details as in Table~\ref{tab:lowd_d_logistic_summary}.}
\label{tab:real_data_summary}
\end{table}

\section{Discussion}
Our investigation had two main main components. First, we discussed point estimates of prediction error via subsampling techniques. Our primary result is that common estimates of prediction error---cross-validation, bootstrap, data splitting, and covariance penalties---should be viewed as estimates of the \emph{average} prediction error, averaged across other hypothetical data sets from the same distribution. 
The formal results here were all for the special case of the linear model using unregularized OLS for model-fitting, although we also saw similar behavior in simulation for logistic regression; see Figure~\ref{fig:lowd_log_target}. A further important question is how regularization affects this behavior. In an additional experiment, we find that $\errhat$ does track $\err_{XY}$, albeit weakly, when there is regularization; see Appendix~\ref{app:regularization_cor}. We look forward to future work explaining the behavior of cross-validation and other estimates of prediction error in these settings.
 
Second, we discussed inference for cross-validation, deriving an estimator for the MSE of the CV point estimate, nested CV. The nested CV scheme has consistently superior coverage compared to na\"ive cross-validation confidence intervals, which makes it an appealing choice for providing confidence intervals for prediction error. 
% Perhaps surprisingly, we also show that simple data splitting does not give valid confidence intervals (even asymptotically) when one refits the model. Thus, if one wants valid confidence intervals for prediction error, we can only recommend either data splitting \emph{without} refitting the model (which is viable when one has ample data), or nested CV. 
Nonetheless, we wish to be clear that nested CV is more computationally intensive than standard CV---we use about $1000$ times more model fits per example because of the repeated splitting. For example, in the logistic regression example from Section~\ref{subsec:harrell_model}, nested CV takes about $10$ seconds on a personal computer. 
% Fortunately nested CV is embarrassingly parallel, so that this computation can be sped up considerably. 

A fundamental open question is to understand under what conditions the standard CV intervals will be badly behaved, making the nested CV computations necessary. Roughly speaking, we expect the standard CV intervals to  perform better when $n / p$ is larger and when more regularization is used. In our experiments, we saw that even in the mundane linear model with $n / p = 10$, the miscoverage rate of standard CV was about 50\% larger than the nominal rate. As $n$ increases, however, the violation decreases. Moreover, the asymptotic results in \citet{Austern} and \citet{bayle2020} show that the coverage is correct in the $p$ fixed, $n \to \infty$ limiting regime. The stability conditions therein may also be able to shed light on settings with small samples or high-dimensions. We look forward to future work in this direction.
 
To conclude, we point out several additional future directions. First, one could adapt nested CV to cases with dependent data, as is done for standard CV \citep{rabinowicz2020cross}. Second, we note that the general leave-out style strategy of cross-validation can also be used to ``fill-in'' data for downstream use. Examples include {\em pre-validation} \citep{TE2002, hofling2008} and {\em cross-fitting}  \cite[e.g,][]{newey2018crossfitting}). Further, confidence intervals for prediction accuracy are used to evaluate variable importance \citep{williamson2021nonparametric, zhang2020floodgate}. We suspect that our nested cross-validation proposal could be adapted to improve the accuracy of these and related approaches.
Lastly, cross-validation is often used to compare regression procedures (such as when selecting the value of a tuning parameter); see Section~\ref{subsec:related_work}. We anticipate that nested CV can be extended to give valid confidence intervals for the difference in prediction error between two models.

%\paragraph{Grouped vs. ungrouped cross-validation.} This work focused on ungrouped cross-validation, where the estimate standard deviation comes from pooling all individual errors. A common alternative is grouped cross-validation, where the estimate of standard deviation standard deviation is based on the standard deviation of the mean of the errors in each fold. In our experiments with nested CV, we found that the former is much more stable, presumable because the pooled estimate of standard deviation is more stable. While we also experimented with several grouped versions of the nested CV algorithm, we found that the ungrouped version had better performance, and we only reported on that version in this manuscript. The authors suspect that in general, ungrouped cross-validation will be more stable and probably preferable, but we leave such an investigation to future work.

\medskip
\medskip

An \texttt{R} package implementing nested CV is available at \url{https://github.com/stephenbates19/nestedcv}.

\section*{Acknowledgements}
The authors would like to acknowledge Frank Harrell for a seminar  and personal correspondence,  alerting them to the miscoverage of cross-validation in the high-dimensional logistic regression model. We would like to thank  Michael Celentano, Ryan Tibshirani, Larry Wasserman, Lester Mackey, Adam Smoulder, Alexandre Bayle and three anonymous reviewers/editors for helpful comments on earlier versions of this manuscript.  We would especially like to thank Bradley Efron for his comments on this work and for his fundamental contributions to this general area over the last 40 years.  S.~B.~was partially supported by a Ric Weiland Graduate Fellowship. 
T.H. was partially supported by grants DMS-2013736 and IIS
1837931 from the National Science Foundation, and grant 5R01 EB
001988-21 from the National Institutes of Health.
R.T. was supported by the National Institutes of Health (5R01
EB001988-16) and the National Science Foundation (19 DMS1208164).

\newpage
%\nocite{*}
\bibliographystyle{apalike}
\bibliography{cv_cis}

\begin{thebibliography}{}

\bibitem[{Akaike}, 1974]{akaike1974aic}
{Akaike}, H. (1974).
\newblock A new look at the statistical model identification.
\newblock {\em IEEE Transactions on Automatic Control}, 19(6):716--723.

\bibitem[Allen, 1974]{Allen74}
Allen, D. (1974).
\newblock The relationship between variable selection and data augmentation and
  a method of prediction.
\newblock {\em Technometrics}, 16:125--7.

\bibitem[Arlot and Celisse, 2010]{arlot2010}
Arlot, S. and Celisse, A. (2010).
\newblock A survey of cross-validation procedures for model selection.
\newblock {\em Statistics Surveys}, 4:40--79.

\bibitem[Austern and Zhou, 2020]{Austern}
Austern, M. and Zhou, W. (2020).
\newblock Asymptotics of cross-validation.
\newblock {\em arXiv preprint}.
\newblock arXiv:2001.11111.

\bibitem[Bayle et~al., 2020]{bayle2020}
Bayle, P., Bayle, A., Mackey, L., and Janson, L. (2020).
\newblock Cross-validation confidence intervals for test error.
\newblock In {\em Conference on Neural Information Processing Systems}.

\bibitem[Bengio and Grandvalet, 2004]{bengio2004no}
Bengio, Y. and Grandvalet, Y. (2004).
\newblock No unbiased estimator of the variance of k-fold cross-validation.
\newblock {\em Journal of Machine Learning Research}, 5:1089–1105.

\bibitem[Benkeser et~al., 2020]{Benkeser2020improved}
Benkeser, D., Petersen, M., and van~der Laan, M.~J. (2020).
\newblock Improved small-sample estimation of nonlinear cross-validated
  prediction metrics.
\newblock {\em Journal of the American Statistical Association},
  115(532):1917--1932.

\bibitem[Blum et~al., 1999]{Blum1999BeatingTH}
Blum, A., Kalai, A.~T., and Langford, J. (1999).
\newblock Beating the hold-out: bounds for k-fold and progressive
  cross-validation.
\newblock In {\em Proceedings of the Twelfth Annual Conference on Learning
  Theory}.

\bibitem[Breiman, 1996]{Breiman1996OOB}
Breiman, L. (1996).
\newblock Out-of-bag estimation.
\newblock {\em unpublished note}.
\newblock https://www.stat.berkeley.edu/ ~breiman/OOBestimation.pdf.

\bibitem[Celisse and Guedj, 2016]{celisse2016stability}
Celisse, A. and Guedj, B. (2016).
\newblock Stability revisited: new generalisation bounds for the leave-one-out.
\newblock {\em arXiv preprint}.
\newblock arXiv:1608.06412.

\bibitem[{Dietterich}, 1998]{Dietterich1998}
{Dietterich}, T.~G. (1998).
\newblock Approximate statistical tests for comparing supervised classification
  learning algorithms.
\newblock {\em Neural Computation}, 10(7):1895--1923.

\bibitem[Dua and Graff, 2017]{Dua:2019}
Dua, D. and Graff, C. (2017).
\newblock {UCI} machine learning repository.

\bibitem[Dudoit and {van der Laan}, 2005]{dudoit2005}
Dudoit, S. and {van der Laan}, M.~J. (2005).
\newblock Asymptotics of cross-validated risk estimation in estimator selection
  and performance assessment.
\newblock {\em Statistical Methodology}, 2(2):131 -- 154.

\bibitem[Efron, 1983]{efron1983estimating}
Efron, B. (1983).
\newblock Estimating the error rate of a prediction rule: Improvement on
  cross-validation.
\newblock {\em Journal of the American Statistical Association},
  78(382):316--331.

\bibitem[Efron, 1986]{efron1986}
Efron, B. (1986).
\newblock How biased is the apparent error rate of a prediction rule?
\newblock {\em Journal of the American Statistical Association},
  81(394):461--470.

\bibitem[Efron, 2004]{efron2004estimation}
Efron, B. (2004).
\newblock The estimation of prediction error.
\newblock {\em Journal of the American Statistical Association},
  99(467):619--632.

\bibitem[Efron and Tibshirani, 1997]{Efron1997}
Efron, B. and Tibshirani, R. (1997).
\newblock {Improvements on cross-validation: The .632+ bootstrap method}.
\newblock {\em Journal of the American Statistical Association},
  92(438):548--560.

\bibitem[Efron and Tibshirani, 1993]{efron1993introduction}
Efron, B. and Tibshirani, R.~J. (1993).
\newblock {\em An Introduction to the Bootstrap}.
\newblock Chapman \& Hall/CRC.

\bibitem[Fong and Holmes, 2020]{Fong2020marginal}
Fong, E. and Holmes, C.~C. (2020).
\newblock {On the marginal likelihood and cross-validation}.
\newblock {\em Biometrika}, 107(2):489--496.

\bibitem[Geisser, 1975]{geisser1975}
Geisser, S. (1975).
\newblock The predictive sample reuse method with applications.
\newblock {\em Journal of the American Statistical Association},
  70(350):320--328.

\bibitem[Groves and Rothenberg, 1969]{groves1969note}
Groves, T. and Rothenberg, T. (1969).
\newblock {A note on the expected value of an inverse matrix}.
\newblock {\em Biometrika}, 56(3):690--691.

\bibitem[Hastie et~al., 2019]{hastie2019surprises}
Hastie, T., Montanari, A., Rosset, S., and Tibshirani, R.~J. (2019).
\newblock Surprises in high-dimensional ridgeless least squares interpolation.
\newblock {\em arXiv preprint}.
\newblock arXiv:1903.08560.

\bibitem[Hastie et~al., 2009]{hastie01statisticallearning}
Hastie, T., Tibshirani, R., and Friedman, J. (2009).
\newblock {\em The Elements of Statistical Learning, 2nd Ed.}
\newblock Springer, New York.

\bibitem[Höfling and Tibshirani, 2008]{hofling2008}
Höfling, H. and Tibshirani, R. (2008).
\newblock {A study of pre-validation}.
\newblock {\em The Annals of Applied Statistics}, 2(2):643 -- 664.

\bibitem[Kale et~al., 2011]{kale2011}
Kale, S., Kumar, R., and Vassilvitskii, S. (2011).
\newblock Cross-validation and mean-square stability.
\newblock In {\em Proceedings of Innovations in Computer Science}.

\bibitem[Kumar et~al., 2013]{pmlr-v28-kumar13a}
Kumar, R., Lokshtanov, D., Vassilvitskii, S., and Vattani, A. (2013).
\newblock Near-optimal bounds for cross-validation via loss stability.
\newblock In {\em Proceedings of the 30th International Conference on Machine
  Learning}.

\bibitem[LeDell et~al., 2015]{ledell2015}
LeDell, E., Petersen, M., and van~der Laan, M. (2015).
\newblock Computationally efficient confidence intervals for cross-validated
  area under the {ROC} curve estimates.
\newblock {\em Electronic Journal of Statistics}, 9(1):1583--1607.

\bibitem[Lei, 2019]{Lei2019}
Lei, J. (2019).
\newblock Cross-validation with confidence.
\newblock {\em Journal of the American Statistical Association}.
\newblock doi:10.1080/01621459.2019.1672556.

\bibitem[Liu and Dobriban, 2020]{Liu2020Ridge}
Liu, S. and Dobriban, E. (2020).
\newblock Ridge regression: Structure, cross-validation, and sketching.
\newblock In {\em International Conference on Learning Representations}.

\bibitem[Mallows, 1973]{mallows1973comments}
Mallows, C.~L. (1973).
\newblock Some comments on {Cp}.
\newblock {\em Technometrics}, 15(4):661--675.

\bibitem[Markatou et~al., 2005]{Markatou2005}
Markatou, M., Tian, H., Biswas, S., and Hripcsak, G. (2005).
\newblock Analysis of variance of cross-validation estimators of the
  generalization error.
\newblock {\em Journal of Machine Learning Research}, 6:1127–1168.

\bibitem[Nadeau and Bengio, 2003]{nadeau2003}
Nadeau, C. and Bengio, Y. (2003).
\newblock Inference for the generalization error.
\newblock {\em Machine Learning}, 52(3):239--281.

\bibitem[Newey and Robins, 2018]{newey2018crossfitting}
Newey, W.~K. and Robins, J.~R. (2018).
\newblock Cross-fitting and fast remainder rates for semiparametric estimation.
\newblock {\em arXiv preprint}.
\newblock arXiv:1801.09138.

\bibitem[Rabinowicz and Rosset, 2020]{rabinowicz2020cross}
Rabinowicz, A. and Rosset, S. (2020).
\newblock Cross-validation for correlated data.
\newblock {\em Journal of the American Statistical Association}, 0(0):1--14.

\bibitem[Reeves et~al., 2019]{reeves19a}
Reeves, G., Xu, J., and Zadik, I. (2019).
\newblock The all-or-nothing phenomenon in sparse linear regression.
\newblock In Beygelzimer, A. and Hsu, D., editors, {\em Proceedings of the
  Thirty-Second Conference on Learning Theory}, volume~99 of {\em Proceedings
  of Machine Learning Research}, pages 2652--2663.

\bibitem[Riley et~al., 2021]{riley2021penalization}
Riley, R.~D., Snell, K.~I., Martin, G.~P., Whittle, R., Archer, L., Sperrin,
  M., and Collins, G.~S. (2021).
\newblock Penalization and shrinkage methods produced unreliable clinical
  prediction models especially when sample size was small.
\newblock {\em Journal of Clinical Epidemiology}, 132:88--96.

\bibitem[Rosset and Tibshirani, 2020]{Rosset2020}
Rosset, S. and Tibshirani, R.~J. (2020).
\newblock From fixed-x to random-x regression: Bias-variance decompositions,
  covariance penalties, and prediction error estimation.
\newblock {\em Journal of the American Statistical Association},
  115(529):138--151.

\bibitem[Schwarz, 1978]{schwarz1978bic}
Schwarz, G. (1978).
\newblock {Estimating the Dimension of a Model}.
\newblock {\em The Annals of Statistics}, 6(2):461 -- 464.

\bibitem[Shao, 1993]{shao1993linear}
Shao, J. (1993).
\newblock Linear model selection by cross-validation.
\newblock {\em Journal of the American Statistical Association},
  88(422):486--494.

\bibitem[Sivula et~al., 2020]{sivula2020uncertainty}
Sivula, T., Magnusson, M., and Vehtari, A. (2020).
\newblock Uncertainty in bayesian leave-one-out cross-validation based model
  comparison.
\newblock {\em arXiv preprint}.
\newblock arXiv:2008.10296.

\bibitem[Stein, 1981]{stein1981estimating}
Stein, C.~M. (1981).
\newblock {Estimation of the Mean of a Multivariate Normal Distribution}.
\newblock {\em The Annals of Statistics}, 9(6):1135 -- 1151.

\bibitem[Stoica et~al., 1986]{stoica1986model}
Stoica, P., Eykhoff, P., Janssen, P., and Soderstrom, T. (1986).
\newblock Model-structure selection by cross-validation.
\newblock {\em International Journal of Control}, 43(6):1841--1878.

\bibitem[Stone, 1977]{stone1977}
Stone, M. (1977).
\newblock Cross-validatory choice and assessment of statistical predictions.
\newblock {\em Journal of the Royal Statistical Society. Series B
  (Methodological)}, 36(2):111--147.

\bibitem[Tibshirani and Efron, 2002]{TE2002}
Tibshirani, R. and Efron, B. (2002).
\newblock Pre-validation and inference in microarrays.
\newblock {\em Statistical Applications in Genetics and Molecular Biology},
  1(1):1--20.

\bibitem[Tibshirani and Tibshirani, 2009]{Tibshirani2009}
Tibshirani, R.~J. and Tibshirani, R. (2009).
\newblock {A bias correction for the minimum error rate in cross-validation}.
\newblock {\em Annals of Applied Statistics}, 3(2):822--829.

\bibitem[Varma and Simon, 2006]{varma2006bias}
Varma, S. and Simon, R. (2006).
\newblock Bias in error estimation when using cross-validation for model
  selection.
\newblock {\em BMC Bioinformatics}, 7(1):91.

\bibitem[von Rosen, 1988]{vanRosen1988moments}
von Rosen, D. (1988).
\newblock Moments for the inverted wishart distribution.
\newblock {\em Scandinavian Journal of Statistics}, 15(2):97--109.

\bibitem[Wager, 2020]{Wager2019}
Wager, S. (2020).
\newblock Cross-validation, risk estimation, and model selection: Comment on a
  paper by {Rosset} and {Tibshirani}.
\newblock {\em Journal of the American Statistical Association},
  115(529):157--160.

\bibitem[Williamson et~al., 2021]{williamson2021nonparametric}
Williamson, B.~D., Gilbert, P.~B., Carone, M., and Simon, N. (2021).
\newblock Nonparametric variable importance assessment using machine learning
  techniques.
\newblock {\em Biometrics}, 77(1):9--22.

\bibitem[Xu and Liang, 2001]{xu2001}
Xu, Q.-S. and Liang, Y.-Z. (2001).
\newblock Monte carlo cross validation.
\newblock {\em Chemometrics and Intelligent Laboratory Systems}, 56(1):1 -- 11.

\bibitem[Yang, 2007]{yang2007consistency}
Yang, Y. (2007).
\newblock {Consistency of cross validation for comparing regression
  procedures}.
\newblock {\em The Annals of Statistics}, 35(6):2450 -- 2473.

\bibitem[Yousef, 2019]{yousef2019}
Yousef, W.~A. (2019).
\newblock Estimating the standard error of cross-validation-based estimators of
  classification rules performance.
\newblock {\em arXiv preprint}.
\newblock arXiv:1908.00325.

\bibitem[Yousef, 2020]{yousef2020leisurely}
Yousef, W.~A. (2020).
\newblock A leisurely look at versions and variants of the cross validation
  estimator.
\newblock {\em arXiv preprint}.
\newblock arXiv:1907.13413.

\bibitem[Yu, 2009]{YU2009vst}
Yu, G. (2009).
\newblock Variance stabilizing transformations of poisson, binomial and
  negative binomial distributions.
\newblock {\em Statistics and Probability Letters}, 79(14):1621--1629.

\bibitem[Zhang and Janson, 2020]{zhang2020floodgate}
Zhang, L. and Janson, L. (2020).
\newblock Floodgate: Inference for model-free variable importance.
\newblock {\em arXiv preprint}.
\newblock arXiv:2007.01283.

\bibitem[Zhang, 1993]{zhang1993model}
Zhang, P. (1993).
\newblock {Model Selection Via Multifold Cross Validation}.
\newblock {\em The Annals of Statistics}, 21(1):299 -- 313.

\bibitem[Zhang, 1995]{zhang1995}
Zhang, P. (1995).
\newblock Assessing prediction error in non-parametric regression.
\newblock {\em Scandinavian Journal of Statistics}, 22(1):83--94.

\end{thebibliography}

\appendix
\renewcommand\thefigure{\thesection.\arabic{figure}} \renewcommand\thetable{\thesection.\arabic{table}}

\section{Results for bootstrap estimates of prediction error}
\label{app:bootstrap_theorems}

In this section, we give results parallel to those in Section~\ref{sec:estimand_of_cv} for bootstrap estimates of prediction error. We consider two bootstrap estimators: the .632 estimator and the out-of-bag (OOB) error estimator \citep{efron1983estimating, Efron1997, Breiman1996OOB}. In the interest of brevity, we reference the definitions of these two estimators from \citet{Efron1997}: we will use $\errhat^\oob$ to denote the out-of-bag estimator defined in equation (17) from \citet{Efron1997} and $\errhat^\boot$ to denote the .632 estimator defined in equation (24) from \citet{Efron1997}. In both cases, we will assume that the underlying model fitting routine is OLS.

\begin{lemma}
\label{lem:ols_boot_equivariance}
When using OLS as the fitting algorithm, $\errhat^\boot$ and $\errhat^\oob$ are linearly invariant.
\end{lemma}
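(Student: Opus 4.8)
The plan is to reduce the claim to the single fact, already exploited in the proof of Lemma~\ref{lem:ols_equivariance}, that every individual squared-error loss term computed from an OLS fit is unchanged when the responses are shifted by $y_i \mapsto y_i + x_i^\top \kappa$. Both $\errhat^\oob$ and $\errhat^\boot$ are, for a fixed realization of the auxiliary randomness $U$ (which encodes the bootstrap resampling), deterministic functions of such loss terms together with the resampling multiplicities and out-of-bag membership indicators; the latter depend only on $U$, not on the response values. So it suffices to show each constituent loss term is invariant and then observe that a fixed function of invariant quantities is invariant.

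First I would establish the key invariance for a single bootstrap resample. Write $\tilde{y}_j = y_j + x_j^\top \kappa$. A bootstrap sample indexed by $b$ selects points with multiplicities $w^{(b)}_1, \dots, w^{(b)}_n$ determined by $U$, and the corresponding OLS fit solves the weighted least-squares problem $\hat{\theta}^{(b)} = \argmin_\theta \sum_j w^{(b)}_j (y_j - x_j^\top \theta)^2$. Substituting $\tilde{y}_j$ gives $\sum_j w^{(b)}_j (y_j - x_j^\top(\theta - \kappa))^2$, whose minimizer is $\hat{\theta}^{(b)} + \kappa$. Hence the loss of this fit on any point $i$ transforms as $(\tilde{y}_i - x_i^\top(\hat{\theta}^{(b)} + \kappa))^2 = (y_i - x_i^\top \hat{\theta}^{(b)})^2$, i.e., it is unchanged. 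The same argument with uniform weights shows that the apparent (resubstitution) error entering the $.632$ estimator is invariant as well.

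Then I would assemble the pieces. The out-of-bag estimator of \citet{Efron1997} is a weighted average of the terms $(y_i - x_i^\top \hat{\theta}^{(b)})^2$ over pairs $(i,b)$ with $i$ absent from sample $b$, with weights that are functions of the out-of-bag indicators alone; the $.632$ estimator is a fixed convex combination of this out-of-bag term and the apparent error. Since every constituent loss term is invariant and the combining weights depend only on $U$, both $\errhat^\oob$ and $\errhat^\boot$ are invariant, which is exactly the definition of linear invariance.

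The main point to handle with care is the bootstrap resampling itself: I must argue that the resampling structure is a function of $U$ only, so that the same resamples (hence the same multiplicities $w^{(b)}_j$ and out-of-bag memberships) are used before and after the response shift. This holds because the nonparametric bootstrap draws indices uniformly at random, independent of the data values, and $U$ is taken independent of $(X,Y)$ precisely to encode such randomized choices. A minor loose end is the convention when some index $i$ lies in every resample, so its out-of-bag average is undefined; any fixed rule for this case again depends only on $U$ and therefore does not disturb invariance.
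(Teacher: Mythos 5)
Your proposal is correct and follows essentially the same route as the paper: both arguments show that the OLS fit on any fixed (multi)set of resampled indices shifts by exactly $\kappa$ under $y_i \mapsto y_i + x_i^\top \kappa$, so every residual and hence every loss term is unchanged, and both $\errhat^\oob$ and $\errhat^\boot$ are functions only of these invariant quantities. Your additional care about the resampling being a function of $U$ alone, the apparent-error term in the $.632$ estimator, and the all-in-bag edge case makes explicit points the paper's proof leaves implicit, but it is the same argument.
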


Thus, by the results in Section~\ref{sec:estimand_of_cv}, we conclude that bootstrap should also be viewed as an estimator of $\err$ or $\err_X$, rather than of $\err_{XY}$.

\begin{remark}
We expect these properties also hold for most other resampling-based estimators of prediction error, in addition to those explicitly considered herein. For example, nested CV is linearly invariant.
\end{remark}

\section{Data splitting}
\label{app:data_splitting}

\subsection{Data splitting without refitting}
We first consider \emph{data splitting without refitting}. We partition the data into disjoint sets $\Ical^\train \cup \Ical^\test = \{1,\dots,n\}$, and fit the model on the training set $\hat{\theta}^\train = \A((X_i, Y_i)_{i \in \Ical^\train})$. As in cross-validation, we then let $e_i = \ell(\hat{f}(x_i, \hat{\theta}^\train), y_i)$ for $i \in \Ical^\test$. We can then estimate the prediction error as
\begin{equation}
\errhat^\splitt := \frac{1}{|\Ical^\test|}\sum_{i \in \Ical^\test} e_i
\label{eq:data_split_def}
\end{equation}
and give a valid estimate of its standard error as
\begin{equation*}
\sehat^\splitt := \sqrt{\frac{1}{|\Ical^\test| - 1}\sum_{i \in \Ical^\test} (e_i - \errhat^\splitt)^2}.
\end{equation*}
We define $\widetilde{X} = (X_i)_{i\in\Ical^\train}$ and $\widetilde{Y} = (Y_i)_{i\in\Ical^\train}$. Then $\errhat^\splitt$ is an unbiased estimate for 
\begin{equation}
    \err_{\widetilde{X}\widetilde{Y}} := \E \left[\ell\left(\hat{f}(X_{n+1}, \hat{\theta}^\train), Y_{n+1}\right)\mid (\widetilde{X}, \widetilde{Y})\right],
\label{eq:err_xytilde_def}
\end{equation}
where the expectation is only over a fresh test point $(X_{n+1}, Y_{n+1})$.
The advantage of this approach is that one can accurately estimate the prediction error and provide valid inference. A first disadvantage is the prediction error and standard error estimates are valid for the model trained only on the subset $\Ical^\train$. Secondly, the estimates of prediction error have reduced precision because they rely only on the subset $\Ical^\test$. Thus, confidence intervals for prediction error may be much wider than those from CV. We confirm this in experiments in Section~\ref{sec:experiments}.

\subsection{Data splitting with refitting}
A slightly different form of data splitting is also commonly used, which we will refer to as \emph{data splitting with refitting}. In this approach, one follows the same steps as above to obtain the estimator $\errhat^\splitt$ from \eqref{eq:data_split_def}, but then conducts a final refitting step to obtain $\hat{\theta} = \A(X,Y)$, the model fit on the full data. Here, one deploys the model on the full data, $\hat{f}(\cdot, \hat{\theta})$---the idea is that the model on the full data is superior to the model fit using only the training subset. 
Data-splitting with refitting is similar to cross-validation, and analogs of our results from Section~\ref{sec:estimand_of_cv} carry over to this setting. 
\begin{lemma}
The estimator $\errhat^\splitt$ is linearly invariant.
\end{lemma}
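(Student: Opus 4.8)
The plan is to show that the transformation $y_i \mapsto y_i + x_i^\top \kappa$ leaves every summand $e_i$ appearing in $\errhat^\splitt$ unchanged, so that the estimator itself is unchanged. This mirrors the argument behind Lemma~\ref{lem:ols_equivariance}, since data splitting without refitting is structurally a single train/test split, i.e.\ a ``one-fold'' version of cross-validation. Note that although the lemma sits under the ``with refitting'' heading, the estimator $\errhat^\splitt$ depends only on the training-subset fit and the test-subset evaluation; the final refit $\hat{\theta} = \A(X,Y)$ is used only for deployment and never enters $\errhat^\splitt$, so it plays no role.

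First I would track how the fitted coefficient vector changes. With squared-error loss and OLS as the fitting algorithm, $\hat{\theta}^\train = (\widetilde{X}^\top \widetilde{X})^{-1} \widetilde{X}^\top \widetilde{Y}$, where $\widetilde{X}$ and $\widetilde{Y}$ are the training design matrix and response. Under the transformation the training responses become $\widetilde{Y} + \widetilde{X}\kappa$, so by linearity of OLS the new fit is $\hat{\theta}^{\train} + \kappa$, using $(\widetilde{X}^\top \widetilde{X})^{-1}\widetilde{X}^\top \widetilde{X}\kappa = \kappa$.

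Next I would check each test-fold error. For $i \in \Ical^\test$ the prediction shifts by $x_i^\top \kappa$ (since $\hat{f}(x_i, \theta) = x_i^\top\theta$ under OLS), and the transformed response also shifts by $x_i^\top \kappa$; the two shifts cancel exactly in the residual. Concretely, $(x_i^\top(\hat{\theta}^\train + \kappa) - (y_i + x_i^\top\kappa))^2 = (x_i^\top\hat{\theta}^\train - y_i)^2 = e_i$, so each $e_i$ is invariant and therefore so is their average $\errhat^\splitt$. The auxiliary randomness $U$ (the choice of partition into $\Ical^\train$ and $\Ical^\test$) is held fixed throughout and does not affect the argument.

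There is no serious obstacle here; the only point requiring care is the well-definedness of OLS, which needs the training design matrix $\widetilde{X}$ to have full column rank so that $\hat{\theta}^\train$ is unique and the shift-by-$\kappa$ identity holds exactly. This is the same regularity condition implicit in Lemma~\ref{lem:ols_equivariance}, and under it the conclusion is immediate.
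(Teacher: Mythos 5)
Your proof is correct and takes essentially the same route the paper intends: the paper states this lemma without a separate proof precisely because it follows by the same shift-by-$\kappa$ argument used for Lemma~\ref{lem:ols_equivariance}, namely that the OLS fit on the transformed training subset is $\hat{\theta}^\train + \kappa$, so every test-set residual, and hence $\errhat^\splitt$, is unchanged. Your observations that the final refit never enters $\errhat^\splitt$ and that the argument implicitly requires OLS with squared-error loss and a full-column-rank training design are both accurate and consistent with the paper's treatment.
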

Thus, the conclusions of Theorem~\ref{thm:ols_cv_independent_err}, Corollary~\ref{cor:errx_lower_mse}, Corollary~\ref{cor:cor_errx_errxy_highd}, and Corollary~\ref{cor:err_errxy_gap_highd} hold for $\errhat^\splitt$. In particular, $\errhat^\splitt$ has lower error for estimating $\err$ than for estimating $\err_{XY}$, and $\errhat^\splitt$ is asymptotically uncorrelated with $\err_{XY}$. 
Moreover, the standard error estimate $\sehat^\splitt$ becomes invalid when refitting, and asymptotically it is too small, as stated next.
\begin{proposition}
In the setting of Theorem~\ref{thm:ols_cv_independent_err},
\begin{equation}
    \E\left[\left(\sehat^\splitt \right)^2\right] = \E\left[\left(\errhat^\splitt - \err \right)^2\right] - \underbrace{\var(\err_{\widetilde{X}\widetilde{Y}})}_{\ge 0} - \underbrace{\left(\err - \E[\err_{\widetilde{X}\widetilde{Y}}]\right)^2}_{\ge 0}.
\label{eq:data_split_se_expansion}
\end{equation}
\label{prop:data_split_se_expansion}
\end{proposition}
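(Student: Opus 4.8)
The plan is to condition on the training set $(\widetilde{X}, \widetilde{Y})$ and exploit the fact that, unlike in cross-validation, the held-out errors $e_i$, $i \in \Ical^\test$, are conditionally i.i.d.\ and independent of the fitted model $\hat\theta^\train$. Write $m = |\Ical^\test|$ and let $\tau^2 := \var(e_i \mid \widetilde{X}, \widetilde{Y})$ be the common conditional variance of a single test error. By the definition of $\err_{\widetilde{X}\widetilde{Y}}$ in \eqref{eq:err_xytilde_def} we also have $\E[e_i \mid \widetilde{X}, \widetilde{Y}] = \err_{\widetilde{X}\widetilde{Y}}$, so that conditionally $\errhat^\splitt$ is a sample mean of $m$ i.i.d.\ terms. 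Everything then reduces to bookkeeping with two variance decompositions.

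First I would handle the left-hand side. Conditional on $(\widetilde{X}, \widetilde{Y})$, the sample variance $\frac{1}{m-1}\sum_{i \in \Ical^\test}(e_i - \errhat^\splitt)^2$ is an unbiased estimator of $\tau^2$, so the (standard-error-of-the-mean) estimate satisfies $\E[(\sehat^\splitt)^2 \mid \widetilde{X}, \widetilde{Y}] = \tau^2/m$; taking the outer expectation gives $\E[(\sehat^\splitt)^2] = \frac{1}{m}\E[\tau^2]$. Next I would expand the right-hand side. By the law of total variance,
\[
\var(\errhat^\splitt) = \E\!\left[\var(\errhat^\splitt \mid \widetilde{X}, \widetilde{Y})\right] + \var\!\left(\E[\errhat^\splitt \mid \widetilde{X}, \widetilde{Y}]\right) = \frac{1}{m}\E[\tau^2] + \var(\err_{\widetilde{X}\widetilde{Y}}),
\]
using $\var(\errhat^\splitt \mid \widetilde{X}, \widetilde{Y}) = \tau^2/m$ and $\E[\errhat^\splitt \mid \widetilde{X}, \widetilde{Y}] = \err_{\widetilde{X}\widetilde{Y}}$. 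Then a bias--variance decomposition about the constant $\err$, together with $\E[\errhat^\splitt] = \E[\err_{\widetilde{X}\widetilde{Y}}]$, yields
\[
\E\!\left[(\errhat^\splitt - \err)^2\right] = \var(\errhat^\splitt) + \left(\E[\errhat^\splitt] - \err\right)^2 = \frac{1}{m}\E[\tau^2] + \var(\err_{\widetilde{X}\widetilde{Y}}) + \left(\E[\err_{\widetilde{X}\widetilde{Y}}] - \err\right)^2.
\]
Rearranging this and substituting $\frac{1}{m}\E[\tau^2] = \E[(\sehat^\splitt)^2]$ from the first step gives exactly \eqref{eq:data_split_se_expansion}.

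There is no serious obstacle here; notably the argument never invokes the Gaussian linear model, only the conditional i.i.d.\ structure of the test errors. The one point requiring care is the conditioning itself: the entire computation hinges on the test points being independent of $(\widetilde{X}, \widetilde{Y})$ and of one another, which is precisely what makes the sample variance an honest conditional estimate of $\tau^2$ and hence makes $\sehat^\splitt$ correctly capture the within-sample term $\frac{1}{m}\E[\tau^2]$. This is the property that fails for cross-validation, where folds reuse points for both training and testing. The decomposition then isolates the two nonnegative terms $\var(\err_{\widetilde{X}\widetilde{Y}})$ and $(\err - \E[\err_{\widetilde{X}\widetilde{Y}}])^2$ that $\sehat^\splitt$ ignores once one refits on the full data, making transparent why refitting renders the interval too narrow.
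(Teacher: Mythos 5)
Your proof is correct and follows essentially the same route as the paper's: the conditional unbiasedness of the sample variance given the training set (so that $\E[(\sehat^\splitt)^2] = \E[\var(\errhat^\splitt \mid \widetilde{X},\widetilde{Y})]$), followed by the law of total variance and a bias--variance decomposition about $\err$ using $\E[\errhat^\splitt] = \E[\err_{\widetilde{X}\widetilde{Y}}]$. The paper simply writes the three steps more compactly without introducing $\tau^2$ explicitly; your observation that the Gaussian linear model is never actually used is accurate and consistent with the paper's argument.
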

This is a non-asymptotic result. In the proportional asymptotic limit~\eqref{eq:highd_limit_def}, for both sides of~\eqref{eq:data_split_se_expansion} above all terms are of order $1 / n$, except for the final term on the right-hand side, which is of constant order.\footnote{It is of constant order when the test set $I^\test$ is a constant fraction of $n$. This can be made smaller by choosing the ratio of $I^\test$ to $n$ converge to 0 at some rate, but then data splitting will not even achieve a $1/\sqrt{n}$ rate of precision.} 
This means that the standard error estimate derived from data splitting is too small, and confidence intervals based on this number will have coverage that is too small. Importantly, this behavior is \emph{not} only due to the difference in sample size used to fit $\hat{\theta}^\splitt$ and $\hat{\theta}$. The final term in~\eqref{eq:data_split_se_expansion} is the result of the difference in sample size, but even without this term, the estimate is too small asymptotically (due to the middle term on the right-hand side of~\eqref{eq:data_split_se_expansion}). The standard error estimate $\sehat^\splitt$ is similarly too small if one wishes to estimate $(\errhat^\splitt - \err_{XY})^2$; see Proposition~\ref{prop:data_split_se_expansion2} in Appendix~\ref{app:additional_results}. In Figure~\ref{fig:ds_covg_ols}, we verify with an experiment that the data splitting intervals do not have coverage approaching the nominal level, even as $n$ and $p$ grow. (Na\"ive cross-validation has a similar miscoverage problem---see Figure~\ref{fig:cv_ols_covg1} and Figure~\ref{fig:cv_ols_covg2}.) We emphasize, however, that the data splitting without replacement intervals have coverage much closer to the nominal level than cross-validation because they do not suffer from the data re-use (i.e., correlation across folds) problem; see again Figure~\ref{fig:ds_covg_ols} and Figure~\ref{fig:cv_ols_covg1}. The fact that data splitting with refitting produces intervals that are slightly too small is an interesting theoretical occurrence, but, based on our simulations, the magnitude of the bias is small enough that it is not a major worry in practice. In summary, data splitting with refitting has partially similar behavior to cross-validation; the point estimate has higher precision as an estimate of estimate of average prediction error and the standard error estimate is slightly too small in the proportional asymptotic limit. 

\section{Bias estimation}
\label{app:bias_estimation}
The NCV estimate of prediction error is unbiased for $\err$ for the procedure with a reduced sample size of $n(K-2)/K$, but this will be typically be slightly biased upwards for $\err$ with the full sample size. This discrepancy can be estimated by running both the usual $K$-fold CV and nested CV. An unbiased estimated for the difference in $\err$ at a sample of size  $n (K-2) / K$ (the sample size used in nested CV) to a sample of size $n (K-1) / K$ (the sample size used in standard CV) is
\begin{equation*}
    \errhat^\ncv - \errhat^\cv.
\end{equation*}

Now, since we expect that prediction error scales as $a + b / n$ in $n$ for some unknown constants $a$ and $b$ (the parametric rate), an estimator for the difference in in $\err$ at a sample of size  $n$  to $\err$ at a sample of size $n (K-2) / K$ (the sample size of each model used in nested CV) is then:
\begin{equation}
\label{eq:bias_estimator}
    \widehat{\bias} := \left(1 + \left(\frac{K-2}{K}\right)\right) \left(\errhat^\ncv - \errhat^\cv\right).
\end{equation}
The left term in the sum, ``$1$'', accounts for the bias when going from size $n (K-2) / K$ to size $n (K-1) /K$ and the right term in the sum, accounts for the bias going from a sample size from $n (K-1) / K$ to size $n$; this scaling due to the form of $a + b / n$. Combining this with the estimate of MSE in the previous section leads to the confidence intervals in~\eqref{eq:finalConfInt}.

\section{Proofs}
\label{app:proofs}
% In our proofs, we let $\Sigma$ denote the covariance of $X_i$. The results from Section~\ref{subsec:err_errx_results} take $\Sigma = I_p$, but we leave a general $\Sigma$ in the proofs so that we can re-use portions of the proof for settings where $\Sigma$ is arbitrary.

\begin{lemma}
\label{lem:invariant_ols_residuals}
A linearly invariant estimator $\errhat$ of prediction error is function of the residuals after running OLS fitting. That is, with $r = Y - X \hat{\theta} \in \mathbb{R}^n$ where $\theta$ is the OLS estimate, we have that 
\begin{equation*}
    \errhat((x_1, y_1), \dots, (x_n, y_n)) = g(r)
\end{equation*}
for some function $g : \mathbb{R}^n \to \mathbb{R}$.
\end{lemma}

\begin{proof}[Proof of Lemma~\ref{lem:invariant_ols_residuals}]
Let $\hat{\theta}$ be the OLS estimate. By the invariance property, we see that 
\begin{equation*}
    \errhat((x_1, y_1), \dots, (x_n, y_n)) = \errhat((x_1, y_1 - \hat{\theta}^\top x_1), \dots, (x_n, y_n - \hat{\theta}^\top x_n)),
\end{equation*}
and the right side of the above is a function only of $r$.
\end{proof}

\begin{proof}[Proof of Lemma~\ref{lem:ols_equivariance}]
Without loss of generality, consider point 1. We will show that the residual from CV on point 1 is the same when using either the data $(x_1, y_1), \dots, (x_n, y_n)$ or the data $(x_1, y_1'), \dots, (x_n, y_n')$, where $y_i' = y_i + x_i^\top \kappa$.

Let $\cI_1 \subset \{1,\dots,n\}$ be the indices in the same fold as observation 1. Let $\hat{\theta}_{(-1)}$ be the OLS estimate of $\theta$ based on only the points $(x_i, y_i)_{i \notin \cI_1}$---all the points not in the same fold as point 1---and let $\hat{\theta}'_{(-1)}$ be the OLS estimate of $\theta$ based on only the points $(x_i, y'_i)_{i \notin \cI_1}$. Then 
\begin{align*}
    \hat{\theta}_{(-1)} &= \argmin_{\theta} \sum_{i \notin \cI_1}(y_i - x_i^\top \theta)^2 \\
    &= \argmin_{\theta} \sum_{i \notin \cI_1}(y_i + \kappa^\top x_i - x_i^\top (\theta + \kappa))^2 \\
    &= -\kappa + \argmin_{\theta} \sum_{i \notin \cI_1}(y_i' - x_i^\top \theta)^2 \\
    &= \hat{\theta}'_{(-1)} - \kappa.
\end{align*}
Where the third equality is a result of changing variables from $\theta$ to $\theta + \kappa$. As a result, 
\begin{equation*}
y_1 - x_1^\top \hat{\theta}_{(-1)} = y_1' - x_1^\top \hat{\theta}'_{(-1)}.
\end{equation*}
The CV estimate of prediction error is the mean of the squared residuals, so since by the preceding display the residuals are the same for either the original or shifted data, the CV estimate of prediction error is the same in each case.
\end{proof}

\begin{proof}[Proof of Corollary~\ref{cor:errx_lower_mse}]
\begin{align*}
    \E[(\errhat^\cv - \err_{XY})^2] &= \E[(\errhat^\cv - \err_{X} + \err_{X} - \err_{XY})^2] \\
        &= \E[(\errhat^\cv - \err_{X})^2] + \E[(\err_{X} - \err_{XY})^2] + \E[(\errhat^\cv - \err_{X})(\err_{X} - \err_{XY})] \\
        &= \E[(\errhat^\cv - \err_{X})^2] + \E[(\err_{X} - \err_{XY})^2] \\ 
        &\qquad \qquad +\E\left[\E[(\errhat^\cv - \err_{X})(\err_{X} - \err_{XY})] \mid X\right] \\
        &= \E[(\errhat^\cv - \err_{X})^2] + \E[(\err_{X} - \err_{XY})^2]
\end{align*}
Where the last equality follows from Theorem~\ref{thm:ols_cv_independent_err}. The result follows by noting that the second term in the final line is $\E\left[\var(\err_{XY}\mid X)\right]$.
\end{proof}

\begin{proof}[Proof of Theorem~\ref{thm:err_and_err_x_higd}]
% Recall in Figure~\ref{fig:ols_target} that $\err_{XY}$ appears to have most of the variance around $\err$, compared to $\err_X$. That is, the variance of $\err_{XY}$ is mostly due to the randomness in $Y \mid X$. We explore this analytically below for the case of the linear model $Y = X\beta + \epsilon$.

Note that the residuals from OLS remain unchanged when we transform the features by a full-rank linear transformation. Thus, we can transform the features by $\Sigma^{-1/2}$ without changing the behavior of invariant estimators, and so without loss of generality we can take $\Sigma$ to be the identity matrix. Nonetheless, we leave $\Sigma$ generic throughout the proofs.

We first give an expression for $\err_{XY}$.
\begin{align*}
\err_{XY} &= \E_{X_{n+1}, Y_{n+1}} \left(X_{n+1}^\top \hat{\beta} - Y_{n+1}\right)^2 \\
    &= \sigma^2 +  \E_{X_{n+1}}\left(X_{n+1}^\top (\hat{\beta} - \beta)\right)^2 \\
    &= \sigma^2 + \E_{X_{n+1}} \left[ (\hat{\beta} - \beta)^\top X_{n+1}X_{n+1}^\top (\hat{\beta} - \beta)\right] \\
    &= \sigma^2 + \norm{\Sigma^{1/2}(\hat{\beta} - \beta)}^2 \\
    &= \sigma^2 + \frac{\sigma^2}{n}\norm{\Sigma^{1/2}(X^T X / n )^{-1/2}Z}^2 \\
    &= \sigma^2 + \frac{\sigma^2}{n}\norm{\Sigma^{1/2}\hat{\Sigma}^{-1/2} Z}^2,
\end{align*}
where $Z \sim \mathcal{N}(0,I)$ is a function only of the noise $\epsilon$. Next, we decompose the variance with the conditional variance formula
\begin{align*}
    \var(\err_{XY}) &= \E\left[\var(\err_{XY} \mid X)\right] + \var\left( \E\left[\err_{XY} \mid X\right] \right)\\
    &= \underbrace{\E\left[\var(\err_{XY} \mid X)\right]}_{\text{var due to $Y \mid X$}} + \underbrace{\var(\err_X)}_{\text{var due to $X$}},
\end{align*}
which was previously stated in \eqref{eq:err_xy_var_decomp} of the main text. We next derive asymptotic rates for the two components of this sum. 

\paragraph{Proof of first claim.}
We begin with the first term, which corresponds to the variance cause by the randomness in $Y \mid X$.
\begin{align*}
    \E\left[\var(\err_{XY} \mid X)\right]
    &= \E \ \var\left( \frac{\sigma^2}{n}\norm{\Sigma^{1/2}\hat{\Sigma}^{-1/2} Z}^2 \mid X\right) \\
    &= \frac{\sigma^4}{n^2} \E \ \var(\lambda_1^2 Z_1^2+ \dots + \lambda_p^2 Z_p^2 \mid X) \\
    &= \frac{2\sigma^4}{n^2} \E \left[\lambda_1^4 + \dots + \lambda_p^4 \right]
\end{align*}
where $\lambda_1^2,\dots,\lambda_p^2$ are the eigenvalues of $\Sigma \hat{\Sigma}^{-1}$, a function of $X$. Thus, the proof is complete once we show that the expectation term in the final line is $\Theta(n)$, which we turn to next. 

Notice that
\begin{align*}
    \E[\lambda_1^4+\dots+\lambda_p^4] &= \E\left[\tr\left((\Sigma \hat{\Sigma}^{-1})(\Sigma \hat{\Sigma}^{-1})\right)\right] \\
    &= (c_1 + c_2)n^2 p + c_2 n^2 p^2 \qquad \qquad \text{(by Corollary 3.1 of \citet{vanRosen1988moments})},
\end{align*}
where
\begin{equation*}
    c_1 := \frac{(n-p-2)}{(n-p)(n-p-1)(n-p-3)} \qquad c_2 := \frac{1}{(n-p)(n-p-1)(n-p-3)}
\end{equation*}
We conclude 
\begin{equation*}
\E \left[\lambda_1^4 + \dots + \lambda_p^4 \right] = \Theta(n).
\end{equation*}
This completes the proof of the first claim.

\paragraph{Proof of second claim.}
Now, we turn our attention to the second term, which corresponds to the variance caused by the randomness in $X$.
\begin{align*}
\var(\err_X) &= \var\left(\frac{\sigma^2}{n} \tr (\Sigma \hat{\Sigma}^{-1})\right) \\
    &= \frac{\sigma^4}{n^2} \var\left(\tr (\Sigma \hat{\Sigma}^{-1})\right) \\
    &= \frac{\sigma^4}{n^2} \cdot \Theta(1) \qquad \text{ as } n,p \to \infty
\end{align*}
where the final equality comes from the second moments of the inverse Wishart distribution \citep{vanRosen1988moments}. To elaborate on the last equality,
\begin{align*}
    \var\left(\tr (\Sigma \hat{\Sigma}^{-1})\right) &= p \cdot \var((\Sigma \Sigma^{-1})_{11}) + p(p-1)\cov((\Sigma \Sigma^{-1})_{11}, (\Sigma \Sigma^{-1})_{22}) \\
    &= p \cdot \frac{2n^2}{(n-p-1)^2(n-p-3)} + p(p-1) \cdot \frac{2n^2}{(n - p)(n - p - 1)^2(n - p - 3)} \\
    &= \Theta(1).
\end{align*}

% We see that the randomness caused by $Y$ given $X$ is of a larger order than that due to the randomness in $X$. Thus, most of the variance in $\err_{XY}$ is due to the randomness in $Y$ given $X$, consistent with what we see in Figure~\ref{fig:ols_target}. We can informally summarize this by saying that
% \begin{equation*}
% \err \approx \err_X.
% \end{equation*}
\end{proof}

\begin{proof}[Proof of Corollary~\ref{cor:cor_errx_errxy_highd}]
For the first claim,
\begin{align*}
    \cor(\err_X, \err_{XY}) 
    &= \frac{\cov(\err_X, \err_{XY})}{\sqrt{\var(\err_X)\var(\err_{XY}})} \\
    &= \frac{\cov\left(\err_X, \E[\err_{XY} \mid X]\right)}{\sqrt{\var(\err_X)\var(\err_{XY}})} \\
    &= \frac{\cov\left(\err_X, \err_X\right)}{\sqrt{\var(\err_X)\var(\err_{XY}})} \\
    &= \sqrt{\frac{\var(\err_X)}{\var(\err_{XY})}} \to 0 \quad \text{as } n \to \infty.
\end{align*}
The second equality above comes from the conditional covariance formula, conditioning on $X$.

For the second claim, note that $\errhat = g(\err_X, U)$ for an independent $U \sim \text{unif}[0,1]$ for some function $g$ (by Theorem~\ref{thm:ols_cv_independent_err}). That is, $\errhat$ is a random function of $\err_X$. As a result, 
\begin{align*}
    \cor(\err_{XY}, \errhat) 
    &= \cor(\err_{XY}, g(\err_X, U)) \\
    &= \frac{\cov(\err_{XY}, g(\err_X, U))}{\sqrt{\var(\err_{XY}) \var(g(\err_X, U))}} \\
    &= \frac{\cov(\err_{X}, \E[g(\err_X, U) \mid X])}{\sqrt{\var(\err_{XY}) \var(g(\err_X, U))}} \\ 
    &= \sqrt{\frac{\var(\err_X)}{\var(\err_{XY})}} \cdot
    \frac{\cov(\err_{X}, \E[g(\err_X, U) \mid X])}{\sqrt{\var(\err_{X}) \var(g(\err_X, U))}} \\ 
    &\le \sqrt{\frac{\var(\err_X)}{\var(\err_{XY})}} \cdot
    \cor(\err_X, \E[g(\err_X, U)\mid X]) \\ 
    &\le \sqrt{\frac{\var(\err_X)}{\var(\err_{XY})}}  \to 0  \quad \text{as } n \to \infty.
\end{align*}
The third equality above comes from the conditional covariance formula, conditioning on $X$.
\end{proof}

\begin{proof}[Proof of Corollary~\ref{cor:err_errxy_gap_highd}]
For the first claim,
\begin{align*}
    \E\left[\left(\errhat - \err_{XY}\right)^2\right] 
    &= \E\left[\left(\errhat - \err_X + \err_X - \err_{XY}\right)^2\right] \\
    &= \E\left[\left(\errhat - \err_X\right)^2 +\left(\err_X - \err_{XY}\right)^2 - 2\cdot\left(\errhat - \err_X\right)\cdot\left(\err_X - \err_{XY}\right)\right] \\
    &= \E\left[\left(\errhat - \err_X\right)^2 +\left(\err_X - \err_{XY}\right)^2\right] \qquad \qquad  ( \text{Theorem~\ref{thm:ols_cv_independent_err}}) \\
    &= \E\left[\left(\errhat - \err_X\right)^2\right] + \var(\err_X) +\var(\err_{XY}) - 2\cdot \cov(\err_X,\err_{XY}) \\
    &\ge \E\left[\left(\errhat - \err_X\right)^2\right] +\var(\err_{XY}) - 2\sqrt{\var(\err_{XY})
    \cdot {\var(\err_X)}} \\
    &= \E\left[\left(\errhat - \err_X\right)^2\right] + \frac{1}{n}\cdot \Omega(1).
\end{align*}

For the third claim,
\begin{align*}
    \E\left[\left(\errhat - \err\right)^2\right] - \E\left[\left(\errhat - \err_X\right)^2\right]
    &= \E\left[\left(\errhat - \err_X + \err_X - \err\right)^2\right] - \E\left[\left(\errhat - \err_X\right)^2\right] \\
    &= \var(\err_X) + 2\cdot\cov(\errhat - \err_X, \err_X) \\
    &= \frac{1}{n^2} \cdot O(1) + 2\cdot\cov(\errhat, \err_X)
\end{align*}
so
\begin{align*}
    \left\lvert \E\left[\left(\errhat - \err\right)^2\right] - \E\left[\left(\errhat - \err_X\right)^2\right] \right\rvert
    &\le \frac{1}{n^2} \cdot O(1) + \frac{2}{n}\cdot\sqrt{\var(\errhat)} \cdot O(1).
\end{align*}

The second claim follows by combining these two results.
\end{proof}

\begin{proof}[Proof of Proposition~\ref{prop:data_split_se_expansion}]
\begin{align*}
    \E\left[\left(\sehat^\splitt \right)^2\right] &=
        \E\left[\var(\errhat^\splitt | \widetilde{X}, \widetilde{Y})\right] & \text{(def. of $\errhat^\splitt$)} \\
        &= \var(\errhat^\splitt) - \var(\E[\errhat^\splitt \mid \widetilde{X}, \widetilde{Y}]) \\
        &= \E\left[\left(\errhat^\splitt - \E[\err_{\widetilde{X}, \widetilde{Y}}]\right)^2\right] - \var(\err_{ \widetilde{X}\widetilde{Y}}) \\
        &= \E\left[\left(\errhat^\splitt - \err\right)^2\right] -  \var(\err_{ \widetilde{X}\widetilde{Y}}) - \left(\err - \E[\err_{\widetilde{X}\widetilde{Y}}]\right)^2. \\
\end{align*}
\end{proof}

\begin{proof}[Proof of Proposition~\ref{prop:errx_cov_penalty}]
Let $j$ be drawn uniformly on $\{1,\dots,n\}$. For $i=1,\dots,n$, let $Y_i'$ be an independent draw from the distribution of $Y_i \mid X_i$. Then
\begin{align*}
    \err_X
    &= \err_\inn(X) + \E\left[[(Y_{n+1} - \hat{f}(X_{n+1}, \hat{\theta})^2  - (Y_{j}' - \hat{f}(X_{j}, \hat{\theta})^2 \right] \\ 
    &= \err_\inn(X) + \E\left[(\hat{\theta} - \theta)^\top \Sigma (\hat{\theta} - \theta)\right] -
        \E\left[(\hat{\theta} - \theta)^\top \hat{\Sigma} (\hat{\theta} - \theta)\right] \\
    &= \err_\inn(X) + \E\left[\E\left[(\hat{\theta} - \theta)^\top \Sigma (\hat{\theta} - \theta) \mid X \right]\right] - \frac{p\sigma^2}{n} \\
    &= \err_\inn(X) + \frac{\sigma^2}{n} \tr(\hat{\Sigma}^{-1}\Sigma) - \frac{p\sigma^2}{n}
\end{align*}
\end{proof}

\begin{proof}[Proof of Corollary~\ref{cor:randomx_cov_penalty}]
The first claim follows from a Jensen-type inequality for matrices in \citet{groves1969note}. The second claim is a result of the mean of the inverse-Wishart distribution.
\end{proof}

\begin{proof}[Proof of Theorem~\ref{thm:ncv_unbiased}]
For the first part of the theorem, consider without loss of generality the first entry of the vector $\texttt{a\_list}$ in Algorithm~\ref{alg:nested_cv}. This is the term (a) in \eqref{eq:holdout_se_identity}. Similarly, the first entry entry of \texttt{b\_list} is unbiased for (b) in \eqref{eq:holdout_se_identity}. The result follows.
\end{proof}

\begin{proof}[Proof of Lemma~\ref{lem:ols_boot_equivariance}]
We consider $\errhat^\oob$ and $\errhat^\boot$. Consider a single bootstrap sample $\Ical \subset \{1,\dots,n\}$, a multiset of cardinality $n$. We will show that both the residuals and out-of-bag error are identical when using either the data $(x_1, y_1), \dots, (x_n, y_n)$ or the data $(x_1, y_1'), \dots, (x_n, y_n')$, where $y_i' = y_i + x_i^\top \kappa$.

Let $\hat{\theta}$ be the OLS estimate for $\theta$ based on the bootstrap sample $\Ical$ with points $(x_1, y_1), \dots, (x_n, y_n)$, and let $\hat{\theta}'$ be the corresponding estimate for the data $(x_1, y_1'), \dots, (x_n, y_n')$. Then
\begin{align*}
    \hat{\theta} &= \argmin_{\theta} \sum_{i \in \Ical}(y_i - x_i^\top \theta)^2 \\
    &= \argmin_{\theta} \sum_{i \in \Ical}(y_i + \kappa^\top x_i - x_i^\top (\theta + \kappa))^2 \\
    &= -\kappa + \argmin_{\theta} \sum_{i \in \Ical}(y_i' - x_i^\top \theta)^2 \\
    &= \hat{\theta}' - \kappa.
\end{align*}
As a result 
\begin{equation*}
y_i - x_i^\top \hat{\theta} = y_i' - x_i^\top \hat{\theta}'
\end{equation*}
for $i=1,\dots,n$. Both the out-of-bag and .632 bootstrap estimators are functions only of the quantities $y_i - x_i^\top \hat{\theta}$, across many different bootstrap samples, so the proof is complete.
\end{proof}

\section{Additional technical results}
\label{app:additional_results}

\begin{lemma}
In the setting of Proposition~\ref{prop:data_split_se_expansion}, with notation as in Section~\ref{subsec:data_splitting},
\begin{equation*}
\left(\err - \E[\err_{\widetilde{X}\widetilde{Y}}]\right)^2
 = \Theta(1).
\end{equation*}
\label{lem:sample_size_gap}
\end{lemma}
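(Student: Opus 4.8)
The plan is to reduce everything to the exact closed form for the expected random-$X$ prediction error of OLS and then to compare that quantity at two different sample sizes. Recall from the proof of Theorem~\ref{thm:err_and_err_x_higd} that for an OLS fit on $m$ i.i.d.\ Gaussian feature vectors under the homoskedastic Gaussian linear model, $\err_X = \sigma^2 + \frac{\sigma^2}{m}\tr(\Sigma\hat{\Sigma}^{-1})$ with $\hat{\Sigma} = X^\top X/m$. Taking a further expectation over $X$ and using the mean of the inverse-Wishart distribution, $\E[\hat{\Sigma}^{-1}] = \frac{m}{m-p-1}\Sigma^{-1}$ for $m > p+1$ (the same ingredient underlying Corollary~\ref{cor:randomx_cov_penalty}), gives the familiar identity
\begin{equation*}
\sigma^2 + \frac{\sigma^2}{m}\,\E\bigl[\tr(\Sigma\hat{\Sigma}^{-1})\bigr] = \sigma^2\left(1 + \frac{p}{m-p-1}\right),
\end{equation*}
which is the average OLS error for a training set of size $m$.

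Next I would apply this identity at the two relevant sample sizes. Writing $s := |\Ical^\train|$ for the training-set size, $\err$ is the average error of the model fit on all $n$ points while $\E[\err_{\widetilde{X}\widetilde{Y}}]$ is the average error of the model fit on the $s$ training points, so
\begin{equation*}
\err - \E[\err_{\widetilde{X}\widetilde{Y}}] = \sigma^2 p\left(\frac{1}{n-p-1} - \frac{1}{s-p-1}\right).
\end{equation*}
By the hypothesis that the test set is a constant fraction of $n$, we have $s/n \to c$ for some constant $c \in (0,1)$. In the proportional asymptotic limit \eqref{eq:highd_limit_def}, where $n/p \to \lambda$, each term has a finite nonzero limit, namely $\frac{p}{n-p-1} \to \frac{1}{\lambda-1}$ and $\frac{p}{s-p-1} \to \frac{1}{c\lambda-1}$ (using $s/p \to c\lambda$). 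Hence
\begin{equation*}
\err - \E[\err_{\widetilde{X}\widetilde{Y}}] \to \sigma^2\left(\frac{1}{\lambda-1} - \frac{1}{c\lambda-1}\right),
\end{equation*}
and squaring shows $(\err - \E[\err_{\widetilde{X}\widetilde{Y}}])^2$ converges to a constant, establishing the $\Theta(1)$ claim provided the limiting constant is strictly positive.

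The only thing requiring care -- and what I regard as the crux -- is verifying that this limiting constant is nonzero and that the formula remains applicable. Since $c < 1$ strictly (the test fraction is bounded away from $0$), we have $c\lambda - 1 < \lambda - 1$, so the two limits differ and the bracketed expression is a strictly negative constant; its square is therefore strictly positive, giving the positive limit inferior needed for $\Theta(1)$. For applicability I would note that the inverse-Wishart mean requires $s > p+1$, which holds for all large $n$ because $s/p \to c\lambda > 1$ (exactly the regime in which data splitting with OLS is well defined), while $n > p+1$ is immediate from $\lambda > 1$. There is no deep difficulty here: the argument is essentially the observation that expected OLS error is a strictly decreasing function of sample size whose values at two sample sizes differing by a constant factor stay separated by a constant in the proportional limit.
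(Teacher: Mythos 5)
Your proof is correct and follows essentially the same route as the paper: both derive the closed form $\err = \sigma^2\bigl(1 + \tfrac{p}{n-p-1}\bigr)$ from the $\err_{XY}$ expression in the proof of Theorem~\ref{thm:err_and_err_x_higd} together with the inverse-Wishart mean, and then compare this quantity at sample sizes $n$ and $|\Ical^\train|$. If anything, you are more explicit than the paper (whose proof ends with ``the result follows'') in verifying that the two limits $\tfrac{1}{\lambda-1}$ and $\tfrac{1}{c\lambda-1}$ are distinct and that $c\lambda > 1$ keeps the formula applicable.
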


\begin{proof}[Proof of Lemma~\ref{lem:sample_size_gap}]
Using the notation from the proof of Theorem~\ref{thm:err_and_err_x_higd},
\begin{align*}
\err &= \sigma^2 + \E\left[\frac{\sigma^2}{n}\norm{\Sigma^{1/2}\hat{\Sigma}^{-1/2} Z}^2\right] \\
&= \sigma^2 + \frac{\sigma^2}{n}\E\left[\tr(\Sigma \hat{\Sigma}^{-1})\right] \\
&= \sigma^2 + \frac{\sigma^2 p}{n - p -1}.
\end{align*}
The result follows.
\end{proof}

\begin{proposition}
In the setting of Theorem~\ref{thm:err_and_err_x_higd}, with notation as in Section~\ref{subsec:data_splitting}, suppose also that $\Ical^\train > p$ and $|\Ical^\test| / n \to c \in (0,1)$.
\begin{equation}
    \E\left[\left(\sehat^\splitt \right)^2\right] = \E\left[\left(\errhat^\splitt - \err_{XY} \right)^2\right] - \var(\err_{\widetilde{X}\widetilde{Y}})
    - \var(\err_{XY})
    - \left(\err - \E[\err_{\widetilde{X}\widetilde{Y}}]\right)^2 + O(1/n^2).
\label{eq:data_split_se_expansion2}
\end{equation}
\label{prop:data_split_se_expansion2}
\end{proposition}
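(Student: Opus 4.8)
The plan is to build directly on Proposition~\ref{prop:data_split_se_expansion}, which already expresses $\E[(\sehat^\splitt)^2]$ through the MSE of $\errhat^\splitt$ as an estimate of $\err$. Since the target identity is the same except that the MSE is measured against $\err_{XY}$ rather than $\err$, it suffices to convert one MSE into the other and to show that the discrepancy consists of the term $\var(\err_{XY})$ plus a covariance that is negligible at order $O(1/n^2)$.

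Concretely, recalling that $\err = \E[\err_{XY}]$, I would expand
\begin{align*}
\E\left[\left(\errhat^\splitt - \err_{XY}\right)^2\right]
&= \E\left[\left(\errhat^\splitt - \err\right)^2\right] + \E\left[\left(\err - \err_{XY}\right)^2\right] + 2\,\E\left[\left(\errhat^\splitt - \err\right)\left(\err - \err_{XY}\right)\right] \\
&= \E\left[\left(\errhat^\splitt - \err\right)^2\right] + \var(\err_{XY}) - 2\,\cov(\errhat^\splitt, \err_{XY}),
\end{align*}
where $\E[(\err - \err_{XY})^2] = \var(\err_{XY})$ and the cross term collapses to $-2\cov(\errhat^\splitt, \err_{XY})$, both using $\E[\err_{XY}] = \err$. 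Solving for $\E[(\errhat^\splitt - \err)^2]$ and inserting it into Proposition~\ref{prop:data_split_se_expansion} reproduces the claimed identity exactly, provided $2\cov(\errhat^\splitt, \err_{XY}) = O(1/n^2)$. Establishing this covariance bound is the crux of the proof.

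To bound the covariance I would exploit that $\errhat^\splitt$ is linearly invariant, so Theorem~\ref{thm:ols_cv_independent_err} gives $\errhat^\splitt \indep \err_{XY} \mid X$. The law of total covariance then annihilates the within-$X$ contribution,
\begin{equation*}
\cov(\errhat^\splitt, \err_{XY}) = \E\left[\cov(\errhat^\splitt, \err_{XY} \mid X)\right] + \cov\!\left(\E[\errhat^\splitt \mid X],\, \E[\err_{XY} \mid X]\right) = \cov\!\left(\E[\errhat^\splitt \mid X],\, \err_X\right),
\end{equation*}
since $\E[\err_{XY} \mid X] = \err_X$. Cauchy--Schwarz bounds this by $\sqrt{\var(\E[\errhat^\splitt \mid X])}\cdot\sqrt{\var(\err_X)}$. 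Theorem~\ref{thm:err_and_err_x_higd} already supplies $\var(\err_X) = \Theta(1/n^2)$, so it remains to show $\var(\E[\errhat^\splitt \mid X]) = O(1/n^2)$, after which the product is $O(1/n^2)$ as required.

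The main obstacle is this last variance estimate, which requires an explicit moment computation in the spirit of the proof of Theorem~\ref{thm:err_and_err_x_higd}. Reducing to $\Sigma = I$ as in that proof and averaging over the Gaussian noise gives the closed form
\begin{equation*}
\E[\errhat^\splitt \mid X] = \sigma^2 + \frac{\sigma^2}{|\Ical^\test|}\sum_{i \in \Ical^\test} X_i^\top(\widetilde{X}^\top\widetilde{X})^{-1}X_i = \sigma^2 + \sigma^2\,\tr\!\left((\widetilde{X}^\top\widetilde{X})^{-1}\widehat{\Sigma}^{\test}\right),
\end{equation*}
where $\widetilde{X} = (X_i)_{i \in \Ical^\train}$ and $\widehat{\Sigma}^{\test} = |\Ical^\test|^{-1}\sum_{i\in\Ical^\test} X_i X_i^\top$ is independent of $\widetilde{X}$. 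I would bound $\var(\tr((\widetilde{X}^\top\widetilde{X})^{-1}\widehat{\Sigma}^{\test}))$ by conditioning on $\widetilde{X}$: the ``between'' piece $\var(\tr((\widetilde{X}^\top\widetilde{X})^{-1}\Sigma))$ is an inverse-Wishart trace variance, which is $\Theta(1/n^2)$ by the second moments of the inverse Wishart \citep{vanRosen1988moments}, exactly as in the second claim of Theorem~\ref{thm:err_and_err_x_higd} with $|\Ical^\train| = \Theta(n)$; while the ``within'' piece equals $|\Ical^\test|^{-1}$ times the variance of a single quadratic form $X_i^\top(\widetilde{X}^\top\widetilde{X})^{-1}X_i$, namely $|\Ical^\test|^{-1}\cdot 2\tr((\widetilde{X}^\top\widetilde{X})^{-2}) = \Theta(1/n)\cdot\Theta(1/n) = \Theta(1/n^2)$, using $|\Ical^\test|/n \to c$ and $\tr((\widetilde{X}^\top\widetilde{X})^{-2}) = \Theta(1/n)$. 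Both pieces are $O(1/n^2)$, which closes the argument. The only genuinely delicate bookkeeping is verifying these Wishart moment rates under the proportional limit and the hypotheses $|\Ical^\train| > p$ and $|\Ical^\test|/n \to c$; everything else is the algebraic reduction above.
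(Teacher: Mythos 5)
Your proof is correct and follows essentially the same route as the paper's: expand $\E[(\errhat^\splitt-\err_{XY})^2]$ around $\err$, substitute into Proposition~\ref{prop:data_split_se_expansion}, annihilate the conditional covariance via Theorem~\ref{thm:ols_cv_independent_err} and the law of total covariance, and bound the remaining term by Cauchy--Schwarz together with the $\Theta(1/n^2)$ rate from Theorem~\ref{thm:err_and_err_x_higd}. The one place you go further is in bounding $\var(\E[\errhat^\splitt \mid X])$ explicitly, including the fluctuation coming from the finite test-set features $\widehat{\Sigma}^{\test}$, whereas the paper simply identifies $\E[\errhat^\splitt \mid X]$ with $\err_{\widetilde{X}}$ and invokes $\var(\err_{\widetilde{X}}) = O(1/n^2)$; your extra step is a legitimate (and slightly more careful) refinement of the same argument.
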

In the proportional asymptotic limit~\eqref{eq:highd_limit_def}, for both sides of~\eqref{eq:data_split_se_expansion2} above all terms are of order $1 / n$, except the $O(1/n^2)$ term on the right-hand side.

\begin{proof}[Proof of 
Proposition~\ref{prop:data_split_se_expansion2}]
From Proposition~\ref{prop:data_split_se_expansion}, we have
\begin{equation*}
    \E\left[\left(\sehat^\splitt \right)^2\right] = \E\left[\left(\errhat^\splitt - \err \right)^2\right] - \var(\err_{\widetilde{X}\widetilde{Y}}) - \left(\err - \E[\err_{\widetilde{X}\widetilde{Y}}]\right)^2.
\end{equation*}
We now expand the first term:
\begin{align*}
    \E\left[\left(\errhat^\splitt - \err \right)^2\right]
    &= \E\left[\left(\errhat^\splitt - \err_{XY} + \err_{XY} - \err\right)^2\right] \\
    &= \E\left[\left(\errhat^\splitt - \err_{XY}\right)^2\right] + \var(\err_{XY}) + \\
    &\qquad 2\E\left[\left(\errhat^\splitt - \err_{XY}\right)\cdot(\err_{XY} - \err)\right] \\
    &= \E\left[\left(\errhat^\splitt - \err_{XY}\right)^2\right] + \var(\err_{XY}) + \\
    &\qquad 2 \cdot \cov\left(\errhat^\splitt - \err_{XY}, \err_{XY} - \err\right) \\
    &= \E\left[\left(\errhat^\splitt - \err_{XY}\right)^2\right] - \var(\err_{XY}) + \\
    &\qquad 2 \cdot \cov\left(\errhat^\splitt, \err_{XY}\right) \\
\end{align*}
Thus, it remains to show only that  $\cov\left(\errhat^\splitt, \err_{XY}\right) = o(1/n)$. To this end,
\begin{align*}
    \cov\left(\errhat^\splitt, \err_{XY}\right) = \cov\left(\err_{\widetilde{X}}, \err_X\right)
\end{align*}
by the conditional covariance decomposition (conditioning on $X$) and applying Theorem~\ref{thm:ols_cv_independent_err}. Applying Cauchy-Schwarz,
\begin{equation*}
    \cov\left(\err_{\widetilde{X}}, \err_X\right) \le \sqrt{\var(\err_X)\cdot\var(\err_{\widetilde{X}})} = O(n^{-2}),
\end{equation*}
where in the last equality we applied Theorem~\ref{thm:err_and_err_x_higd}.
\end{proof}

\section{Further simulation results}
\setcounter{figure}{0}  

\subsection{The bias of CV in the proportional asymptotic regime.}
In the setting of Figure~\ref{fig:highd_rates}, we report on the variance and bias squared of cross-validation in Figure~\ref{fig:ols_rates_bv}. We find that for the values of $n$ we consider, the variance is much larger than the bias. We see, however, that for larger $n$, the constant bias would overtake the variance as the leading issue. (One way to address this issue is to have the number of folds grow appropriately.)

\begin{figure}
    \centering
    \includegraphics[height=2.75in]{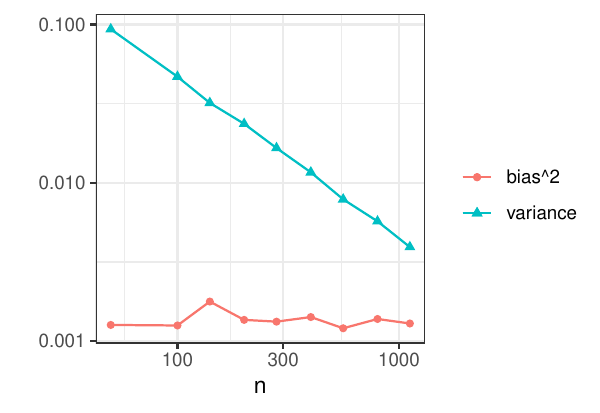}
    \caption{The bias and variance of the CV estimate of $\err$. The simulation uses the same setup as Figure~\ref{fig:highd_rates}; we use OLS as the fitting algorithm, the data is generated from a linear model with Gaussian errors, and $n/p = 5$. We us $K=10$ folds.}
    \label{fig:ols_rates_bv}
\end{figure}

\subsection{Compute times}
\label{app:compute_times}
In Table~\ref{tab:compute_times}, we report on the runtime of CV and NCV for our experiments. 

\begin{table}[!ht]

\begin{center}
\begin{tabular}{| c | c | c | c |}
\hline
Section & Experiment & CV time (s) & NCV time (s) \\
\hline
\ref{subsec:lowd_log_exp} & Bayes Error 33.2\% & 0.02  & 12.2 \\
" & Bayes Error 22.5\% & 0.02  & 13.5 \\
\ref{subsec:highd_logistic_exp} & $n = 90, \rho = 0$ & 0.16 & 6.3 \\
" & $n = 200, \rho = 0$ & 0.2 & 17 \\
" & $n = 90, \rho = 0.5$ & 0.2 & 11 \\
\ref{subsec:lowd_lin_exp} & $n=40$ & 0.02 & 6.6 \\
" & $n=100$ & 0.01 & 7.1 \\
"  & $n=200$ & 0.01 & 7.8 \\
" & $n=400$ & 0.02 & 10 \\
" & $n=1600$ & 0.02 & 22 \\
\ref{sec:subsec:highd_lin_exp} & $n=50$ & 0.1 &  44 \\
" & $n=100$ & 0.1 &  46 \\
\ref{sec:real_data} & CC, $n=50$ & 0.02 & 15 \\
 " & CC, $n=100$ & 0.02 & 15 \\
 " & crp, $n=50$ & 0.2 & 22 \\
 " & crp, $n=100$ & 0.2 & 24 \\
\hline
\end{tabular}
\end{center}
\caption{Approximate computation times for one run of CV and NCV for each of the experimental settings.}
\label{tab:compute_times}
\end{table}

\subsection{Additional details on experiments from Section~\ref{sec:experiments}}
This section reports additional results from the set of experiments in Section~\ref{sec:experiments}.

\begin{figure}[!h]
\captionsetup[subfigure]{justification=centering}
\centering
\begin{subfigure}[b]{.45\textwidth}
\centering
\includegraphics[height = 2in, trim = 30 0 0 0, clip]{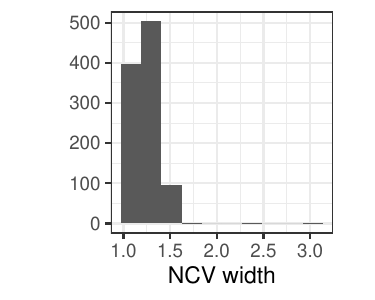}
\subcaption{Bayes error 33.2\%}
\end{subfigure}
\hspace{.2cm}
\begin{subfigure}[b]{.45\textwidth}
\centering
\includegraphics[height = 2in, trim = 30 0 0 0, clip]{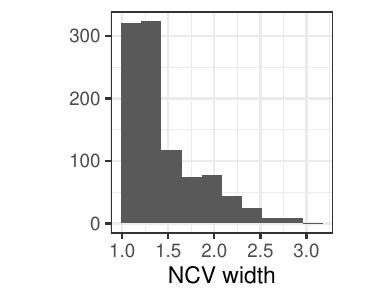}
\subcaption{Bayes error 22.5\%}
\end{subfigure}
\hfill
\caption{Size of the nested CV intervals relative to the size of the na\"ive CV intervals in the low-dimensional logistic regression example from Section~\ref{subsec:lowd_log_exp}.}
\label{fig:lowd_log_infl}
\end{figure} 

\begin{figure}[!h]
\begin{center}
\includegraphics[height = 2in]{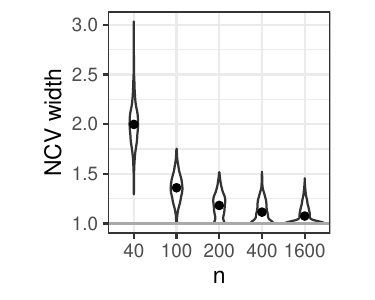}
\end{center}
\caption{Width of nested CV intervals relative to the width of the na\"ive CV intervals in the experiment from Section~\ref{subsec:lowd_lin_exp}}
\label{fig:ols_infl}
\end{figure}

\begin{figure}[!h]
\captionsetup[subfigure]{justification=centering}
\centering
\begin{subfigure}[b]{.45\textwidth}
\centering
\includegraphics[height = 2in, trim = 30 0 0 0, clip]{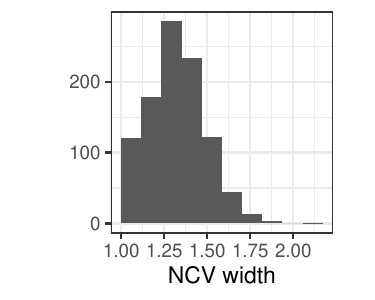}
\subcaption{$n=50$}
\end{subfigure}
\hspace{.2cm}
\begin{subfigure}[b]{.45\textwidth}
\centering
\includegraphics[height = 2in, trim = 30 0 0 0, clip]{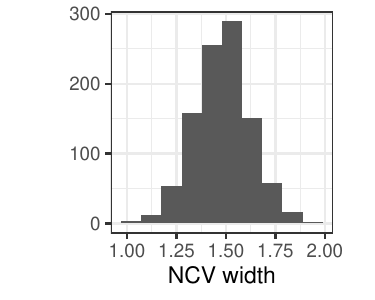}
\subcaption{$n=100$}
\end{subfigure}
\hfill
\caption{Size of the nested CV intervals relative to the size of the na\"ive CV intervals in the high-dimensional sparse regression example from Section~\ref{sec:subsec:highd_lin_exp}.}
\label{fig:highd_lasso_infl}
\end{figure} 

\subsection{Number of folds}
\label{subapp:num_folds_exp}
We next investigate how the number of folds affects the \emph{CV inflation}: the ratio of the true standard error of the point estimate compared to the CV estimate of standard error. We consider a linear model with $p = 20$ features that are sampled as i.i.d. standard Gaussians. The number of observations ranges from $50$ to $400$. We use OLS as the fitting algorithm, so as a result of Lemma~\ref{lem:ols_equivariance}, the results do not depend on the true coefficients $\theta$. We report on the results in Figure~\ref{fig:ols_choice_k}, where we find that the number of folds has minimal impact on the inflation, although more folds gives moderately better coverage for small $n$. We also find that even when $n / p$ is as large as $20$, there is appreciable CV inflation, and na\"ive cross-validation leads to intervals with poor coverage.

\begin{figure}[!h]
\begin{center}
\includegraphics[height = 2.5in]{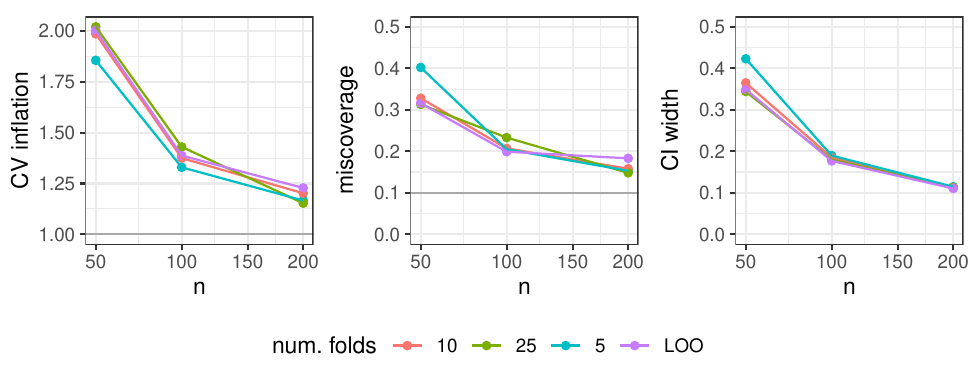}
\end{center}
\caption{The CV inflation and coverage of the na\"ive interval in the low-dimensional linear model with 20 features. The dark grey horizontal line in the middle panel gives the nominal miscoverage level. The right panel gives the average width of the confidence interval.}
\label{fig:ols_choice_k}
\end{figure}

\subsection{CV in the proportional region}
We next show some experimental results further exploring the regime from Section~\ref{subsec:err_errx_results}. In Figure~\ref{fig:highd_rates2}, we compare the accuracy of CV when covering $\err$ compared to its accuracy when covering $\err_{XY}$, and we find that it has higher accuracy for the former, by a constant fraction as $n,p\to \infty$. We also see that $\err_{XY}$ and $\errhat^\cv$ are essentially uncorrelated.

Next, we plot the coverage of CV in Figure~\ref{fig:cv_ols_covg1}, and see that it is far from the nominal level, even as $n$ and $p$ grow and the intervals have oracle debiasing so that they are centered around the correct value. In Figure~\ref{fig:cv_ols_covg2}, we show that these intervals have higher than the nominal miscoverage rate in both tails, confirming that the miscoverage rate is not due to bias---it is due to the intervals being too narrow.

\begin{figure}[!t]
    \centering
    \begin{subfigure}[b]{3.5in}
    \centering
    \includegraphics[height = 2.5in]{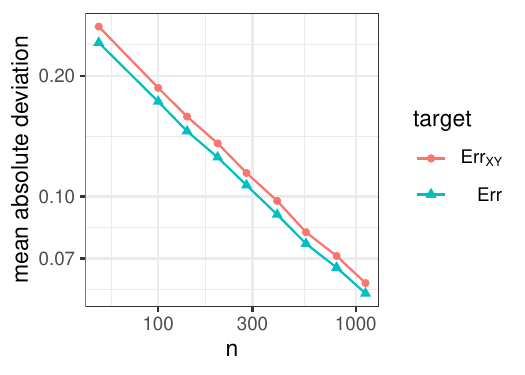}
    \end{subfigure}
    \begin{subfigure}[b]{2.75in}
    \centering
    \includegraphics[height = 2.5in]{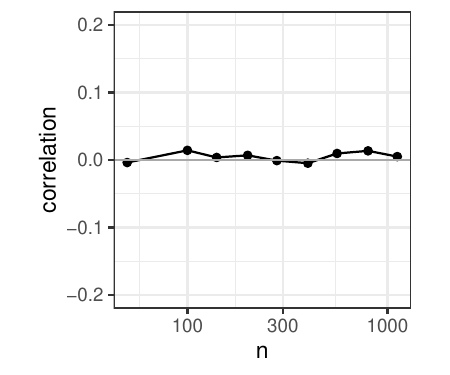}
    \end{subfigure}
    \caption{Simulation results comparing the error of CV when estimating $\err$ to its error when estimating $\err_{XY}$. Left: the mean absolute deviation between $\errhat^\cv$ and $\err$ or $\err_{XY}$. Right: $\cor(\errhat^\cv, \err_{XY})$}
        \label{fig:highd_rates2}
\end{figure}

\begin{figure}
    \centering
    \includegraphics{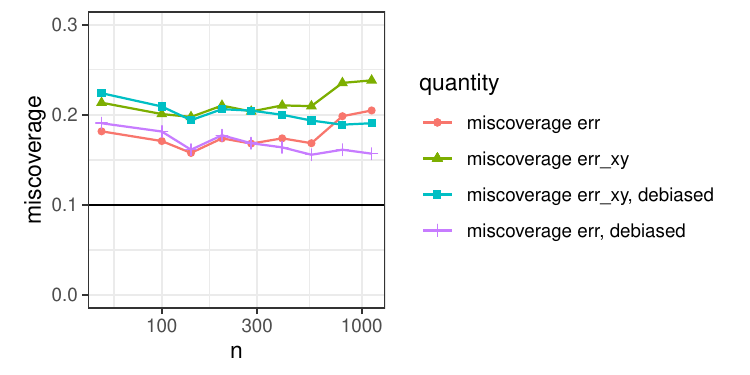}
    \caption{Coverage of CV intervals with OLS in the proportional regime. Nominal coverage rate is 10\%.}
    \label{fig:cv_ols_covg1}
\end{figure}

\begin{figure}
    \centering
    \includegraphics{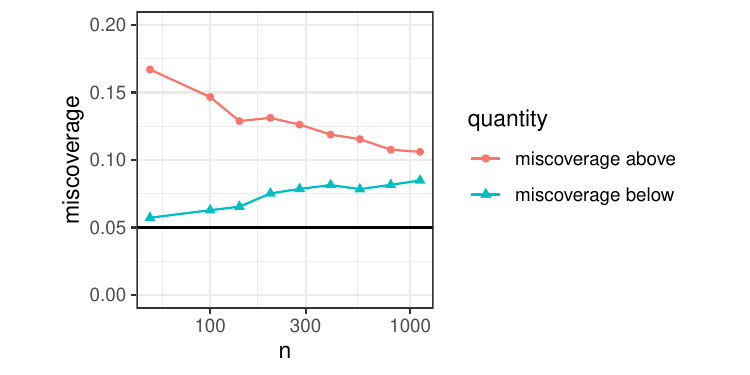}
    \caption{Coverage of $\err_{XY}$ of CV intervals with OLS and oracle debiasing in the proportional regime. Nominal coverage rate is 5\% above and below.}
    \label{fig:cv_ols_covg2}
\end{figure}

\subsection{Coverage of data splitting in proportional regime}
Here, we record further results about data splitting in the setting from Section~\ref{subsec:data_splitting}. In Figure~\ref{fig:ds_covg_ols}, we show that data splitting does not approach the nominal coverage rate, even when the intervals are debiased by an oracle to be centered at the correct value.

\begin{figure}[ht]
    \centering
    \includegraphics[height = 2.5in]{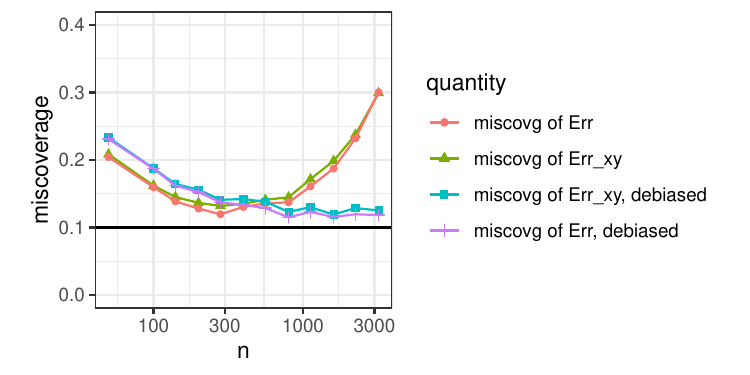}
    \caption{Coverage rate of data splitting with OLS. Nominal coverage rate is 10\%.}
    \label{fig:ds_covg_ols}
\end{figure}

\subsection{.632 bootstrap empirical influence function CIs}
We next discuss standard error estimates with the bootstrap point estimates introduced previously in Section~\ref{app:bootstrap_theorems}.
The OOB estimate of prediction error has has an associated estimate of standard error, based on estimates of the empirical influence functions \citep{efron1983estimating, Efron1997}. In particular, from \citet{Efron1997} we use the point estimate from (24) and the SE estimate from (36). To get an estimate of the standard error for the .632 estimator, we re-scale the estimated OOB standard error by the ratio of (24) to (17) of \citet{Efron1997}, as suggested therein.
We investigate the coverage of these intervals on our real-data examples and report on the results in Table~\ref{tab:real_data_summary_expand}. We find that the .632 confidence intervals are generally acceptable, with reasonable coverage. The intervals are typically, but not always, wider than the NCV intervals. The bootstrap point estimates are typically more biased that the NCV point estimates.

\newgeometry{margin=1in} 
\begin{landscape}

\begin{table}[!ht]
\footnotesize
\begin{center}
\begin{tabular}{| c | c | c || c | c | c | c | c | c | c | c || c : c | c : c | c : c | c : c |}
% \hline
%  Data & alg & n & NCV wid & 632 wid & $\err$ & CV est & NCV est & 632 est & CV mis lo & CV mis hi & NCV mis lo & NCV mis hi & 632 mis lo & 632 mis hi  \\
%  \hline

 \hline
 \multicolumn{3}{|c||}{Setting} & \multicolumn{8}{c||}{Width and Point Estimates} &
 \multicolumn{8}{c|}{Miscoverage}
 \\ \hline
  \multicolumn{3}{|c||}{} & \multicolumn{3}{c|}{Width} &
   \multicolumn{1}{c|}{} & \multicolumn{4}{c||}{Estimate} &
 \multicolumn{2}{c|}{CV} & \multicolumn{2}{c|}{NCV} & \multicolumn{2}{c|}{.632} & \multicolumn{2}{c|}{OOB} \\
 
 \multicolumn{1}{|c}{Data} & \multicolumn{1}{c}{$n$} & Target & 
 \multicolumn{1}{c}{NCV} & \multicolumn{1}{c}{.632} & 
 \multicolumn{1}{c|}{OOB} &
 \multicolumn{1}{c|}{$\err$} & \multicolumn{1}{c}{CV} & \multicolumn{1}{c}{NCV} & \multicolumn{1}{c}{.632} &
 \multicolumn{1}{c||}{OOB} &
 \multicolumn{1}{c}{Hi} & \multicolumn{1}{c|}{Lo} & \multicolumn{1}{c}{Hi} & \multicolumn{1}{c|}{Lo} &
 \multicolumn{1}{c}{Hi} & \multicolumn{1}{c|}{Lo} &\multicolumn{1}{c}{Hi} & \multicolumn{1}{c|}{Lo}\\
 \hline
 
 CC &  50 & $\err_{XY}$ & 2.82 & 3.06 & 4.0 & .029 & .031 & .029 & .047 & .067 & 4\% & 20\% & 1\% & 13\% & 0\% & 9\% & 2\% & 5\%\\
 " &  " & $\err$ & " & " & " & " & " & " & " & " & 2\% & 22\% & 0\% & 12\% & 0\% & 9\% & 1\% & 3\% \\
 \rule{0pt}{4ex} 
 
 CC &  100 & $\err_{XY}$ & 1.46 & 1.82 & 2.22 & .023 & .024 & .023 & .025 & .03 & 4\% & 13\% & 2\% & 7\% & 1\% & 8\% & 3\% & 4\% \\
 " &  " & $\err$ & " & " & " & " &  " & " & " & " & 2\% & 12\% & 1\% & 4\% & 0\% & 6\% & 2\% & 2\% \\
 \rule{0pt}{4ex} 
 
 %crp &  50 & $\err_{XY}$ & 1.25 & 2.03 & 12.7\% & 13.1\% & 12.7\% & 11.5\% &  6\% & 14\% & 2\% & 10\% & 0\% & 5\%\\
 crp &  50 & $\err_{XY}$ & 1.21 & 1.55 & 1.87 &10.5\% & 10.7\% & 10.5\% & 12.0\% & 14.4\% & 7\% & 8\% & 2\% & 5\% & 1\% & 6\% & 1\% & 4\% \\
 %crp &  50 & $\err$ & 1.25 & 2.03 & 12.7\% & 13.1\% & 12.7\% & 11.5\% & 9\%$ & 20\%$ & 1\%$ & 11\%$ & 0.2\%$ & 5\%$ \\
" &  " & $\err$ & " & " & " & " & " & " & " & " & 6\% & 8\% & 3\% & 8\% & 1\% & 7\% & 2\% & 4\% \\
 \rule{0pt}{4ex}

 crp &  100 & $\err_{XY}$ & 1.10 & 1.20 & 1.32 & 9.5\% & 9.7\% & 9.6\% & 10.1\% & 11.1\% & 5\% & 7\% & 3\% & 6\% & 1\% & 8\% & 2\% & 5\% \\
 " &  " & $\err$ & " & " & " & " & " & " & " & " & 7\% & 7\%  & 4\%  & 6\% & 1\% & 8\% & 2\% & 5\% \\

\rule{0pt}{4ex} 
 S Lgstc &  90 & $\err_{XY}$ & 1.54 & 1.27 & 1.44 & 41.3\% & 41.8\% & 40.9\% & 40.9\% & 46.5\% & 17\% & 13\% & 7\% & 7\% & 4\% & 3\% & 19\% & 1\% \\
 %crp &  100 & $\err$ & 1.11 & 1.61 & 9.5\% & 9.7\% & 9.5\% & 10.1\% & 5\%$  & 9\%$  & 4\%$  & 9\%$ & 0\%$ & 3\%$  \\
 " &  " & $\err$ & " & " & " & " & " & " & " & " & 17\% & 14\%  & 7\%  & 7\% & 2\% & 0\% & 11\% & 0 \%\\

\hline
\end{tabular}
\end{center}
\caption{Performance of various methods with the real data sets. ``S. Logistic'' is the sparse logistic model from Section~\ref{subsec:harrell_model}. Other details as in Table~\ref{tab:lowd_d_logistic_summary}.}
\label{tab:real_data_summary_expand}
\end{table}

\end{landscape}
\restoregeometry

\subsection{Number of Repeated Splits}
In our experiments, we found that a large number (e.g., 200) of random splits of nested CV were needed to obtain stable estimates of the standard error. In Figure~\ref{fig:harrell_num_reps}, we show how the estimate of standard error relative to the na\"ive CV estimate (\emph{inflation}) converges as the number of repetitions increases for one example of the logistic regression model in Section~\ref{subsec:harrell_model}.

%In this section, we give a heuristic suggestion of how to choose the stopping point when carrying out repeated splits.
% Let $\widehat{\infl} \in \R^{m}$ be the set of estimates of inflation after each of $m$ total iterations. As an estimate of the scale of the fluctuations in the estimate, we can consider the spread of the most recent half of the estimates:
% $$
% s_i := \max(\widehat{\infl}[i/2, i]) - \min(\widehat{\infl}[i/2, i]).
% $$
% We can stop when this falls below some tolerance level, such as $0.1$.

\begin{figure}[!h]
\begin{center}
\includegraphics[height = 2.5in]{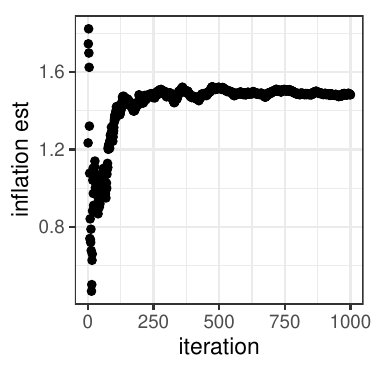}
\end{center}
\caption{The nested CV inflation estimate after a large number of repeated splits.}
\label{fig:harrell_num_reps}
\end{figure}

\subsection{The Austern and Zhou estimate of Standard Error}
\label{subapp:az_se_estimator}
Next, we investigate the standard error estimator of~\citet{Austern}, which we will call the \emph{AZ estimator}. Under certain stability conditions, that work proves a triangular array central limit theorem for the CV point estimate around $\err$, and the AZ estimator is shown to be asymptotically correct. We investigate this estimator in the OLS setup from Section~\ref{subsec:lowd_lin_exp}. In particular, we compare the average size of the confidence intervals based on the AZ estimate of standard error to those from na\"ive CV, reporting the results in Figure~\ref{fig:ols_infl_az}. Comparing the $p = 20$ curve (teal) to that from nested CV in Figure~\ref{fig:ols_infl}, we see that the AZ estimator produces significantly larger intervals than nested CV---they are a factor of two larger for $n=200$. (Nested CV has correct coverage in this experiment, see Figure~\ref{fig:ols_covg_adj}.) Still, the AZ estimator is not too conservative once $n = 800$ for $p = 20$ or $n = 200$ when $p = 5$, in line with the asymptotic results proved in~\citet{Austern}.

\begin{figure}
    \centering
    \includegraphics[height = 2.75in]{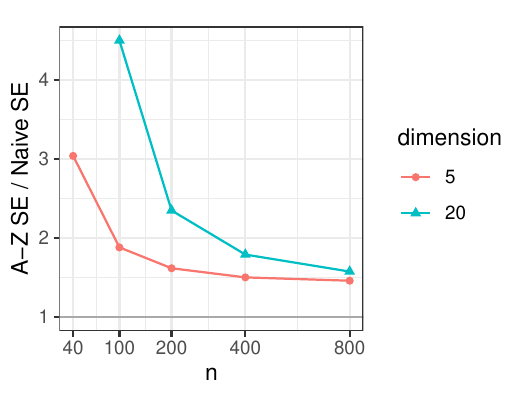}
    \caption{The average size of the confidence intervals of the AZ estimator in the linear model.}
    \label{fig:ols_infl_az}
\end{figure}

\subsection{The effect of regularization}
\label{app:regularization_cor}
In this simulation, we check the results of Section~\ref{sec:estimand_of_cv} in the presence of regularization. In particular, we return to the linear model simulation setup of Section~\ref{subsec:lowd_lin_exp} and check the correlation between $\errhat$ and $\err_{XY}$ with the ridge regression estimator using varying regularization strength. Section~\ref{sec:estimand_of_cv} shows that this correlation is nearly zero with no regularization, so we are interested in how this changes for ridge regression. We report on the results in Figure~\ref{fig:highd_linear_cor}. We find that for low levels of regularization, the correlation remains small. For larger values of regularization, the correlation can be as large as 15\% in this example. Our intuition for this behavior is that with regularization, there is less overfitting, so that the CV estimator has fresher data, so to speak, so it is able to partially track $\err_{XY}$. We look forward to future work understanding this behavior.

\begin{figure}
    \centering
    \includegraphics[height = 2.75in]{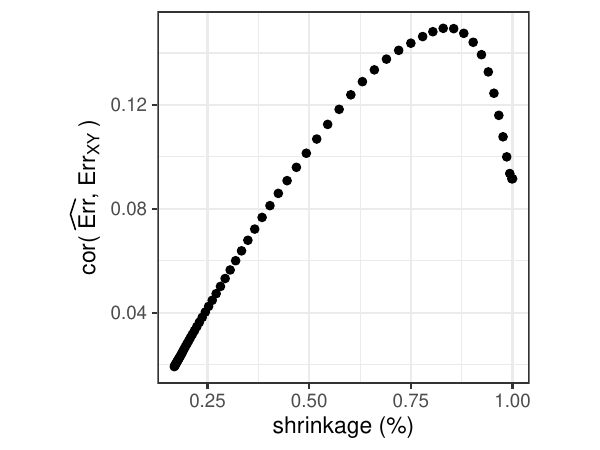}
    \caption{The correlation of $\errhat$ and $\err_{XY}$ with ridge regression. The regularization strength varies from small (left) to large (right). To report on the level of regularization in an interpretable way, the x-axis shows the fraction of reduction in norm of the fitted coefficient vector of the ridge regression fit compared to the unregularized fit.}
    \label{fig:highd_linear_cor}
\end{figure}

\section{Variance stabilizing transformation for 0-1 loss}
\label{app:variance_stabilizing_transform}
For classification settings with 0-1 loss, we can get better confidence intervals using a variance stabilizing transformation \cite[e.g.,][]{YU2009vst}. If we observe an empirical miscoverage level of $\bar{e} \in [0,1]$ from $n$ samples, we form a confidence interval for $\text{sin}^{-1}(\sqrt{\err})$ as
\begin{equation*}
    \left(\text{sin}^{-1}(\sqrt{\bar{e}}) - z_{1-\alpha/2} \cdot  \sqrt{\frac{1}{4n}}, \quad \text{sin}^{-1}(\sqrt{\bar{e}}) + z_{1-\alpha/2} \cdot \sqrt{\frac{1}{4n}} \right),
\end{equation*}
and invert the transformation to get a confidence interval for $\err$ (and analogously for $\err_{XY}$ or other targets). With nested CV, we inflate the intervals by looking at the ration between na\"ive CV and nested CV:
\begin{equation*}
    \left(\text{sin}^{-1}\left(\sqrt{\errhat^\ncv}\right) - z_{1-\alpha/2} \cdot \frac{\sqrt{\msehat}}{\sehat} \cdot \sqrt{\frac{1}{4n}}, \quad \text{sin}^{-1}\left(\sqrt{\errhat^\ncv}\right) + z_{1-\alpha/2} \cdot \frac{
    \sqrt{\msehat}}{\sehat} \cdot \sqrt{\frac{1}{4n}} \right).
\end{equation*}
(I.e., the transformed intervals are inflated by the amount that was estimated on the original scale.) Recall that $\sqrt{\msehat}$ is the nested CV estimated of the width of the confidence interval, whereas $\sehat$ is the na\"ive CV estimate of the width of the confidence interval.

\section{A low-dimensional asymptotic analysis}
\label{app:lowd_asymptotics}
In this section, we present a complementary asymptotic analysis to that of Section~\ref{sec:estimand_of_cv} in a more traditional asymptotic regime where $p$ is fixed and $n \to \infty$. Here, we observed similar a similar behavior; $\err_X$ is close toe $\err$, so we expect CV to have better accuracy for $\err$ than for $\err_{XY}$. In this case, however, the behavior is all of order higher than $1 / \sqrt{n}$. This means that for sufficiently large $n$, the difference between the various estimands is negligible compared to the variance of the CV estimator. Moreover, here $\err$ and $\err_{XY}$ approach the Bayes error at rate $1/n$, so asymptotically one has equal accuracy for estimating $\sigma^2, \err$ or $\err_{XY}$; see \citet{Wager2019} for a related discussion. Thus,  in this section, the phenomena is only observable in higher-order asymptotics terms. See Figure~\ref{fig:lowd_asymptotics_viz} for a visualization of the various rates.

\begin{figure}[!t]
    \centering
    \includegraphics[page = 2, width = 4in, trim = 0 32cm 38cm 0, clip]{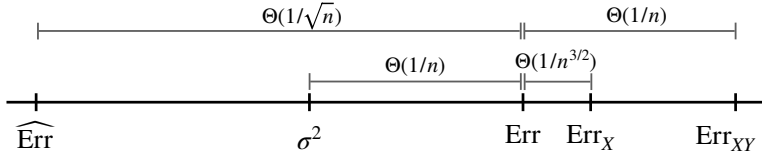}
    \caption{The relationship among various notions of prediction error in fixed $p$, $n \to \infty$ asymptotic limit. Recall that $\sigma^2$ is the Bayes error: the error rate of the best possible model}
    \label{fig:lowd_asymptotics_viz}
\end{figure}

\begin{theorem}
Suppose the homoskedastic Gaussian linear model in \eqref{eq:linear_model_assumption} holds and that we use squared-error loss. In addition, assume that feature vectors $X_i \sim \mathcal{N}(0, \Sigma_p)$ for any full-rank $\Sigma_p$. Then, as $n \to \infty$ with $p$ fixed, we have
\begin{equation*}
    \E_{X}\left[\var(\err_{XY} \mid X)\right] = \Theta(1 / n^2)
\end{equation*}
and
\begin{equation*}
    \var(\err_X) = \E(\err_X - \err)^2 = \Theta(1 / n^3).
\end{equation*}
\label{thm:err_and_err_x_lowd}
\end{theorem}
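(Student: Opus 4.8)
The plan is to reuse, essentially verbatim, the machinery from the proof of Theorem~\ref{thm:err_and_err_x_higd}, and merely re-evaluate the same exact finite-sample expressions in the new regime where $p$ is held fixed and $n \to \infty$. As in that proof, I would first reduce to $\Sigma = I_p$ without loss of generality, since the OLS residuals---and hence every linearly invariant quantity---are unchanged under a full-rank linear transformation of the features; then $\hat{\Sigma} = X^\top X / n$ is a scaled Wishart and $\Sigma \hat{\Sigma}^{-1} = \hat{\Sigma}^{-1}$ is a scaled inverse-Wishart matrix whose moments are known in closed form. Starting from the identity $\err_{XY} = \sigma^2 + (\sigma^2 / n)\norm{\Sigma^{1/2}\hat{\Sigma}^{-1/2} Z}^2$ with $Z \sim \mathcal{N}(0, I)$ a function of the noise, the conditional variance decomposition reduces both target quantities to moments of $\Sigma \hat{\Sigma}^{-1}$. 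The only thing that changes relative to the proportional-limit proof is the asymptotic order of these moments.

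For the first claim, I would use the exact reduction
\begin{equation*}
\E\left[\var(\err_{XY} \mid X)\right] = \frac{2\sigma^4}{n^2}\,\E\left[\tr\!\left((\Sigma \hat{\Sigma}^{-1})^2\right)\right],
\end{equation*}
together with the van Rosen moment formula already invoked in the proof of Theorem~\ref{thm:err_and_err_x_higd}, which gives $\E[\tr((\Sigma \hat{\Sigma}^{-1})^2)] = (c_1 + c_2)n^2 p + c_2 n^2 p^2$ with $c_1, c_2$ as defined there. Evaluating these constants for fixed $p$ as $n \to \infty$ yields $c_1 = \Theta(1/n^2)$ and $c_2 = \Theta(1/n^3)$, so the bracketed expression equals $p + O(p^2 / n) = \Theta(1)$. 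Intuitively, $\hat{\Sigma} \to \Sigma$ by the law of large numbers, so $\Sigma \hat{\Sigma}^{-1} \to I_p$ and the eigenvalues $\lambda_i^2$ concentrate at $1$, whence $\sum_i \lambda_i^4 \to p$. This gives $\E[\var(\err_{XY} \mid X)] = \Theta(1/n^2)$.

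For the second claim, the same proof reduces $\var(\err_X)$ to $(\sigma^4 / n^2)\,\var(\tr(\Sigma \hat{\Sigma}^{-1}))$, and the variance of the trace is again a sum of $p$ diagonal-entry variances and $p(p-1)$ pairwise covariances, each given in closed form by the inverse-Wishart second moments. Substituting fixed $p$ and $n \to \infty$, the diagonal-variance term is $\Theta(1/n)$ and the covariance term is the lower-order $\Theta(1/n^2)$, so $\var(\tr(\Sigma \hat{\Sigma}^{-1})) = \Theta(1/n)$ and hence $\var(\err_X) = \Theta(1/n^3)$. There is no genuine obstacle here, since the underlying moment formulas are exact and finite-sample; the only care required is bookkeeping of leading-order terms and noting that $n > p + 3$ (needed for the fourth inverse-Wishart moment to exist) holds for all large $n$. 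The one conceptual point worth flagging is that both effects drop one order relative to the proportional regime: there $\E[\sum_i \lambda_i^4]$ grew like $n$ and $\var(\tr(\Sigma\hat{\Sigma}^{-1}))$ was $\Theta(1)$, whereas here they are $\Theta(1)$ and $\Theta(1/n)$ respectively, which is exactly what shifts both rates down by one power of $n$ compared with Theorem~\ref{thm:err_and_err_x_higd}.
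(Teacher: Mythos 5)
Your proposal is correct and follows essentially the same route as the paper, which likewise reduces the claim to the orders of $\E[\lambda_1^4+\dots+\lambda_p^4]$ and $\var(\tr(\Sigma\hat{\Sigma}^{-1}))$ in the fixed-$p$ regime via the proof of Theorem~\ref{thm:err_and_err_x_higd}; your evaluations $\Theta(1)$ and $\Theta(1/n)$ match the paper's. You in fact supply more detail than the paper's one-line argument by explicitly tracking the constants $c_1$ and $c_2$ and the inverse-Wishart covariance terms.
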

The proof is included at the end of this section. This result can readily be extended beyond the case of Gaussian features, but we do not pursue this at this time. 
From this result, we can extract the following
\begin{corollary}
In the setting of Theorem~\ref{thm:err_and_err_x_lowd},
\begin{equation*}
    \cor(\err_{XY}, \err_X) \to 0 \quad \text{as } n \to \infty.
\end{equation*}
Moreover, for any linearly invariant estimator $\errhat$ (such as $\errhat^\cv$ using OLS as the fitting algorithm),
\begin{equation*}
    \cor(\err_{XY}, \errhat) \to 0 \quad \text{as } n \to \infty.
\end{equation*}
\label{cor:cor_errx_errxy_lowd}
\end{corollary}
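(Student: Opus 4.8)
The plan is to mirror the proof of Corollary~\ref{cor:cor_errx_errxy_highd} essentially verbatim, since nothing in that argument invoked the specific proportional regime: both limits are driven entirely by the ratio $\var(\err_X)/\var(\err_{XY})$, and the only input that changes is the pair of asymptotic orders, now supplied by Theorem~\ref{thm:err_and_err_x_lowd} in place of Theorem~\ref{thm:err_and_err_x_higd}. So the substantive content reduces to checking that this ratio still tends to zero in the fixed-$p$, $n\to\infty$ regime.

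For the first claim I would use the definition $\err_X = \E[\err_{XY}\mid X]$ together with the law of total covariance. Since $\err_X$ is $\sigma(X)$-measurable,
\[
\cov(\err_X,\err_{XY}) = \cov\big(\err_X,\E[\err_{XY}\mid X]\big) = \var(\err_X),
\]
so that $\cor(\err_X,\err_{XY}) = \sqrt{\var(\err_X)/\var(\err_{XY})}$. I would then invoke Theorem~\ref{thm:err_and_err_x_lowd}, which gives $\var(\err_X) = \Theta(1/n^3)$ and $\E[\var(\err_{XY}\mid X)] = \Theta(1/n^2)$. Feeding these into the variance decomposition~\eqref{eq:err_xy_var_decomp},
\[
\var(\err_{XY}) = \E[\var(\err_{XY}\mid X)] + \var(\err_X) = \Theta(1/n^2) + \Theta(1/n^3) = \Theta(1/n^2),
\]
so the ratio is $\Theta(1/n)$ and hence $\cor(\err_X,\err_{XY}) = \Theta(1/\sqrt n)\to 0$.

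For the second claim I would follow the identical chain of inequalities as in Corollary~\ref{cor:cor_errx_errxy_highd}. Theorem~\ref{thm:ols_cv_independent_err} gives $\errhat \indep \err_{XY}\mid X$, so applying the law of total covariance while conditioning on $X$ annihilates the conditional covariance term and leaves $\cov(\err_{XY},\errhat) = \cov(\err_X,\E[\errhat\mid X])$. Cauchy--Schwarz bounds this by $\sqrt{\var(\err_X)\,\var(\E[\errhat\mid X])}$, and since $\var(\E[\errhat\mid X])\le\var(\errhat)$ the factor involving $\errhat$ cancels against the denominator of the correlation, yielding $|\cor(\err_{XY},\errhat)| \le \sqrt{\var(\err_X)/\var(\err_{XY})}\to 0$ by the computation above. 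I would note that Theorem~\ref{thm:ols_cv_independent_err} is a deterministic-plus-Gaussian statement with no reference to any asymptotic regime, so it transfers unchanged.

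There is essentially no obstacle internal to the corollary: all the difficulty has been front-loaded into Theorem~\ref{thm:err_and_err_x_lowd}, whose proof (the inverse-Wishart moment computations in the fixed-$p$ limit) is where the real work lies. The only point requiring a moment's care is confirming that the $\E[\var(\err_{XY}\mid X)]$ term dominates $\var(\err_X)$ in the decomposition of $\var(\err_{XY})$ — immediate here because $\Theta(1/n^2)$ dominates $\Theta(1/n^3)$ — so that the denominator $\var(\err_{XY})$ is of order exactly $1/n^2$ and the ratio is genuinely $\Theta(1/n)$ rather than something that fails to vanish.
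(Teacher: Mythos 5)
Your proposal is correct and matches the paper's own proof, which is stated in one line as ``as in the proof of Corollary~\ref{cor:cor_errx_errxy_highd}, applying Theorem~\ref{thm:err_and_err_x_lowd} in place of Theorem~\ref{thm:err_and_err_x_higd}''; your reduction of both limits to the ratio $\var(\err_X)/\var(\err_{XY}) = \Theta(1/n)$ and your use of conditional independence plus Cauchy--Schwarz for the second claim are exactly the steps in the paper's proof of the high-dimensional analogue.
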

Which means that CV not tracking $\err_{XY}$.
The proof of this is as in the proof of Corollary~\ref{cor:cor_errx_errxy_highd}, applying Theorem~\ref{thm:err_and_err_x_lowd} in place of Theorem~\ref{thm:err_and_err_x_higd}.

\begin{proof}[Proof of Theorem~\ref{thm:err_and_err_x_lowd}]
The proof follows as in the proof of Theorem~\ref{thm:err_and_err_x_higd}, noting that in this case
\begin{equation*}
\E_X \left[\lambda_1^4 + \dots + \lambda_p^4 \right] = \Theta(1)
\end{equation*}
and 
\begin{equation*}
\var\left(\tr (\Sigma \hat{\Sigma}^{-1})\right) = \Theta(1/n).
\end{equation*}

\end{proof}

\section{Connection with results about the k-fold test error}
\label{app:kfold_test_error}

Another estimand considered in the cross-validation literature is the \emph{k-fold test error}: the average accuracy of the $k$ submodels fit during cross-validation~\cite[e.g.,][]{bayle2020}. Formally, this quantity is defined as 
\begin{equation*}
    \err_{\textnormal{k-fold}}= \frac{1}{K} \sum_{k = 1}^K \E \left[ \ell(\hat{f}(X_{n+1}, \hat{\theta}^{(-k)}), Y_{n+1}) \mid (X,Y) \right],
\end{equation*}
where the expectation is over a fresh test point $(X_{n+1}, Y_{n+1}) \sim P$ with the training data held fixed. This quantity has earned recognition because of its technical properties. For example,~\citet{bayle2020} prove a far-reaching central limit theorem for $(\errhat^\cv -  \err_{\textnormal{k-fold}})$. (Note that the expected value of this quantity is zero. One might say that $\errhat^\cv$ is unbiased for $ \err_{\textnormal{k-fold}}$, with the tacit understanding that the latter quantity is random.)

How does $\err_{\textnormal{k-fold}}$ relate to $\err$ and $\err_{XY}$? We offer some initial observations here. First, we carry out a numerical experiment, returning to the OLS setting of Figure~\ref{fig:ols_target}. Here, we find that cross-validation is a better estimate of $\err$ than of $\err_{\textnormal{k-fold}}$, and a worse estimate of $\err_{\textnormal{k-fold}}$ than of $\err_{XY}$. See Figure~\ref{fig:ols_target_kfold}. In this setting, the correlation between 
$\errhat^\cv$ and $\err_{XY}$ is less than $1\%$, whereas the correlation between $\errhat^\cv$ and $\err_{\textnormal{k-fold}}$ is $8.5\%$. That is, the CV point estimate is tracking the variation in $\err_{\textnormal{k-fold}}$ in a non-negligible way; see Figure~\ref{fig:ols_cor_kfold}. In view of the definition of the k-fold test error, we expect that this correlation is nontrivial asymptotically, but we set aside a full investigation of this asymptotic correlation to future work.

\begin{figure}
    \centering
    \includegraphics[height = 2.75in]{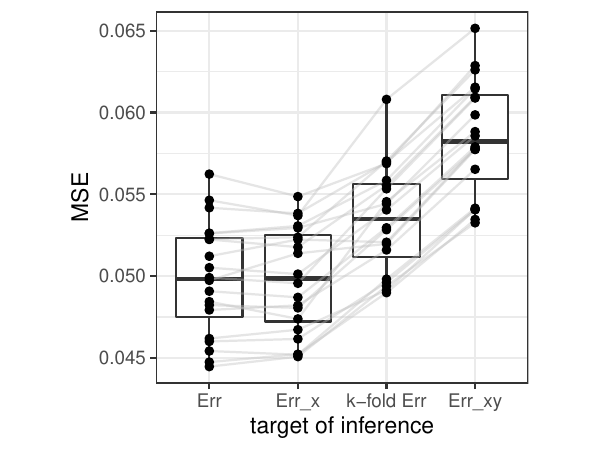}
    \caption{MSE of the cross-validation point estimate for estimating four targets of inference.}
    \label{fig:ols_target_kfold}
\end{figure}

Next, we consider the an estimator for the variance of $(\errhat^\cv -  \err_{\textnormal{k-fold}})$ from~\citet{bayle2020}. The first estimate is the \emph{all-pairs variance estimator},
\begin{equation*}
    \sehat^2_{\textnormal{out}} := \cdot \frac{1}{nK} \sum_{k=1}^K \frac{1}{n/K} \sum_{i \in \Ical_k} \left(\ell(\hat{f}(x_i), \hat{\theta}^{(-k)}, y_i) - \errhat^\cv\right)^2 = \frac{n-1}{n} \sehat^2. 
\end{equation*}
Notice this results in (very slightly) \emph{smaller} confidence intervals than the na\"ive intervals we consider in this work. Thus, the performance of the so-called na\"ive estimator in the simulations presented in this work is essentially the coverage of the CIs created using the central limit theorem of~\cite{bayle2020} together with this variance estimator (the coverage of the na\"ive estimator in this work is slightly larger). Since these CLT intervals are shown to be asymptotically exact for the quantity $\err_{\textnormal{k-fold}}$, the poor coverage that we observe in our simulations is due either to (i) the limiting behavior derived in \citet{bayle2020} is not a good approximation at the modest sample sizes we consider (ii) there is different precision when estimating $\err_{XY}$ compared to $\err_{\textnormal{k-fold}}$. Regarding the first point, that work introduced algorithmic stability conditions that may not be satisfied in the high-dimensional or low sample size settings we consider.

\begin{figure}[!h]
\captionsetup[subfigure]{justification=centering}
\centering
\begin{subfigure}[b]{.49\textwidth}
\centering
\includegraphics[height = 2.5in, trim = 0 0 0 0, clip]{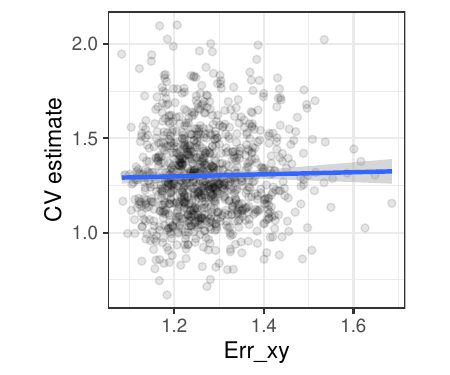}
%\subcaption{$n=90, \rho = 0$}
\end{subfigure}
%\hspace{.2cm}
\begin{subfigure}[b]{.49\textwidth}
\centering
\includegraphics[height = 2.5in, trim = 0 0 0 0, clip]{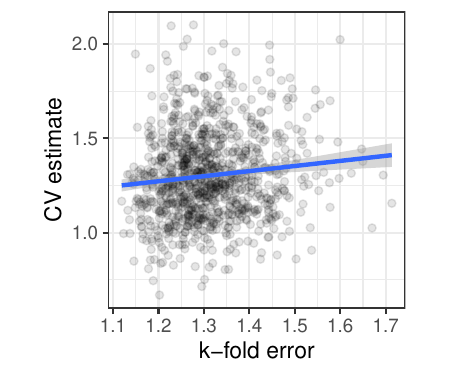}
%\subcaption{$n=200, \rho = 0$}
\end{subfigure}
\caption{Scatter plot showing the correlation of the CV point estimate with $\err_{XY}$ (left) and $\err_{\textnormal{k-fold}}$ (right). The blue line is the best fit regression.}
\label{fig:ols_cor_kfold}
\end{figure}

% As a second estimator for the variance of $(\errhat^\cv -  \err_{\textnormal{k-fold}})$ proposed in~\citet{bayle2020} is the \emph{within-fold variance estimator}; the average of the empirical variances of the losses on the hold-out fold for each of the $K$ models fit during cross-validation. Formally, the estimator is 
% \begin{equation*}
%     \sehat^2_{\text{in}} = \frac{1}{nK} \sum_{k=1}^K \frac{1}{n/K - 1} \sum_{i \in \Ical_k}\left(\ell(\hat{f}(x_i), \hat{\theta}^{(-k)}, y_i) - \left[\frac{1}{n/k}\sum_{i \in \Ical_k} \ell(\hat{f}(x_i), \hat{\theta}^{(-k)}, y_i)\right] \right)^2.
% \end{equation*}

\end{document}